\begin{document}
\thispagestyle{empty}
\title{Optimal Debiased Inference on Privatized Data \\via Indirect Estimation and Parametric Bootstrap}

\author{\name Zhanyu Wang \email zhanyu.wang.purdue@gmail.com\\
       \addr Department of Statistics\\
       Purdue University\\
       West Lafayette, IN 47907, USA
       \AND
       \name Arin Chang \email chan1074@purdue.edu \\
       \addr Department of Statistics\\
       Purdue University\\
       West Lafayette, IN 47907, USA
       \AND
       \name Jordan Awan \email jaa557@pitt.edu \\
       \addr Department of Statistics\\
       University of Pittsburgh\\
       Pittsburgh, PA 15260, USA}

\editor{}

\maketitle

\begin{abstract}
We design a debiased parametric bootstrap framework for statistical inference from differentially private data. Existing usage of the parametric bootstrap on privatized data ignored or avoided handling possible biases introduced by the privacy mechanism, such as by clamping, a technique employed by the majority of privacy mechanisms. Ignoring these biases leads to under-coverage of confidence intervals and miscalibrated type I errors of hypothesis tests, due to the inconsistency of  parameter estimates based on the privatized data. We propose using the indirect inference method to estimate the parameter values consistently, and we use the improved estimator in parametric bootstrap for inference. To implement the indirect estimator, we present a novel simulation-based, adaptive approach along with the theory that establishes the consistency of the corresponding parametric bootstrap estimates, confidence intervals, and hypothesis tests. In particular, we prove that our adaptive indirect estimator achieves the minimum asymptotic variance among all ``well-behaved'' consistent estimators based on the released summary statistic. Our simulation studies show that our framework produces confidence intervals with well-calibrated coverage and performs hypothesis testing with the correct type I error, giving state-of-the-art performance for inference in several settings.
\end{abstract}

\begin{keywords}
  differential privacy, confidence intervals, hypothesis tests, simulation-based inference, asymptotic statistics
\end{keywords}

\section{Introduction}
\label{sec:intro}
In the age of big data, utilizing diverse data sources to train models offers significant benefits but also leaves data providers vulnerable to malicious attacks. 
To mitigate privacy concerns, \citet{dwork2006calibrating} introduced the concept of differential privacy (DP), which quantifies the level of privacy assurance offered by a data processing procedure. 
Following the advent of DP, numerous mechanisms have been developed to provide DP-guaranteed point estimates for parameters \citep{dwork2014algorithmic}. These algorithms inject additional randomness into the output to manage the tradeoff between its utility and the privacy guarantee. In addition to providing an accurate point estimator for a population parameter of interest, a central task in statistical analysis is frequentist inference—quantifying the uncertainty of a population parameter estimate derived from data via hypothesis tests (HT) or confidence intervals (CI). When applied under DP constraints, this task becomes notably challenging due to the randomness and biases introduced by privacy mechanisms.

This paper continues a line of work, which assumes that the DP mechanism and statistic are decided in advance, possibly by a separate entity; all inference must be done after-the-fact, based on knowldege of the privacy mechanism and a model for the underlying data. 
One of the most popular approaches of performing such inference for population parameters using privatized data is the parametric bootstrap (PB). \citet{du2020differentially} proposed multiple methods to construct DP CIs that differ in parameter estimation techniques, and all use PB to derive CIs through simulation. \citet{ferrando2022parametric} were the first to theoretically analyze the use of PB with DP guarantees. They validated the consistency of their CIs in two private estimation settings: exponential families and linear regression via sufficient statistic perturbation. \citet{alabi2022hypothesis} leveraged PB to conduct DP hypothesis testing specifically for linear regression. 
However, existing PB methods  typically do not account for biases introduced in the DP mechanism, such as by clamping, which can result in inaccurate inferences. As a result, they frequently produce biased estimators, leading to undercoverage of confidence intervals and miscalibrated type I error rates \citep{awan2025simulation}.

In this work, we address these challenges by proposing a novel adaptive indirect estimator which optimally post-processes DP summary statistics to produce a consistent estimate that achieves the minimum asymptotic variance among all ``well-behaved'' consistent estimators (defined in Definition \ref{def:wellbehaved}), which are a function of the released summary statistics. Our estimator is based on the method of indirect inference \citep{gourieroux1993indirect}, essentially reversing the generative process of the DP mechanism and estimating the parameter that would have most likely produced the observed privatized output. 
Based on our estimator, we develop a debiased parametric bootstrap framework to perform valid frequentist inference. We demonstrate that our approach produces well-calibrated CIs and valid HTs in several finite-sample simulation settings.

\subsection{Our contributions}
We propose a novel framework for frequentist statistical inference under differential privacy by optimally post-processing the DP-released statistic to obtain a consistent estimator of the parameter of interest. This estimator is used in a debiased inference pipeline based on indirect estimation and parametric bootstrap. Our main contributions are as follows:
\begin{itemize}
    \item We identify that the bias of existing private inference procedures primarily stem from directly using the DP output as a plug-in estimator. We address this by introducing an adaptive indirect estimator that leverages knowledge of the data-generating process and the DP mechanism. We prove that this estimator is consistent and achieves the minimum asymptotic variance among well-behaved consistent estimators which are based on the same DP summaries.
   \item We generalize the asymptotic analysis of the indirect inference method \citep{gourieroux1993indirect} to account for potential non-asymptotic normality, due to the 
   %
   DP noises, 
   and use PB to approximate the sampling distribution of the indirect estimator for better finite-sample performance. 
   \item  To address possible model mis-specification and to accommodate models that may not satisfy our assumptions, we also develop theory to justify when surrogate models can be used while maintaining the same asymptotic results.
    \item Through numerical simulations on statistical inference with private confidence sets and HTs, in the settings of location-scale normal, simple linear regression, logistic regression, and a naive Bayes classifier, we improve over state-of-the-art methods in terms of both validity and efficiency. 
\end{itemize}

While our methdology is motivated by privatized statistics, our framework can be applied in any setting where a low-dimensional summary statistic is available. Thus, our results may be of independent interest for other statistical problems.

\subsection{Related work}
\citet{jiang2004indirect} compared the indirect estimator to the generalized method of moments estimator \citep{hansen1982large}, and to other approaches using auxiliary or mis-specified models. 
\citet{gourieroux2010indirect} showed that the indirect estimator had finite sample properties superior to the generalized method of moment and the bias-corrected maximum likelihood estimator; 
\citet{kosmidis2014bias} compared indirect inference with other bias correction methods;
and \citet{guerrier2019simulation} gave a comprehensive review of the indirect inference method for bias correction, followed by \citet{guerrier2020asymptotically} and \citet{zhang2022flexible}.

Our work is also related to other simulation-based methods for DP statistical inference. 
\citet{wang2018statistical} used simulation to incorporate the classic central limit theorem and the DP mechanism to approximate the sampling distribution on privatized data. Using the sample-aggregate framework, \citet{evans2020statistically} privately estimated the proportion of their estimates affected by clamping for bias correction, and \citet{covington2025unbiased} built their DP algorithm to mitigate the effect of clamping on the sampling distribution. However, the methods by \citet{evans2020statistically} and \citet{covington2025unbiased} are restricted to specific privacy mechanisms. 
\citet{awan2025simulation} developed a simulation-based inference framework based on repro samples \citep{xie2022repro} to perform finite-sample inference on privatized data, which is closely related to our adaptive indirect estimator, but suffers from overly conservative inference and high computational cost.

For the problems with non-smooth data generating equations such as with discrete random variables, \citet{bruins2018generalized} approximated the non-smooth function using a smooth approximation, parameterized by $\lambda_n$, which converged to the original non-smooth equation when $\lambda_n\rightarrow 0$ as the sample size $n\rightarrow\infty$. 
\citet{ackerberg2009new} and \citet{sauer2021understanding} combined importance sampling with indirect inference when the likelihood function of the discrete data was smooth with respect to the parameter, which transformed the discrete optimization problem into a continuous optimization. \citet{frazier2019indirect} used a local change of variables technique to produce smooth approximations to derivatives, which may not be well-behaved in non-smooth settings. We use similar techniques as these papers in Section \ref{sec:pb:non-smooth} to extend our results to non-smooth settings as well.

 Of these related works, \citet{awan2025simulation} is most similar to ours. While \citet{awan2025simulation} offers conservative type I error guarantees, they do not have any theoretical results to support the power of their method. In contrast, we show that our method has asymptotically optimal power, and its superior performance is demonstrated in our numerical experiments.

\section{Background and motivation}
\label{sec:bg:dp}
In this section, we first illustrate the PB method and discuss the consistency of PB estimators, CIs, and HTs.
Then, we review the background of DP, including the definitions
and mechanisms used in this paper. Finally, we identify the problem of using biased estimates in PB for private inference through an HT example in the literature.

\subsection{Parametric bootstrap}\label{sec:pb:background_pb}
Let the sample dataset be $D:=(x_1,x_2,\ldots,x_n)$ with sample size $n$ and $x_i\iid F$, $i=1,\ldots,n$.

Bootstrap methods estimate the sampling distribution of a test statistic $\hat\tau(D)$ by the empirical distribution of the test statistic $\hat\tau(D_b)$, $b=1,\ldots, B$, calculated on synthetic data $D_b$ with data points sampled from $\hat F$, an estimation of the original data distribution $F$, based on $D$. 
%
Assuming that $F$ is parameterized by $\theta^*$, denoted by $F_{\theta^*}$, 
 PB resamples data by letting $\hat F:=F_{\hat \theta}$ where $\hat \theta$ is an estimate of $\theta^*$. In this paper, we focus on using PB under DP guarantees where we only observe the DP summary statistic $s$ calculated from $D$, but not $D$ itself. 
We formally define PB with $s$ below and visualize the procedure in Figure \ref{fig:diagramPB}.
\begin{definition}
    
    Let $s\in \Omega \subseteq \mathbb{R}^p$ be a random variable summarizing the observed dataset where $s\sim \mathbb{P}_{\theta^*}$, i.e., $s$ is defined on the probability space $(\Omega, \mathcal{F}, \mathbb{P}_{\theta^*})$ and following the distribution $\mathbb{P}_{\theta^*}: \mathcal{F} \rightarrow [0,1]$ parameterized by the true unknown parameter $\theta^*\in\Theta \subseteq \mathbb{R}^q$. For $d\leq q$, let $\tau:\Theta\rightarrow\mathbb{R}^d$ and let $\tau^*:=\tau(\theta^*)$ be the parameter of interest. Let
    $\hat\theta:\Omega\rightarrow\Theta$ and $\hat\tau:\Omega\rightarrow\mathbb{R}^d$ be measurable functions, and $\hat\theta(s)$ and $\hat\tau(s)$ are estimators of $\theta^*$ and $\tau^*$ respectively. The {PB estimator} of $\tau^*$ is defined by $\hat\tau(s_b)$ where $s_b | s \iid \mathbb{P}_{\hat\theta(s)}(\cdot | s)$, $b=1,\ldots,B$. We let $\mathbb{P}_{\hat\theta(s)}(\cdot | s)$ denote the Markov kernel for the law of $s_b$ given $s$, which is parameterized by $\hat\theta(s)$ in the same way as $\mathbb{P}_{\theta^*}$ parameterized by $\theta^*$. 
\end{definition}

\begin{figure}[t]
    \centering
    \includegraphics[width=0.45\textwidth, trim={0 0 280 0}, clip]{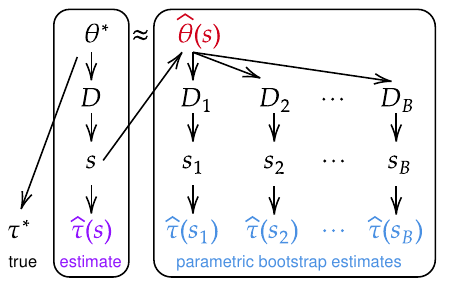}
    \caption{The illustration of parametric bootstrap. The parameter for data generation is $\theta^*$, and the parameter of interest is $\tau^*$. The left part of the figure denotes the point estimate $\hat\tau$, and the right part is the PB procedure. For non-private settings, we observe the dataset $D$, while under DP guarantees, we only observe the released statistic $s$.}
    \label{fig:diagramPB}
\end{figure}
    
The bootstrap method uses the empirical cumulative distribution function (CDF) of $\{\hat\tau(s_b)\}_{b=1}^B$ to approximate the sampling distribution of $\hat\tau(s)$ and construct confidence sets or perform HTs for $\tau^*$. In the remainder of this section, we characterize the consistency of PB estimators, confidence sets, and HTs.

\begin{definition}[PB consistency]\label{def:boot_consist}
    Let $s\sim \mathbb{P}_{\theta^*}$, $s_b\sim \mathbb{P}_{\hat\theta(s)}(\cdot | s)$, and $\hat\Sigma:\Omega \rightarrow \mathbb{R}^{d\times d}$ is measurable and non-singular almost surely.  
    Assume the PB estimator $\hat\tau(s_b)$ has the following asymptotic property:  $\|\hat\Sigma(s)^{-\frac{1}{2}}\l(\hat\tau(s) - \tau(\theta^*)\r)\|_p \xrightarrow[]{d} T_{\theta^*}$ where the CDF $F_{T_{\theta^*}}$ of $T_{\theta^*}$ is continuous, and $1\leq p\leq \infty$. Then $\hat\tau(s_b)$ is \emph{consistent} if 
    $$\mathbb{P}_{\hat\theta(s)}\l(\l\|\hat\Sigma(s_b)^{-\frac{1}{2}}\l(\hat\tau(s_b) - \tau(\hat\theta(s))\r)\r\|_p  \leq t ~\Big|~ s\r) \xrightarrow[]{P} F_{T_{\theta^*}}(t)$$ 
    for all $t\in \mathbb{R}$ with convergence in $\mathbb{P}_{\theta^*}$-probability.
\end{definition}
\begin{remark}
    Definition \ref{def:boot_consist} is a multivariate generalization of the bootstrap estimator consistency definition in Section 23.2 by \citet{van2000asymptotic}. The choice of $p$ can be tailored to the shape of the desired confidence sets and hypothesis tests; naturally we obtain hyperellipsoids for $p=2$, and hyperrectangles for $p=\infty$. Setting $\hat\Sigma(s)$ as an estimator of the covariance matrix of $\hat\tau(s)$ makes $\hat\Sigma(s)^{-\frac{1}{2}}\l(\hat\tau(s) - \tau(\theta^*)\r)$ an approximate pivot and improves performance, e.g.,  Figure \ref{fig:HT_comparison} shows the advantage of using the approximate pivot for HTs.
\end{remark}

    


\begin{definition}[Asymptotic consistency of confidence sets; \citet{van2000asymptotic}]\label{def:CI_consist}
    Let $s_n$ be a summary statistic of a dataset with sample size $n$ and $s_n\sim \mathbb{P}_{n,\theta^*}$.
    The confidence set $\hat C_n(s_n)$ for $\tau^*:=\tau(\theta^*)$ are (conservatively) \emph{asymptotically consistent} at level $1-\alpha$ if, for every $\theta^*\in\Theta$,  
    $\liminf_{n\rightarrow\infty} \mathbb{P}_{n,\theta^*}\big(\tau^* \in \hat C_n(s_n)\big) \geq 1-\alpha.$
\end{definition}

\begin{definition}[Asymptotic consistency of HTs; \citet{van2000asymptotic}]\label{def:HT_consist}
    For $s_n$ distributed as $\mathbb{P}_{n,\theta^*}$, the power function of the test between $H_0:\theta^*\in\Theta_0$ and $H_a:\theta^*\in\Theta_1$ is $\pi_n(\theta^*):=\mathbb{P}_{n,\theta^*}(T_n(s_n) \in K_n(s_n))$, i.e., we reject $H_0$ if the test statistic $T_n(s_n)$ falls into a critical region $K_n(s_n)$. A sequence of tests is called asymptotically consistent if for each $\alpha\in(0,1)$, it has a subsequence of tests with 
    $\limsup_{n\rightarrow\infty}\pi_n(\theta^*) \leq \alpha ~\mathrm{for~all}~ \theta^*\in\Theta_0$ and $\lim_{n\rightarrow \infty}\pi_n(\theta_1) = 1$ for all $\theta_1\in\Theta_1$. 
\end{definition}

With the consistent PB estimators, it is straightforward to build confidence sets and HTs that are asymptotically consistent. 
 In Lemma \ref{lem:CI_HT_consist}, for PB confidence sets, we generalize the existing result \citep[Lemma 23.3]{van2000asymptotic} on PB CIs to the multivariate scenario, and for PB HTs, we formally prove it as well, since we could not find a prior reference.

\begin{lemma}[Asymptotic consistency of PB inference]\label{lem:CI_HT_consist}
    Suppose that for every $\theta^*\in\Theta$, we have $\|\hat\Sigma(s)^{-\frac{1}{2}}\l(\hat\tau(s) - \tau(\theta^*)\r)\|_p \xrightarrow[]{d} T_{\theta^*}$ for a random variable $T_{\theta^*}$ with a continuous CDF and the PB estimator $\hat\tau(s_b)$ is consistent where $s_b\sim \mathbb{P}_{\hat\theta(s)}(\cdot | s)$. Let $\hat\xi_{1-\alpha}(s)$ be the $(1-\alpha)$-quantile of the distribution of $\l(\l\|\hat\Sigma(s_b)^{-\frac{1}{2}}\l(\hat\tau(s_b) - \tau(\hat\theta(s))\r)\r\|_p ~\Big|~ s\r)$.
    \begin{enumerate}
        \item The PB confidence sets $\l\{\hat\tau(s) - \hat\Sigma(s)^{\frac{1}{2}}\xi ~\Big|~ \|\xi\|_p \leq  \hat\xi_{1-\alpha}(s)\r\}$ are asymptotically consistent at level $1-\alpha$.  
        \item If $\hat\Sigma(s) \xrightarrow{P} 0$ and $\inf_{\theta^*\in\Theta_0}\|\tau(\theta^*) - \tau(\theta_1)\|_p > 0$ for all $\theta_1\in\Theta_1$, the sequence of PB HTs with test statistic $T_n(s):=\inf_{\theta^*\in\Theta_0}\l\|\hat\Sigma(s)^{-\frac{1}{2}}\l(\hat\tau(s) - \tau(\theta^*)\r)\r\|_p$ and critical region $K_n(s):=(\hat\xi_{1-\alpha}(s), \infty)$ for any level $\alpha$ is asymptotically consistent. 
    \end{enumerate}
\end{lemma}


From Lemma \ref{lem:CI_HT_consist}, the consistency of PB estimators $\hat\tau(s_b)$ is essential for the validity of the confidence sets and HTs based on PB.
\citet{beran1997diagnosing} and \citet{ferrando2022parametric} showed that the asymptotic equivariance of $\hat\tau$ guarantees the consistency of $\hat\tau(s_b)$ if $\hat\theta(s) - \theta^* = O_p\left(1/\sqrt{n}\right)$ and $\hat\Sigma(s)=\frac 1n I_d$ for all $s$. 
We prove Proposition \ref{prop:parambootconsistency} with a generic choice for $\hat\Sigma$.

\begin{definition}[Asymptotic equivariance; \citet{beran1997diagnosing}]\label{def:AE}
    
    Let $Q_n(\theta^*)$ be the distribution of $\sqrt{n}(\hat\tau(s) - \tau(\theta^*))$ where $s\sim \mathbb{P}_{\theta^*}$. We say that
    $\hat\tau$ is {asymptotically equivariant} if $Q_n\left(\theta^* + h_n/\sqrt{n}\right)$ converges to a limiting distribution $H(\theta^*)$ for all convergent sequences $h_n \rightarrow h$.
\end{definition}

\begin{proposition}[PB consistency]\label{prop:parambootconsistency}
Suppose $\sqrt{n}(\hat\theta(s) - \theta^*) \xrightarrow[]{d} J(\theta^*)$, $\hat\tau$ is asymptotically equivariant with continuous $H(\theta^*)$, and  $n\hat\Sigma(s) \xrightarrow[]{P} \Sigma(\theta^*)$ for all convergent sequences $h_n \rightarrow h$ and $s\sim \mathbb{P}_{\theta^* + h_n/\sqrt{n}}$. Then, the PB estimator $\hat\tau(s_b)$ is consistent.
\end{proposition}

%
Sampling $s\sim \mathbb{P}_{\theta^*}$ can often be achieved by a data generating equation $s:=s(D)$, $D:=G(\theta^*, u)$ where $D$ is the dataset, $G$ is a deterministic function named as the data generating equation, $u$ is the source of uncertainty in $D$ named as the random seed, such that $u \sim \mathbb{P}_u$ where $\mathbb{P}_u$ is known and does not depend on $\theta^*$.  The structure of a data generating equation will be used in the indirect inference simulations to isolate the effect of the parameter from the random seed, $u$.

\begin{example}
    
    Let the dataset be $D:=(x_1,\ldots,x_n)$. If $x_i\iid N(\mu,\sigma^2)$, $i=1,\ldots,n$, we can represent $D$
    as $D:=\mu + \sigma u$ where $u \sim N(0,I_{n\times n})$ and the parameter is $\theta^*:=(\mu, \sigma)$. 
\end{example}

{
\subsection{Differential privacy}
\label{sec:bg:dp}
In this paper, we use $\ep$-DP \citep{dwork2006calibrating} and Gaussian DP (GDP) \citep{dong2022gaussian} in our examples, although our results also apply to other DP notions.

A mechanism is a randomized function $M$ that takes a dataset $D$ of size $n$ as input and outputs a random variable or vector $s$. The Hamming distance between two datasets with the same sample sizes is $d(D, D')$, the number of entries in which $D$ and $D'$ differ. 

\begin{definition}[\citealp{dwork2006calibrating}]\label{def:ep}
    For $\ep>0$, $M$ satisfies $\ep$-DP if for any $D$ and $D'$ such that $d(D,D')\leq 1$, we have $P(M(D)\in S) \leq \exp(\ep) P(M(D')\in S)$ for every measurable set $S$.
\end{definition}
 Note that a DP guarantee is a property of the mechanism, which must hold for all pairs of neighboring datasets. In particular, the mechanism must be specified independent of the input dataset.

\begin{definition}[\citealp{dong2022gaussian}]\label{def:gdp}
    For $\mu>0$, $M$ satisfies $\mu$-GDP if for any $D$ and $D'$ such that $d(D,D')\leq 1$, any hypothesis test between $H_0: s\sim M(D)$ and $H_1: s\sim M(D')$ has a type II error bounded below by $\Phi(\Phi^{-1}(1-\alpha)-\mu)$, where $\alpha$ is the type I error. 
\end{definition}

Definition \ref{def:gdp} means that  $M$ is $\mu$-GDP if for any $d(D,D')\leq 1$, it is harder to distinguish $M(D)$ from $M(D')$ than to distinguish a sample drawn from either $N(0,1)$ or $N(\mu,1)$. 


 Common DP mechanisms are the Laplace and Gaussian mechanisms, which add noise scaled to the sensitivity of a statistic. 
    The $\ell_p$-sensitivity of a function $g$ is  $\Delta_p(g):=\sup\limits_{d(D,D')\leq 1}\|g(D)-g(D')\|_p$.  
%
%
%
    The Laplace mechanism on $g$ is defined as $M_{\text{L},g,b}(D):=g(D)+(\xi_1,\ldots,\xi_d)$ where $\xi_i\iid\mathrm{Laplace}(b)$ with probability density function (pdf) $p(x):=\frac{1}{2b}\ee^{-|x|/b}$, $i=1,\ldots,d$. $M_{\text{L},g,b}(D)$ satisfies $\ep$-DP if $b=\Delta_1(g)/\ep$ \citep{dwork2014algorithmic}.
    The Gaussian mechanism on $g$ is $M_{\text{G},g,\sigma}(D):=g(D)+\xi$ where $\xi\sim N(\mu=0,\Sigma=\sigma^2 I_{d\times d})$. $M_{\text{G},g,\sigma}(D)$ satisfies $\mu$-GDP if $\sigma^2={(\Delta_2(g) / \mu)^2}$ \citep{dong2022gaussian}.  It is common to clamp data to a pre-specified range in order to limit the sensitivity: for $x\in \RR$ and $L<U\in \RR$,  we define the clamp function as $[x]_L^U = \max\{\min\{x,U\},L\}$.

\begin{restatable}[Composition; \citet{dwork2010boosting, dong2022gaussian}]{proposition}{propfdpcomposition}\label{prop:fdp_composition}
    Let $M_i:\mathcal{D}\rightarrow\mathcal{Y}_i$ for $i=1,\ldots,k$, and $M:=(M_1,\ldots,M_k):\mathcal{D}\rightarrow(\mathcal{Y}_1\times\ldots\times\mathcal{Y}_k)$. 1.  If $M_i$ satisfies $\ep_i$-DP for $i=1,\ldots,k$, then $M$ satisfies $(\ep_1+\ldots+\ep_k)$-DP. 2. 
  If $M_i$ satisfies $\mu_i$-GDP for $i=1,\ldots,k$, then $M$ satisfies $\sqrt{\mu_1^2+\ldots+\mu_k^2}$-GDP.
\end{restatable}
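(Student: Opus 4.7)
The plan is to prove the two composition bounds separately, each by reducing the $k$-fold composition to the $k=2$ case via induction. Since the $M_i$'s are non-adaptive (they all take $D$ as input and are run with independent randomness), the joint output density factorizes as $p_M(y_1,\ldots,y_k\mid D)=\prod_{i=1}^k p_{M_i}(y_i\mid D)$, and this is what powers both arguments.

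For the $\ep$-DP claim, fix neighboring $D,D'$ with $d(D,D')\leq 1$ and a measurable $S\subseteq\mathcal{Y}_1\times\cdots\times\mathcal{Y}_k$. Pure $\ep$-DP is equivalent to the pointwise Radon--Nikodym bound $p_{M_i}(y_i\mid D)\leq e^{\ep_i}p_{M_i}(y_i\mid D')$ almost everywhere, which one obtains from Definition \ref{def:ep} by a standard measure-theoretic approximation (taking $S$ to be the set where the ratio exceeds $e^{\ep_i}+\delta$ and driving $\delta\downarrow 0$). Multiplying these bounds across $i$ and integrating against $\mathbb{1}_S$ gives
\[
P(M(D)\in S)=\int_S\prod_i p_{M_i}(y_i\mid D)\,dy\leq e^{\sum_i\ep_i}\int_S\prod_i p_{M_i}(y_i\mid D')\,dy=e^{\sum_i\ep_i}P(M(D')\in S),
\]
which is the required inequality. (A minor wrinkle is that the $\mathcal{Y}_i$ need not admit densities; this is handled by working with Radon--Nikodym derivatives of $M_i(D)$ with respect to $M_i(D)+M_i(D')$, but the argument is otherwise unchanged.)

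For the $\mu$-GDP claim, I would translate Definition \ref{def:gdp} into the tradeoff-function language of Dong--Roth--Su: $\mu$-GDP is equivalent to $T(M_i(D),M_i(D'))\geq G_{\mu_i}$ pointwise on $[0,1]$, where $G_\mu(\alpha):=\Phi(\Phi^{-1}(1-\alpha)-\mu)$ and $T(P,Q)$ is the Neyman--Pearson tradeoff function. Two facts then suffice. First, tradeoff functions tensorize under product measures: $T(P_1\times P_2,Q_1\times Q_2)\geq T(P_1,Q_1)\otimes T(P_2,Q_2)$, proved by a Neyman--Pearson argument that any likelihood-ratio test on the product can be dominated by a combination of component-wise tests. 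Second, for Gaussian tradeoff functions, $G_{\mu_1}\otimes G_{\mu_2}=G_{\sqrt{\mu_1^2+\mu_2^2}}$; this is a direct computation using the fact that the optimal test between $N(0,I_2)$ and $N((\mu_1,\mu_2),I_2)$ is based on the linear projection onto $(\mu_1,\mu_2)$, which by rotational invariance reduces to the one-dimensional problem $N(0,1)$ versus $N(\sqrt{\mu_1^2+\mu_2^2},1)$. Combining these two facts, $T(M(D),M(D'))\geq G_{\mu_1}\otimes\cdots\otimes G_{\mu_k}=G_{\sqrt{\sum_i\mu_i^2}}$, and induction on $k$ completes the argument.

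The main obstacle is the GDP tensorization: the inequality $T(P_1\times P_2,Q_1\times Q_2)\geq T(P_1,Q_1)\otimes T(P_2,Q_2)$ is nontrivial, and I would cite Theorem 3.2 of \citet{dong2021gaussian} rather than redo the Neyman--Pearson optimization. Given that tensorization result and the Gaussian identity above, both parts of the proposition reduce to routine induction. The $\ep$-DP half is essentially a one-line density computation; the real content is in the GDP half, where the elegance comes entirely from choosing the right functional-analytic framework (tradeoff functions) in which Gaussian privacy composes additively in variance-like units.
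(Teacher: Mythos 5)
The paper does not prove this proposition; it is stated as a cited result from \citet{dwork2010boosting, dong2021gaussian}, and no proof appears in the supplementary materials. Your reconstruction is correct and matches the standard arguments in those references: the $\ep$-DP half via the pointwise Radon--Nikodym bound and factorization of the product law (non-adaptive composition), and the GDP half via tradeoff-function tensorization plus the Gaussian identity $G_{\mu_1}\otimes G_{\mu_2}=G_{\sqrt{\mu_1^2+\mu_2^2}}$, with the tensorization lemma appropriately deferred to \citet{dong2021gaussian}. Nothing to flag; since the paper simply defers to the literature here, your proof supplies more detail than the paper itself does.
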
 

 The parameters $\ep$ and $\mu$ are referred to as the \emph{privacy (loss) budget}; this is because when using composition, the ``budget'' needs to be split among the different statistics of interest.

 {\bf Security without Obscurity and Invariance to Postprocessing} Two other important properties of DP are: 1) The details of the privacy mechanism can be made fully public without compromising the DP guarantee. This includes clamping parameters, sensitivity calculations, and the distribution of noise in the mechanism. 2) Applying any data-independent procedure to the output of a DP mechanism does not weaken the privacy guarantee. These two properties allow us to incorporate the privacy mechanism and a model for the underlying data to achieve valid statistical inference on the underlying parameters, without changing the DP guarantee.


\subsection{Biased DP estimators and inaccurate inference results}
\label{sec:bg:biased_estimation}
As PB only requires a point estimator, $\hat\theta(D)$, for generating $s_b$ and $\hat\tau(s_b)$, the privacy guarantee for the private statistical inference is the same as the DP point estimator because of the post-processing property \citep{dwork2014algorithmic}. However, DP mechanisms often use clamping to ensure bounded sensitivity, which leads to biased estimators and potentially inaccurate inference results.

In real-world applications, improper analysis of  DP outputs can cause severe problems. For example \citet{fredrikson2014privacy} found that in an application of Warfarin dosing ``...for privacy budgets effective at preventing attacks, patients would be exposed to increased risk of stroke, bleeding events, and mortality.'' Similarly, \citet{kenny2021use} found that ``...the [differentially private method used in the 2020 Decennial Census] systematically undercounts the population in mixed-race and mixed-partisan precincts, yielding unpredictable racial and partisan biases.'' 
These examples highlight the importance of properly accounting for the additional bias and variance introduced by a DP mechanism.

We share a motivating example from the existing literature to demonstrate the inaccurate inference results from the naive use of PB methods with a DP estimator. \citet{alabi2022hypothesis} developed DP HTs for the slope in a simple linear regression using PB, and Figure \ref{fig:clamp_issues} shows the mis-calibrated type I error reaching 67\% under significance level 0.05 of using the DP Monte Carlo tests \citep{alabi2022hypothesis}. 

\begin{figure}
    \centering
    \includegraphics[width=0.75\linewidth]{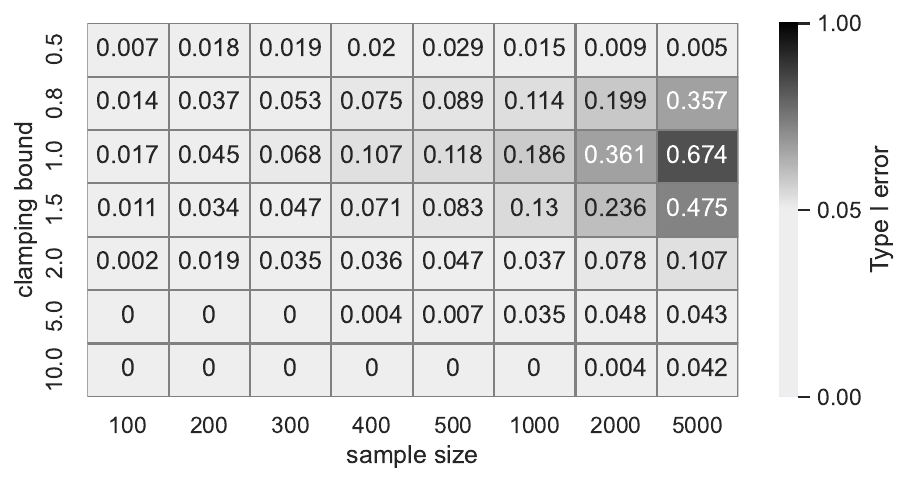}
    \caption{Type I error of private HTs on $H_0: \beta^*_1 = 0$ and $H_1: \beta^*_1 \neq 0$ with a linear regression model $Y = \beta^*_0 + X \beta^*_1 + \epsilon$ using DP PB tests \citep{alabi2022hypothesis} with nominal significance level 0.05. Results are from \citet[Figure 5]{awan2025simulation}. Simulation details are the same as in Section \ref{sec:pb:HT}.}\label{fig:clamp_issues}
\end{figure}

\begin{remark}
    Clamping is not the only reason for the bias in DP outputs. For example, the objective perturbation \citep{chaudhuri2011differentially, kifer2012private} is a DP mechanism for empirical risk minimization, and it uses a \emph{regularized} risk function which also results in a biased estimator \citep{wang2015privacy}. 
\end{remark}

}

\begin{remark}
   There are a few notable works that offer finite-sample DP inference, even involving clamping. For example, \citet{karwa2018finite} developed finite-sample DP confidence intervals for the mean of normally distributed data. Some other works offer asymptotic guarantees, sometimes even using the parametric bootstrap (e.g., \citep{alabi2022hypothesis}). In both cases, these works assume that the clamp is increasing at an appropriate rate so that the bias is asymptotically vanishing. While this can be fine for theoretical works, in practice it can be difficult to determine whether the clamp introduces a significant bias at a fixed sample size. Going forward, we propose a method that can correct for the bias due to clamping, whether or not the clamping bounds change with $n$. 
\end{remark}

\section{Debiased parametric bootstrap via indirect inference}
\label{sec:pb:method}

In this section, we first describe the traditional indirect estimator \citep{gourieroux1993indirect}, which can solve the bias issue in the clamping procedure of the DP mechanisms and give valid PB inference. Then, we propose a novel adaptive indirect inference estimator, that automatically optimizes the covariance matrix of the indirect estimator. Finally, we provide theoretical guarantees for using the (adaptive) indirect estimator in PB, showing that it has optimal asymptotic variance and also gives valid PB inference.

\subsection{Indirect estimators}

The underlying principle of the indirect estimator is to fix the ``random seeds'' for synthetic data generation and find the parameter that generates synthetic statistics most similar to the observed statistic.
We describe the indirect estimator with additional consideration of the DP mechanisms used in releasing the observed statistic. 

Similar to the data generating equation in PB, we assume that the dataset is 
$D:=G(\theta^*, u)$ where $\theta^* \in \Theta \subseteq \mathbb{R}^q$ is an unknown parameter, $u$ is the source of uncertainty following a known distribution $F_u$ that does not depend on $\theta$, and $G$ is a deterministic function. 
Let $s\in \mathbb{B} \subseteq \mathbb{R}^p$ be the released statistic calculated from $D$ by minimizing a criterion $\rho_n$, i.e., 
\[s := \argmin_{\beta\in \mathbb{B}} \rho_n(\beta; D, u_{\mathrm{DP}}),\] 
where $u_{\mathrm{DP}}\sim F_{\mathrm{DP}}$ denotes the source of uncertainty in the DP mechanisms. While $s$ is often an estimator for $\theta^*$, it could also be any set of summary statistics informative for $\theta^*$. 

\begin{remark}
    
    The optimization-form definition of $s$ includes the M-estimator as a special case which does not use $u_{\mathrm{DP}}$. It is useful for DP mechanisms such as objective perturbation \citep{chaudhuri2011differentially}, and it is also compatible with DP mechanisms like $s := \phi_n(D, u_{\mathrm{DP}})$, where $\phi_n$ is a generic function that may depend on $n$, since we can define $\rho_n(\beta; D, u_{\mathrm{DP}}):=( \beta- \phi_n(D, u_{\mathrm{DP}}))^2$.
\end{remark}

We simulate $u^r\iid F_u$ and $u_{\mathrm{DP}}^r\iid F_{\mathrm{DP}}$, $r=1,\ldots,R$, and keep them fixed. For  a candidate value of $\theta$, we generate synthetic datasets $D^r(\theta):=G(\theta, u^r)$ and calculate
\[s^r(\theta) := \argmin_{\beta\in \mathbb{B}} \rho_n(\beta; D^r(\theta), u_{\mathrm{DP}}^r).\]
We denote the sample mean and covariance of $\{s^r(\theta)\}_{r=1}^{R}$ as  $m^{R}(\theta):=\frac{1}{R}\sum_{r=1}^R s^r(\theta)$ and $S^{R}(\theta):=\frac{1}{R-1}\sum_{r=1}^R \l(s^r(\theta) - m^{R}(\theta)\r) \l(s^r(\theta) - m^{R}(\theta)\r)^\intercal$. Define $u^{[R]} := (u^1, \ldots, u^R)$, $u_{\mathrm{DP}}^{[R]} := (u_{\mathrm{DP}}^1,\cdots, u_{\mathrm{DP}}^R)$, $\|a\|^2_{\Omega}:= a^\intercal \Omega a$ for $a\in\mathbb{R}^p$ and $\Omega \in \mathbb{R}^{p\times p}$. 

\begin{definition}[Indirect estimator; \citet{gourieroux1993indirect}]\label{def:indirect_est}
    
    Let $\Omega_n\in \RR^{p\times p}$ be a sequence of positive definite matrices. 
    The indirect estimator is defined as follows,
\begin{equation}\label{eq:theta_ind}
\hat{\theta}_{\mathrm{IND}}(s, u^{[R]}, u_{\mathrm{DP}}^{[R]}) := \argmin_{\theta \in \Theta} \left\|s - m^{R}(\theta)\right\|_{{\Omega}_n}.    
\end{equation}
\end{definition}

The intuition behind indirect inference is as follows: search the parameter space for a parameter such that when simulating from that parameter, the average value of the summaries $\frac 1R \sum_{r=1}^R s^r(\theta)$  is close to the observed summary $s$. Note that indirect inference does not require an explicit estimator, only a summary statistic. See Figure \ref{fig:diagramIND} for an illustration.

\begin{figure}[t]
    \centering
    \includegraphics[width=0.75\linewidth, trim={70 25 140 15}, clip]{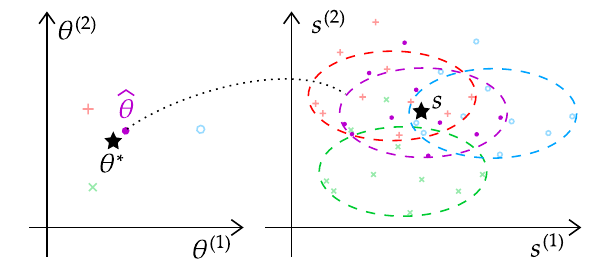}
    \caption{The illustration of the indirect estimator $\hat\theta$. The left subfigure is the space of parameter $\theta\in\mathbb{R}^2$, and the right subfigure is the space of $s\in\mathbb{R}^2$. The true parameter is $\theta^*$, and the observed statistic is $s$, where both are denoted by $\bigstar$. We let $\hat\theta$ be a minimizer of the distance between the observed $s$ and the generated statistics $s^r\iid\mathbb{P}_{\hat\theta}$, $r=1,\ldots, R$.}
    \label{fig:diagramIND}
\end{figure}

In Theorem \ref{thm:indirect_est}, we show that $\hat{\theta}_{\mathrm{IND}}$ is a consistent estimator of $\theta^*$, and the choice of $\Omega_n$ affects the asymptotic variance of $\hat{\theta}_{\mathrm{IND}}$. However, the optimal $\Omega_n$ may depend on $\theta^*$ and require additional effort to find a good estimation. For a novel and computationally efficient approach, we propose an adaptive indirect inference procedure which uses the inverse of the sample covariance matrix of $\{s^r(\theta)\}_{r=1}^R$ as an adaptive $\Omega_n$ and show its asymptotic optimality in the last part of Theorem \ref{thm:ada_indirect_est}.

\begin{definition}[Adaptive indirect estimator]\label{def:ada_indirect_est}
    The adaptive indirect estimator of $\theta^*$ is 
\begin{equation}\label{eq:theta_adi}
\hat{\theta}_{\mathrm{ADI}}(s, u^{[R]}, u_{\mathrm{DP}}^{[R]}) := \argmin_{\theta \in \Theta}\l\|s - m^{R}(\theta)\r\|_{\l(S^{R}(\theta)\r)^{-1}}.    
\end{equation}
\end{definition}

\begin{remark}
     If there exists a value $\theta$ such that $s-\frac{1}{R} \sum_{r=1}^R s^r(\theta) = 0$, then $\hat{\theta}_{\mathrm{IND}}$ and $\hat{\theta}_{\mathrm{ADI}}$ are equal, and are also equivalent to the Just Identified Indirect Inference (JINI) estimator \citep{zhang2022flexible}. When there is no such $\theta$, the choice of covariance matrix will influence the estimator.
    \citet{jiang2004indirect} proposed the adjusted estimator, $\hat\theta := \argmin_{\theta\in\Theta} \|s-b(\theta)\|_{V_n^{-1}}$, where they chose $b(\theta)$ to be the limit of $s^r(\theta)$,  the sequence of $V_n$ as a sample estimate of the covariance of $s$, and show that this choice has certain optimality properties.  On the other hand, our $\hat{\theta}_{\mathrm{ADI}}$ is fully data-driven, without the need for a covariance estimator (which would require expending additional privacy loss budget) and in next section, we show that $\hat{\theta}_{\mathrm{ADI}}$ enjoys optimal asymptotic properties.
    \end{remark}

\subsection{Parametric bootstrap with indirect estimator}
We denote the indirect estimator or the adaptive indirect estimator by $\hat\theta(s)$. For $b=1,\ldots, B$, we sample $u_b\iid F_u$ and $u_{\mathrm{DP},b}\iid F_{\mathrm{DP}}$ to generate $D_b:=G(\hat\theta(s),u_b)$ and calculate $s_b:=\argmin_{\beta\in \mathbb{B}} \rho_n(\beta; D_b, u_{\mathrm{DP},b})$.  For inference on $\tau(\theta^*)\in\mathbb{R}^d$ with consistency guarantees, we choose $\hat\tau:\Omega\rightarrow\mathbb{R}^d$ and $\hat\Sigma:\Omega\rightarrow\mathbb{R}^{d\times d}$ that satisfy the assumptions of Proposition \ref{prop:parambootconsistency}. For better performance, we can set $\hat\Sigma(s)$ as a consistent estimator of the covariance matrix of $\hat\tau(s)$. This studentization step turns $\hat\Sigma(s)^{-1/2}(\hat\tau(s)-\tau(\theta^*))$ into an approximate pivot, which reduces sensitivity to scale/units and nuisance parameters and typically improves finite-sample calibration; see Remark 2.3 for intuition and Figure \ref{fig:HT_comparison} for empirical evidence of the gains. If such an estimator is not readily available, we can use $\hat\Sigma(s):=\frac{1}{n}I_d$ by default.

We assume $(B+1)(1-\alpha) \geq 1$. Following Lemma \ref{lem:CI_HT_consist}, we define $\hat\xi_{(j)}$ as the $j$th order statistic of $\l\{\l\|\hat\Sigma(s_b)^{-\frac{1}{2}}\l(\hat\tau(s_b) - \tau(\hat\theta(s))\r)\r\|_p\r\}_{b=1}^B$. Then the $(1-\alpha)$ PB confidence set is
\[
\hat C_{1-\alpha}(s)=\l\{\hat\tau(s)-\hat\Sigma(s)^{\frac{1}{2}}\xi ~\Big|~ \|\xi\|_p \leq \hat\xi_{(\lfloor (B+1)(1-\alpha)\rfloor)}\r\}.
\]
For HTs, we compute the test statistic $T:=\inf_{\theta^*\in\Theta_0}\l\|\hat\Sigma(s)^{-\frac{1}{2}}\l(\hat\tau(s) - \tau(\theta^*)\r)\r\|_p$ and compare it to the critical region $(\hat\xi_{(\lfloor (B+1)(1-\alpha)\rfloor)},\infty)$.

The usage of PB with the indirect estimator is summarized in the appendix as Algorithm \ref{alg:indCI} and Algorithm \ref{alg:indHT} for building confidence sets and conducting HTs, respectively.


\subsection{Asymptotic properties of the indirect estimators}\label{sec:pb:asymp_theory}
This section  provides Theorem \ref{thm:indirect_est} and Theorem \ref{thm:ada_indirect_est} for the asymptotic properties of the indirect estimator and the adaptive indirect estimator, respectively, including their consistency, asymptotic distribution, and asymptotic equivariance. Theorem \ref{thm:ada_indirect_est} also shows that the adaptive indirect estimator enjoys the optimal asymptotic variance among well-behaved consistent estimators built from a given statistic. Finally, Theorem \ref{thm:indirect_inf_valid_est} and Corollary \ref{cor:indirect_inf_valid_CI_HT} provide the consistency of PB confidence sets and HTs with the (adaptive) indirect estimator.

To prove the consistency of $\hat{\theta}_{\mathrm{IND}}$, we have the following assumptions.
\\
\noindent \textbf{(A1)} {There exists $\rho_\infty(\beta; F_u, F_{\mathrm{DP}}, \theta^*)$ that is non-stochastic and continuous in $\beta$, and\\ $\sup_{\beta\in \mathbb{B}} |\rho_n(\beta; D, u_{\mathrm{DP}}) - \rho_\infty(\beta; F_u, F_{\mathrm{DP}}, \theta^*)| \xrightarrow{\mathrm{P}} 0$.}
\\
\noindent \textbf{(A2)} {
Define $b(\theta) := \argmin_{\beta\in \mathbb{B}} \rho_\infty(\beta; F_u, F_{\mathrm{DP}}, \theta)$, $\beta^* := b(\theta^*)$, $s := \argmin_{\beta\in \mathbb{B}} \rho_n(\beta; D, u_{\mathrm{DP}})$, and $\beta^*$ and $s$ are interior points in $\mathbb{B}$ for every $D$ and $u_{\mathrm{DP}}$. The mapping $b(\cdot)$ is continuous and injective from $\Theta$ to $\mathbb{B}$, i.e., $ \rho_\infty(b(\theta), F_u, F_{\mathrm{DP}}, \theta) < \rho_\infty(\beta; F_u, F_{\mathrm{DP}}, \theta)$ for all $\beta\neq b(\theta)$, and $b(\theta)\neq b(\theta')$ for all $\theta\neq \theta'$. The Jacobian matrix $\frac{\partial b}{\partial \theta^\intercal}$ exists and is full-column rank.}
\\
\noindent \textbf{(A3)} {$\sup_{\theta \in \Theta}\|s^r(\theta) - b(\theta)\| \xrightarrow{\mathrm{P}} 0$,  as $n\rightarrow \infty$}.
\\
\noindent \textbf{(A4)} {$\{\Omega_n\}_{n=1}^\infty$ is a sequence of deterministic positive definite matrices converging to a deterministic positive definite matrix $\Omega$.} 
\\
With additional assumptions \textbf{(A5, A6, A7)}, we can derive the asymptotic distribution of $\sqrt{n}(\hat{\theta}_{\mathrm{IND}} - \theta^*)$ and prove that $\hat{\theta}_{\mathrm{IND}}$ is asymptotically equivariant.
\\
\noindent \textbf{(A5)} {For every $(\theta, u^r, u_{\mathrm{DP}}^r)$, $\frac{\partial s^r(\theta)}{\partial \theta}$ and $\frac{\partial b(\theta)}{\partial\theta}$ exist and are continuous in $\theta$, and 
$\frac{\partial s^r(\theta)}{\partial \theta} \xrightarrow{\mathrm{P}} \frac{\partial b(\theta)}{\partial\theta}$. We denote $B^*\defeq\frac{\partial b(\theta^*)}{\partial \theta}$. For every $(s, u^{[R]}, u_{\mathrm{DP}}^{[R]})$, $\hat{\theta}_{\mathrm{IND}}(s, u^{[R]}, u_{\mathrm{DP}}^{[R]})$ is an interior point in $\Theta$.}
\\
\noindent \textbf{(A6)} {For every $(\beta, D, u_{\mathrm{DP}})$, $\frac{\partial \rho_n(\beta; D, u_{\mathrm{DP}})}{\partial \beta}$ exists and is continuous in $\beta$;
$\sqrt{n} (-\frac{\partial \rho_n(\beta^*; D, u_{\mathrm{DP}})}{\partial \beta}) \xrightarrow{\mathrm{d}} F_{\rho,u,\mathrm{DP}}^*$.}
\\
\noindent \textbf{(A7)} {For every $(\beta, D, u_{\mathrm{DP}})$, $\frac{\partial^2 \rho_n(\beta; D, u_{\mathrm{DP}})}{(\partial \beta)(\partial \beta^\intercal)}$ exists and is continuous in $\beta$; 
$\frac{\partial^2 \rho_n(\beta^*; D, u_{\mathrm{DP}})}{(\partial \beta)(\partial \beta^\intercal)}\xrightarrow{\mathrm{P}} \frac{\partial^2 \rho_\infty(\beta^*; F_u, F_{\mathrm{DP}}, \theta^*)}{(\partial \beta)(\partial \beta^\intercal)} =: J^*$.}

    
    \textbf{(A1, A3)} are for the consistency of $s$ and $\hat{\theta}_{\mathrm{IND}}$ as they are M-estimators \citep{van2000asymptotic}. \textbf{(A2)} is for the identifiability of $\theta^*$ using $s$. \textbf{(A4)} generalizes the $\ell_2$ norm for more efficient $\hat{\theta}_{\mathrm{IND}}$. \textbf{(A5, A6, A7)} are for the Taylor expansion to obtain asymptotic distributions of $s$ and $\hat{\theta}_{\mathrm{IND}}$ which requires us to have the true model $D=G(\theta^*, u)$ continuous in $\theta^*$ given any $u$. 

\begin{theorem}[IND asymptotics]\label{thm:indirect_est} 
    For $\hat{\theta}_{\mathrm{IND}}$ defined in \eqref{eq:theta_ind}, we have the following:
\begin{enumerate}
    \item Under \textbf{(A1--A4)}, $\hat{\theta}_{\mathrm{IND}}$ is a consistent estimator of $\theta^*$.
    \item Under \textbf{(A1--A7)}, the asymptotic distribution of $\sqrt{n}(\hat{\theta}_{\mathrm{IND}} - \theta^*)$ is the same as the one of $\left((B^*)^{\intercal} {\Omega} B^*\right)^{-1} (B^*)^{\intercal} {\Omega} (J^*)^{-1} (v_0 - \frac{1}{R}\sum_{r=1}^R v_r)$ where $v_r \iid F_{\rho,u,\mathrm{DP}}^*$ for $r=0,1,\ldots,R$.
    \item Under \textbf{(A1--A7)}, $\hat{\theta}_{\mathrm{IND}}$ and $s$ are asymptotically equivariant.
\end{enumerate}
\end{theorem}

For the asymptotic properties of $\hat{\theta}_{\mathrm{ADI}}$, we require $R\rightarrow\infty$ when $n\rightarrow\infty$, so we use $\{R_n\}_{n=1}^{\infty}$ to replace the constant $R$. We also have two additional assumptions.

\noindent \textbf{(A8)} For any $\theta \in \Theta$, we have $\lim\limits_{n\rightarrow\infty}\Var(\sqrt{n}(s^r(\theta) - b(\theta))) = \Var(\lim\limits_{n\rightarrow\infty}\sqrt{n}(s^r(\theta) - b(\theta))) =: \Sigma(\theta)\succ0$, $\Sigma(\theta)$ is continuous in $\theta$. For $\{R_n\}_{n=1}^{\infty}$ with $\lim\limits_{n\rightarrow\infty} R_n = \infty$, we have 
$\|\sqrt{n}(m^{R_n}(\theta^*) - \beta^*)\|_2=o_p(1)$, and uniformly for all $\theta$ we have $m^{R_n}(\theta) \xrightarrow{\mathrm{P}} b(\theta)$, $n S^{R_n}(\theta) \xrightarrow{\mathrm{P}} \Sigma(\theta)$.
\\
\noindent \textbf{(A9)} {For every $(s, u^{[R_n]}, u_{\mathrm{DP}}^{[R_n]})$, $\hat{\theta}_{\mathrm{ADI}}(s, u^{[R_n]}, u_{\mathrm{DP}}^{[R_n]})$ is an interior point in $\Theta$. For any $\theta_n \xrightarrow{\mathrm{P}} \theta^*$, we have $\frac{\partial m^{R_n}(\theta_n)}{\partial \theta} \xrightarrow{\mathrm{P}} B^*$ and $\frac{\partial (n S^{R_n}(\theta_n))}{\partial \theta} = O_p(1)$.
}

    \textbf{(A8)} and the conditions on $\{R_n\}_{n=1}^{\infty}$ are for the consistency of $\hat{\theta}_{\mathrm{ADI}}$.   The rate of $R_n$ does not appear in the results, so even a logarithmic rate such as $R_n=c\log(n)$ suffices. \textbf{(A9)} is for using the Taylor expansion to derive the asymptotic distribution and asymptotic equivariance property of $\hat{\theta}_{\mathrm{ADI}}$.

\begin{definition}\label{def:wellbehaved}
    The function $\psi:\mathbb{R}^p \rightarrow \mathbb{R}^q$ is a \emph{well-behaved consistent estimator} of $\theta$ if it is continuously differentiable at $\beta^*$ and $\psi(s) \xrightarrow{\mathrm{P}} \theta^*$.
\end{definition}

\begin{restatable}[ADI asymptotics]{theorem}{adaindirectestconsistent}\label{thm:ada_indirect_est}
    For $\hat{\theta}_{\mathrm{ADI}}(s, u^{[R_n]}, u_{\mathrm{DP}}^{[R_n]})$ defined in \eqref{eq:theta_adi}, where $\lim\limits_{n\rightarrow\infty} R_n = \infty$, we have the following:
    \begin{enumerate}
        \item Under \textbf{(A1--A3, A6--A8)}, $\hat{\theta}_{\mathrm{ADI}}$ is a consistent estimator of $\theta^*$.
        \item Under \textbf{(A1--A3, A5--A9)}, the asymptotic distribution of $\sqrt{n}(\hat{\theta}_{\mathrm{ADI}} - \theta^*)$ is the same as the distribution of $\left((B^*)^{\intercal} \Omega^* B^*\right)^{-1} (B^*)^{\intercal} \Omega^* (J^*)^{-1} v$ where $v \sim F_{\rho,u,\mathrm{DP}}^*$ and $\Omega^*=\Sigma(\theta^*)^{-1}=\Var[(J^*)^{-1} v]^{-1}$.
        \item Under \textbf{(A1--A3, A5--A9)}, $\hat{\theta}_{\mathrm{ADI}}$ and $s$ are asymptotically equivariant.
        \item Under \textbf{(A1--A3, A5--A9)},
        for all well-behaved consistent estimators $\psi(s)$, we have $\Var\l(\lim\limits_{n\rightarrow\infty}\sqrt{n}(\psi(s) - \theta^*)\r) \succeq \Var\l(\lim\limits_{n\rightarrow\infty}\sqrt{n}(\hat{\theta}_{\mathrm{ADI}} - \theta^*)\r) =\l((B^*)^{\intercal} \Sigma(\theta^*)^{-1} B^*\r)^{-1}$. For any choice of $\{\Omega_n\}_{n=1}^\infty$ in $\hat{\theta}_{\mathrm{IND}}$, we have\\
        $\Var\l(\lim\limits_{n\rightarrow\infty}\sqrt{n}(\hat{\theta}_{\mathrm{IND}} - \theta^*)\r) \succeq \Var\l(\lim\limits_{n\rightarrow\infty}\sqrt{n}(\hat{\theta}_{\mathrm{ADI}} - \theta^*)\r)$. \\
    \end{enumerate}
\end{restatable}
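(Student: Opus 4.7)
The plan is to verify the four conclusions in order, with two genuine novelties to handle: the data-driven, $\theta$-dependent weighting matrix $(S^{R_n}(\theta))^{-1}$, and a Gauss-Markov-type optimality bound. For Part 1 (consistency), I would rescale the objective by $1/n$ and apply standard M-estimation. Writing $(S^{R_n}(\theta))^{-1} = n \cdot (n S^{R_n}(\theta))^{-1}$ and using (A8) to get $n S^{R_n}(\theta) \xrightarrow{P} \Sigma(\theta)$ uniformly, together with (A3) giving $m^{R_n}(\theta) \xrightarrow{P} b(\theta)$ uniformly and $s \xrightarrow{P} \beta^*$ from (A1)--(A2), the rescaled criterion $\tfrac{1}{n} Q_n(\theta) := \tfrac{1}{n}\|s - m^{R_n}(\theta)\|^2_{(S^{R_n}(\theta))^{-1}}$ converges uniformly in $\theta$ to $(\beta^* - b(\theta))^{\intercal}\Sigma(\theta)^{-1}(\beta^* - b(\theta))$, which vanishes only at $\theta = \theta^*$ by the injectivity of $b$ and $\Sigma(\theta) \succ 0$. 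Standard M-estimation then delivers consistency of $\hat{\theta}_{\mathrm{ADI}}$.

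For Part 2 (asymptotic distribution), I would Taylor-expand the first-order condition $\partial Q_n(\hat\theta)/\partial\theta = 0$ about $\theta^*$. The gradient has one piece from $\partial m^{R_n}/\partial\theta$ and one from $\partial(S^{R_n})^{-1}/\partial\theta$; by (A9), $\partial(nS^{R_n})/\partial\theta = O_p(1)$, hence $\partial(S^{R_n})^{-1}/\partial\theta = O_p(n)$, and an order-of-magnitude check shows that the weighting-derivative term contributes $O_p(1)$ while the leading term contributes $O_p(\sqrt{n})$ at $\theta^*$, making the former asymptotically negligible. Writing $s - m^{R_n}(\hat\theta) \approx (s - m^{R_n}(\theta^*)) - B^*_{R_n}(\hat\theta - \theta^*)$ with $B^*_{R_n} \xrightarrow{P} B^*$, solving the resulting linear system, and invoking (A6)--(A7) together with $\sqrt{n}(m^{R_n}(\theta^*) - \beta^*) = o_p(1)$ from (A8), I obtain the claimed limit $\sqrt{n}(\hat{\theta}_{\mathrm{ADI}} - \theta^*) \xrightarrow{d} ((B^*)^{\intercal}\Omega^* B^*)^{-1} (B^*)^{\intercal} \Omega^* (J^*)^{-1} v$ with $\Omega^* = \Sigma(\theta^*)^{-1}$. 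Part 3 follows by rerunning the same argument under the local alternative $\theta_n^* = \theta^* + h_n/\sqrt{n}$: continuity of $b, B, \Sigma, J$ in $\theta$ preserves every deterministic limit, and contiguity (justified by the smoothness in (A6)--(A7)) preserves the stochastic limit, so $\sqrt{n}(\hat{\theta}_{\mathrm{ADI}} - \theta_n^*)$ converges to the same law.

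For Part 4 (optimality), the key observation is that any well-behaved consistent $\psi$ must satisfy $\psi(b(\theta)) = \theta$ in a neighborhood of $\theta^*$, since for each such $\theta$ we have $s \xrightarrow{P} b(\theta)$ under $\mathbb{P}_\theta$ and $\psi(s) \xrightarrow{P} \theta$. Differentiating yields the constraint $A B^* = I_q$ with $A := \nabla\psi(\beta^*)$. The Delta method combined with (A6)--(A7) gives $\sqrt{n}(\psi(s) - \theta^*) \xrightarrow{d} A (J^*)^{-1} v$ with variance $A \Sigma(\theta^*) A^{\intercal}$. Minimizing this over $A$ subject to $A B^* = I_q$ is a Gauss-Markov problem whose unique minimizer is $A^\star = ((B^*)^{\intercal}\Sigma(\theta^*)^{-1}B^*)^{-1}(B^*)^{\intercal}\Sigma(\theta^*)^{-1}$, with minimum $((B^*)^{\intercal}\Sigma(\theta^*)^{-1}B^*)^{-1}$ — exactly the asymptotic variance of $\hat{\theta}_{\mathrm{ADI}}$ from Part 2. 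For the comparison to $\hat{\theta}_{\mathrm{IND}}$, Theorem \ref{thm:indirect_est} together with $\Sigma(\theta^*) = \Var[(J^*)^{-1}v]$ expresses its asymptotic variance as $(1 + 1/R) M(\Omega)$, where $M(\Omega) := ((B^*)^{\intercal}\Omega B^*)^{-1} (B^*)^{\intercal}\Omega \Sigma(\theta^*) \Omega B^* ((B^*)^{\intercal}\Omega B^*)^{-1}$; the same Gauss-Markov inequality gives $M(\Omega) \succeq ((B^*)^{\intercal}\Sigma(\theta^*)^{-1}B^*)^{-1}$, and the factor $1 + 1/R \geq 1$ only strengthens the bound.

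The main obstacle is the careful bookkeeping around the stochastic, $\theta$-dependent weighting matrix: establishing the $O_p(n)$ blow-up rate of $(S^{R_n})^{-1}$ and of its $\theta$-derivative, confirming that substituting $\hat\theta$ for $\theta^*$ inside the quadratic form is asymptotically harmless, and transferring uniform convergence in (A8) cleanly to the rescaled objective and its derivatives. A secondary subtlety is justifying the constraint $A B^* = I_q$ from a pointwise consistency statement, which requires regarding $\psi$ as consistent along the entire curve $\theta \mapsto b(\theta)$ in a neighborhood of $\theta^*$ and using the continuous differentiability at $\beta^*$ guaranteed by Definition \ref{def:wellbehaved}.
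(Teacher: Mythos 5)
Your proposal tracks the paper's proof closely for Parts~1, 2, and 4. Part~1: the paper shows the unrescaled objective diverges off an $\epsilon$-ball around $\theta^*$ while staying $O_p(1)$ at $\theta^*$, whereas you rescale by $1/n$ and apply uniform M-estimation; these are equivalent packagings of the same idea. Part~2: your order-of-magnitude bookkeeping (first-order-condition term of order $O_p(\sqrt n)$ from $\partial m^{R_n}/\partial\theta$, order $O_p(1)$ from $\partial(S^{R_n})^{-1}/\partial\theta$) is exactly what the paper establishes after writing $(S^{R_n})^{-1} = n\,(nS^{R_n})^{-1}$ and using (A8)--(A9). Part~4: your constraint $AB^*=I_q$ from $\psi(b(\theta))=\theta$, the Gauss--Markov minimization, and the $(1+1/R)$ factor for $\hat\theta_{\mathrm{IND}}$ are the same as the paper's, and you correctly flag the subtlety that Definition~\ref{def:wellbehaved} must be read as consistency over a neighborhood of $\theta^*$.

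The one place you diverge in substance is Part~3. The paper establishes asymptotic equivariance using the random-seed representation: with $u, u^{[R_n]}, u_{\mathrm{DP}}^{[R_n]}$ held fixed, (A5) gives the $\theta$-differentiability of $s(\theta)$ and $s^r(\theta)$, so the Taylor expansions $s(\theta^*+h_n/\sqrt n) = s(\theta^*) + B^* h_n/\sqrt n + o_p(1/\sqrt n)$ and likewise for $m^{R_n}$ let the $B^* h_n/\sqrt n$ terms \emph{cancel} in $\sqrt n\,(s - m^{R_n})$, which is what carries the limiting law over unchanged. Your sketch instead appeals to contiguity ``justified by the smoothness in (A6)--(A7),'' but (A6)--(A7) concern smoothness of $\rho_n$ in $\beta$, not a LAN/DQM-type condition on the sampling model in $\theta$, so contiguity does not come for free; and even granting contiguity, Le Cam's third lemma introduces a mean shift you would then have to show cancels against the shift in $\tau(\theta^* + h_n/\sqrt n)$, which is precisely what the cancellation between $s$ and $m^{R_n}$ delivers directly. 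You should make (A5) the load-bearing assumption for Part~3 and route the argument through the joint expansion of $s(\cdot)$ and $m^{R_n}(\cdot)$, not through contiguity of the observed-data laws.
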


    
    By Theorem \ref{thm:ada_indirect_est}, the adaptive indirect estimator achieves the minimum asymptotic variance, $\Lambda:=\l((B^*)^{\intercal} \Sigma(\theta^*)^{-1} B^*\r)^{-1}$, compared to all indirect estimators and all well-behaved consistent estimators based on $s$.  
    Note that $\Lambda$ depends on the DP mechanism that generates $s$, and our adaptive indirect estimator is also optimal in terms of ``invertible'' post-processing on $s$: For $\phi:\mathbb{R}^p \rightarrow \mathbb{R}^p$ being continuously differentiable at $\beta^*$ and $\frac{\partial \phi(\beta^*)}{\partial s}$ being an invertible matrix, using the delta method, the adaptive indirect estimator based on $\phi(s)$ has the same asymptotic variance as $\Lambda$.
    Furthermore, we provide some interpretations of $\Lambda$ and leave it for future work to find the optimal DP mechanism that minimizes $\Lambda$.
    \begin{enumerate}
        \item If $\theta\in\mathbb{R}^1$ and $s\in\mathbb{R}^1$, $\l((B^*)^{\intercal} \Sigma(\theta^*)^{-1} B^*\r)$ is the efficacy \citep{pitman2018some} of $s$ at $\theta^*$, related to the asymptotic relative efficiency (Pitman efficiency);
        \item In general, we can interpret $B^*=\frac{\partial b(\theta^*)}{\partial \theta}$ as the signal to identify $\theta^*$ by $s$ and $\Sigma(\theta^*)$ as the noise in $s$. Then $\Lambda$ is related to the idea of the signal-to-noise ratio.
    \end{enumerate}    

\begin{remark}
    For Theorem \ref{thm:indirect_est} part 2, the variance of $(v_0 - \frac{1}{R}\sum_{r=1}^R v_r)$ is in the order of $(1+1/R)$ compared to the variance of $v_0$. To approach the optimal variance, we need to set $\Omega=\l((B^*)^{\intercal} \Sigma(\theta^*)^{-1} B^*\r)^{-1}$ and take $R$ to be an arbitrarily large constant. In Theorem \ref{thm:ada_indirect_est} part 2, this is automatically achieved without the knowledge of $\Omega$ and holds with $R_n\rightarrow\infty$. 
\end{remark}

\begin{remark}
    We use the asymptotic equivariance in Theorems \ref{thm:indirect_est} and \ref{thm:ada_indirect_est} to establish the consistency of the PB confidence sets and HTs with $\hat{\theta}:=\hat{\theta}_{\mathrm{IND}}$. While the asymptotic distribution could be used for inference as stated in the original indirect inference method \citep{gourieroux1993indirect}, it would require an estimation of $(B^*, J^*, F_{\rho,u,\mathrm{DP}}^*)$, and its finite-sample performance may be unsatisfactory. However, the asymptotic distribution is still helpful for studentization in constructing an approximate pivot for inference as shown in Theorem \ref{thm:indirect_inf_valid_est} part 3. Figure \ref{fig:HT_comparison} shows the  performance of the approximate pivot in HTs.
\end{remark}

\begin{remark}

    The first two parts of Theorem \ref{thm:indirect_est}
    are inspired by \citet[Propositions 1 and 3]{gourieroux1993indirect} while we give a more detailed and precise proof and focus on its application with DP. We generalize the asymptotic distribution of $\sqrt{n} (\frac{\partial \rho_n(\beta^*; D, u_{\mathrm{DP}})}{\partial \beta})$ from the normal distribution \citep{gourieroux1993indirect} to $F_{\rho,u,\mathrm{DP}}^*$, which ensures that the result holds in broader settings. 
    In Example \ref{ex:general_f}, we show the  value of such a generalization in DP settings with a  strong privacy guarantee.
\end{remark}

\begin{example}\label{ex:general_f}

    For $D=(x_1,\ldots,x_n)$ where $x_i\in[0,1]$, in order to estimate the population mean under $\ep$-DP, we use the Laplace mechanism which releases $s := \frac{1}{n}\sum_{i=1}^n x_i + u_{\mathrm{DP}}$ where $u_{\mathrm{DP}} \sim \mathrm{Laplace}(1/(n\ep))$. Let $\rho_n(\beta) := \frac{1}{2}\|\beta - s\|_2^2$, which indicates that $\rho_\infty(\beta) = \frac{1}{2}\|\beta - \beta^*\|_2^2$ where $\beta^* = \mathbb{E}[x_i]$. We have $\sqrt{n} (\frac{\partial \rho_n(\beta^*)}{\partial \beta})=\sqrt{n}(\beta^* - s)$.  If $x_i\iid \mathrm{Uniform}([0,1])$ and $\ep=\omega(1/\sqrt{n})$, then $F_{\rho,u,\mathrm{DP}}^*=N(0,1/12)$. However, if $\ep=c/\sqrt{n}$,   then $F_{\rho,u,\mathrm{DP}}^*$ is not normal but a convolution of $N(0,1/12)$ and $\mathrm{Laplace}(c)$. This example illustrates that in some DP settings, the asymptotic distribution may not be normal.  Such a decreasing $\ep$ could arise by the composition property of DP: with larger and richer datasets, it is common that more analyses will be computed -- under DP each analysis must have a smaller privacy loss budget in order for the total analysis to be $\ep$-DP.
\end{example}

By Proposition \ref{prop:parambootconsistency} and the asymptotic equivariance of $\hat{\theta}_{\mathrm{IND}}$ and $\hat{\theta}_{\mathrm{ADI}}$, if $\tau$ and $\hat\tau$ satisfy regularity conditions, using the delta method, we know that PB with $\hat{\theta}:=\hat{\theta}_{\mathrm{IND}}$ or $\hat{\theta}_{\mathrm{ADI}}$
is consistent. We formalize this intuition in Theorem \ref{thm:indirect_inf_valid_est} and Corollary \ref{cor:indirect_inf_valid_CI_HT}.

\begin{restatable}[PB asymptotics]{theorem}{indirectinfvalid}\label{thm:indirect_inf_valid_est}
     Let $\eta_1:\mathbb{R}^q \rightarrow \mathbb{R}^d$ and $\eta_2:\mathbb{R}^p \rightarrow \mathbb{R}^d$ be two maps defined and continuously differentiable in a neighborhood of $\theta^*$ and $s$, respectively.
    \begin{enumerate}
        \item Under \textbf{(A1--A7)}, when $n\rightarrow\infty$ and $\hat\Sigma(s_b):=\frac{1}{n}I_d$, with $\hat\theta:=\hat{\theta}_{\mathrm{IND}}$, the two types of PB estimators $\hat\tau_1(s_b):=\eta_1(\hat\theta(s_b))$ and $\hat\tau_2(s_b):=\eta_2(s_b)$ are consistent.
        \item  Under \textbf{(A1--A3, A5--A9)}, when $n\rightarrow\infty$, $R\rightarrow\infty$, and $\hat\Sigma(s_b):=\frac{1}{n}I_d$, with $\hat\theta:=\hat{\theta}_{\mathrm{ADI}}$, the two types of PB estimators $\hat\tau_1(s_b):=\eta_1(\hat{\theta}(s_b))$ and $\hat\tau_2(s_b):=\eta_2(s_b)$ are consistent.
        \item  Under \textbf{(A1--A3, A5--A9)}, when $n\rightarrow\infty$, $R\rightarrow\infty$, we let $\hat\theta:=\hat{\theta}_{\mathrm{ADI}}$, $\hat\theta_b:=\hat{\theta}(s_b)$, and 
        $\hat\Sigma(s_b):=\frac{\eta_3(\hat\theta_b)}{n}$, where $\eta_3(\hat\theta_b):=
            \l(\frac{\partial \eta_1}{\partial \theta}(\hat\theta_b)
                \l(
                    \l(\frac{\partial b}{\partial\theta}(\hat\theta_b)\r)^{\intercal} 
                    \Sigma(\hat\theta_b)^{-1} 
                    \frac{\partial b}{\partial\theta}(\hat\theta_b)
                \r)^{-1}
                \l(\frac{\partial \eta_1}{\partial \theta}(\hat\theta_b)\r)^\intercal
            \r)$,
        that is, $\hat\Sigma(s_b)$ estimates the asymptotic covariance of $\hat\tau(s_b):=\eta_1(\hat{\theta}(s_b))$.
         The PB estimators $\hat\tau(s_b)$ are consistent 
        and $T:=\|\hat\Sigma(s)^{-\frac{1}{2}}\l(\hat\tau(s) - \tau(\theta^*)\r)\|_p$ is an asymptotic pivot.
    \end{enumerate}
\end{restatable}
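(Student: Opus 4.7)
The plan is to reduce each part to Proposition \ref{prop:parambootconsistency}, whose hypotheses are (i) $\sqrt n(\hat\theta(s) - \theta^*) \xrightarrow{d} J(\theta^*)$ along every local sequence $\theta^* + h_n/\sqrt n$, (ii) asymptotic equivariance of $\hat\tau$ with continuous limit $H(\theta^*)$, and (iii) $\sqrt n\,\hat\sigma(s) \xrightarrow{P} \sigma(\theta^*)$. Theorems \ref{thm:indirect_est} and \ref{thm:ada_indirect_est} already supply (i) and the asymptotic equivariance of $\hat\theta \in \{\hat{\theta}_{\mathrm{IND}}, \hat{\theta}_{\mathrm{ADI}}\}$ and of $s$ itself, so the work is to transfer asymptotic equivariance through $\eta_1, \eta_2$ and, for Part 3, to handle the studentization via $\eta_3$.

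For Parts 1 and 2, I would transfer asymptotic equivariance from $\hat\theta$ to $\hat\tau_1(s) = \eta_1(\hat\theta(s))$ and from $s$ to $\hat\tau_2(s) = \eta_2(s)$ by a delta-method argument. Under $s \sim \mathbb{P}_{\theta^* + h_n/\sqrt n}$, continuous differentiability of $\eta_1$ near $\theta^*$ together with consistency of $\hat\theta(s)$ give
\[
\sqrt n\bigl(\eta_1(\hat\theta(s)) - \eta_1(\theta^* + h_n/\sqrt n)\bigr) \xrightarrow{d} \frac{\partial \eta_1}{\partial \theta}(\theta^*)\, J_{\hat\theta}(\theta^*),
\]
where $J_{\hat\theta}(\theta^*)$ is the limit of $\sqrt n(\hat\theta(s) - (\theta^* + h_n/\sqrt n))$ from asymptotic equivariance of $\hat\theta$. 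The limiting law depends continuously on $\theta^*$ because $B^*$, $J^*$, and $\Sigma(\theta^*)$ are continuous by (A5), (A7), (A8). The same argument handles $\hat\tau_2$ using the asymptotic equivariance of $s$. Since $\hat\sigma(s) = 1/\sqrt n$ is deterministic, (iii) holds trivially with $\sigma(\theta^*) \equiv 1$, and Proposition \ref{prop:parambootconsistency} then yields PB consistency.

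For Part 3, (i) and the asymptotic equivariance of $\hat\tau = \eta_1 \circ \hat{\theta}_{\mathrm{ADI}}$ carry over from Part 2. The new step is (iii) for $\hat\sigma(s) = \eta_3(\hat\theta(s))/\sqrt n$. Since $\hat{\theta}_{\mathrm{ADI}} \xrightarrow{P} \theta^*$ by Theorem \ref{thm:ada_indirect_est}.1 and $\eta_3$ is continuous at $\theta^*$ (its building blocks $\partial b/\partial\theta$, $\Sigma(\cdot)$, $\partial\eta_1/\partial\theta$ are continuous in a neighborhood of $\theta^*$), the continuous mapping theorem gives $\eta_3(\hat\theta(s)) \xrightarrow{P} \eta_3(\theta^*) =: \sigma(\theta^*)$. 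Applying the delta method to Theorem \ref{thm:ada_indirect_est}.2, the asymptotic variance of $\sqrt n(\hat\tau(s) - \tau(\theta^*))$ equals
\[
\frac{\partial \eta_1}{\partial \theta}(\theta^*)\bigl((B^*)^{\intercal}\Sigma(\theta^*)^{-1} B^*\bigr)^{-1}\Bigl(\frac{\partial \eta_1}{\partial \theta}(\theta^*)\Bigr)^{\intercal} = \eta_3(\theta^*)^2,
\]
so the studentized $T = (\hat\tau(s) - \tau(\theta^*))/\hat\sigma(s)$ has asymptotic variance one, yielding the approximate-pivot property.

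The main obstacle is verifying that asymptotic equivariance transfers through the nonlinear maps $\eta_1, \eta_2, \eta_3$ under the local perturbations $\theta^* + h_n/\sqrt n$ rather than only pointwise at $\theta^*$. This needs the underlying convergences in Theorems \ref{thm:indirect_est} and \ref{thm:ada_indirect_est} to hold locally uniformly in $\theta$, which is available because those proofs proceed via Taylor expansion at $\beta^* = b(\theta^*)$, and an $O(1/\sqrt n)$ perturbation of $\theta^*$ induces only an $O(1/\sqrt n)$ perturbation of the expansion point, leaving the leading-order distributional limits unchanged. The variance-matching identity in Part 3 is then bookkeeping: $\eta_3$ is by construction the square root of the quadratic form that the delta method produces from Theorem \ref{thm:ada_indirect_est}.2, and continuity of $\eta_3$ at $\theta^*$ controls the ratio of numerator to denominator in the definition of $T$.
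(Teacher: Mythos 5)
Your proof is correct and takes essentially the same route as the paper: reduce to Proposition \ref{prop:parambootconsistency} by transferring asymptotic equivariance from $\hat\theta$ (resp.\ $s$) through $\eta_1$ (resp.\ $\eta_2$) via the delta method, and use the continuous mapping theorem plus consistency of $\hat{\theta}_{\mathrm{ADI}}$ to show $\sqrt n\,\hat\sigma(s)\xrightarrow{P}\eta_3(\theta^*)$. The ``main obstacle'' you raise at the end is actually a non-issue: the paper handles the local sequence $\theta^* + h_n/\sqrt n$ by citing the \emph{uniform} delta method (Lemma \ref{lem:deltamethod}, van der Vaart Thm.\ 3.8), whose hypotheses --- $r_n(T_n-\theta_n)\xrightarrow{d}T$ for $\theta_n\to\theta$ --- are precisely what the asymptotic equivariance of $\hat\theta$ and $s$ already deliver, so no re-examination of the Taylor expansions in Theorems \ref{thm:indirect_est}--\ref{thm:ada_indirect_est} is needed.
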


    

\begin{restatable}[]{corollary}{indirectinfvalidcor}\label{cor:indirect_inf_valid_CI_HT}
    Under the assumptions in Theorem \ref{thm:indirect_inf_valid_est}, the corresponding PB confidence sets are asymptotically consistent. If we further assume that  $\inf_{\theta^*\in\Theta_0}\|\tau(\theta^*) - \tau(\theta_1)\|_p > 0$ for all $\theta_1\in\Theta_1$, the corresponding PB HTs are also asymptotically consistent.
\end{restatable}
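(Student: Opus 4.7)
The plan is to invoke Lemma \ref{lem:CI_consist} for the CI part and Lemma \ref{lem:HT_consist} for the HT part, verifying each of their hypotheses in all three cases covered by Theorem \ref{thm:indirect_inf_valid_est}. Theorem \ref{thm:indirect_inf_valid_est} already delivers the PB consistency hypothesis shared by both lemmas, so the remaining work is to check the asymptotic-distribution hypothesis and, for the HT, the two additional conditions ($\hat\sigma(s)\xrightarrow{P}0$ and the separation condition on $\tau$).

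First I would verify $(\hat\tau(s)-\tau(\theta^*))/\hat\sigma(s)\xrightarrow{d} T_{\theta^*}$ with a continuous CDF in each case. In cases 1 and 2, $\hat\sigma(s) = 1/\sqrt{n}$, so this reduces to showing $\sqrt{n}(\hat\tau(s)-\tau(\theta^*))\xrightarrow{d}$ a continuous limit. For $\hat\tau_1(s) = \eta_1(\hat\theta(s))$, apply the delta method at $\theta^*$ using the asymptotic distribution of $\sqrt{n}(\hat\theta-\theta^*)$ from Theorem \ref{thm:indirect_est} part 2 (respectively Theorem \ref{thm:ada_indirect_est} part 2); the resulting limit is a linear transformation of i.i.d.\ copies of $v\sim F_{\rho,u,\mathrm{DP}}^*$. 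For $\hat\tau_2(s)=\eta_2(s)$, expand $\sqrt{n}(s-\beta^*) = (J^*)^{-1}\sqrt{n}(-\partial \rho_n(\beta^*)/\partial \beta) + o_p(1)$ via assumptions \textbf{(A6, A7)} (the same M-estimator expansion used to obtain Theorem \ref{thm:indirect_est} part 2) and again apply the delta method, now to $\eta_2$ at $\beta^*$. In case 3, $\hat\sigma(s) = \eta_3(\hat\theta)/\sqrt{n}$; since $\hat\theta\xrightarrow{P}\theta^*$ by Theorem \ref{thm:ada_indirect_est} part 1 and $\eta_3$ is built from continuous quantities ($\partial b/\partial\theta$, $\Sigma$, $\partial \eta_1/\partial\theta$, and the inverse of an invertible matrix under \textbf{(A8)}), continuous mapping gives $\eta_3(\hat\theta)\xrightarrow{P}\eta_3(\theta^*) > 0$, and Slutsky's theorem yields the studentized convergence with a continuous limit; this is precisely the approximate-pivot statement already asserted in Theorem \ref{thm:indirect_inf_valid_est} part 3.

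With the convergence-in-distribution verified, Lemma \ref{lem:CI_consist} immediately gives the asymptotic consistency of the PB CIs in all three cases. For the HT conclusion, Lemma \ref{lem:HT_consist} further requires $\hat\sigma(s)\xrightarrow{P}0$ and the separation condition $\inf_{\theta^*\in\Theta_0}|\tau(\theta^*)-\tau(\theta_1)| > 0$ for every $\theta_1\in\Theta_1$. The former holds trivially because $1/\sqrt{n}\to 0$ in cases 1 and 2, and $\eta_3(\hat\theta)/\sqrt{n}\xrightarrow{P}0$ in case 3 (since $\eta_3(\hat\theta)=O_p(1)$). The latter is the explicit extra hypothesis of the corollary. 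Feeding the verified hypotheses into Lemma \ref{lem:HT_consist} concludes the proof.

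The main obstacle I anticipate is ensuring that $T_{\theta^*}$ has a continuous CDF after the delta-method push-forward. In the IND/ADI limits, the limit is a linear combination via $((B^*)^\intercal \Omega B^*)^{-1}(B^*)^\intercal \Omega (J^*)^{-1}$ of i.i.d.\ draws from $F_{\rho,u,\mathrm{DP}}^*$, and the push-forward by $\partial \eta_1/\partial \theta$ (or by $\partial \eta_2/\partial s$ combined with the $(J^*)^{-1}$ expansion in case 2) produces a continuous CDF provided the relevant Jacobian is non-degenerate. I would either absorb this as a mild non-degeneracy hypothesis on $\eta_1,\eta_2$ at $\theta^*,\beta^*$, or verify it directly in the DP settings of interest, where the additive privacy noise already ensures that $F_{\rho,u,\mathrm{DP}}^*$ is a continuous distribution and hence any non-trivial linear push-forward remains continuous.
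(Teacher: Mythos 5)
The paper gives no separate proof for this corollary---it is stated as an immediate consequence of Theorem~\ref{thm:indirect_inf_valid_est} together with Lemmas~\ref{lem:CI_consist} and~\ref{lem:HT_consist}---and your argument is exactly the verification that a careful reader would need to supply; it is correct and follows the intended route. A couple of observations. First, you can shortcut the re-derivation of $(\hat\tau(s)-\tau(\theta^*))/\hat\sigma(s)\xrightarrow{d}T_{\theta^*}$: the proof of Theorem~\ref{thm:indirect_inf_valid_est} already establishes that $\eta_1(\hat\theta)$ and $\eta_2(s)$ are asymptotically equivariant, and Definition~\ref{def:AE} with $h_n\equiv 0$, combined with the convergence $\sqrt{n}\,\hat\sigma(s)\xrightarrow{P}\sigma(\theta^*)$ and Slutsky, yields the required distributional limit without invoking the delta method again. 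Second, your flag on the continuity of the limit CDF is a genuine subtlety that the paper glosses over: Proposition~\ref{prop:parambootconsistency} (on which Theorem~\ref{thm:indirect_inf_valid_est} rests) takes continuity of $H(\theta^*)$ as a hypothesis, so the corollary is implicitly inheriting this assumption; your observation that the additive DP noise makes $F_{\rho,u,\mathrm{DP}}^*$ continuous and that a non-degenerate Jacobian push-forward preserves this is the right justification in the privacy settings considered. Finally, note that in case~3 the rate $\hat\sigma(s)=\eta_3(\hat\theta)/\sqrt n\xrightarrow{P}0$ requires $\eta_3(\theta^*)<\infty$, which holds since $\eta_3(\hat\theta)\xrightarrow{P}\eta_3(\theta^*)$ is a finite constant---you state this correctly.
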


\begin{remark}
     If there exists  $\theta$ such that $s-\frac{1}{R} \sum_{r=1}^R s^r(\theta) = 0$, our asymptotic distribution is consistent with the results by \citet[Theorem 4]{zhang2022flexible} for the JINI estimator. 
    Our asymptotic distribution aligns with the optimality results by \citet{jiang2004indirect} for the adjusted estimator, while our estimator is data-driven without requirement of the knowledge of  $\Sigma(\theta)$ or  access to an estimator of this matrix. 
     For statistical inference,   \citet{jiang2004indirect}, \citet{guerrier2019simulation}, and \citet{zhang2022flexible} used the asymptotic normality of their estimators, while we use PB to build confidence sets and conduct HTs. 
\end{remark}

\subsection{Non-smooth settings and surrogate models}\label{sec:pb:non-smooth}
In Assumption \textbf{(A5)}, we require $s^r(\theta)$ to be  differentiable with respect to $\theta$ for all $(\theta, u^r, u_{\mathrm{DP}}^r)$. However, there are two important problem settings that do not satisfy this assumption:

The first type of non-smoothness is from the clamping procedure $x' := \min(\max(x,a),b)$ in DP mechanisms for bounded sensitivity.
To satisfy the assumptions on the differentiability of $x'$ when $x\in\mathbb{R}$, the original clamping procedure can be replaced with a differentiable clamping function $x':=f(x)$, $f:\mathbb{R}\rightarrow[a,b]$, e.g., a transformed sigmoid function, $x':=a+(b-a)\mathrm{sigmoid}(\frac{4}{b-a}(x-\frac{a+b}{2}))$ where $\mathrm{sigmoid}(x):=\frac{1}{1+\ee^{-x}}$, or a transformed smoothstep function, $x':=a+(b-a)\mathrm{smoothstep}(\frac{x-a}{b-a})$ where $\mathrm{smoothstep}(x):= (3x^2-2x^3)\mathbb{1}_{0\leq x\leq 1} + \mathbb{1}_{1 < x}$. Note that in our simulations, we still use the original clamping procedure, as it does not affect our usage of the \texttt{optim} function in \texttt{R} with the method \texttt{L-BFGS-B} for optimization, where the gradient is replaced by a finite-difference approximation.

The second type of non-smoothness is the discrete nature of data before using any DP mechanism. Discrete distributions such as a multinomial or Poisson distribution are common, but their corresponding data generating function is non-smooth with respect to the parameter, given the random seed, which brings difficulties in both the optimization procedure in our method and the applicability of our theoretical guarantees. 

In the remainder of this section, we extend our theory to an approximation method using a smoothed data generation,  which can be thought of as a surrogate model. For the data generating equation $D:=G(\theta^*, u)$ where $G$ may not be continuous and $u$ following a continuous distribution,  
we use a smooth function $G_{\mathrm{smooth}}$ to approximate $G$ in the indirect estimators:  let $D_{\mathrm{smooth}}^r(\theta)=G_{\mathrm{smooth}}(\theta, u^r)$, and re-define $s^r(\theta)$ in \eqref{eq:theta_ind} and \eqref{eq:theta_adi} as 
\begin{equation}\label{eq:s_smooth}
s^r(\theta) := \argmin\limits_{\beta\in \mathbb{B}} \rho_n(\beta; D_{\mathrm{smooth}}^r(\theta), u_{\mathrm{DP}}^r).
\end{equation}
To obtain the consistency and optimality results of the new $\hat{\theta}_{\mathrm{IND}}$ and $\hat{\theta}_{\mathrm{ADI}}$, we need the following assumptions in addition to the assumptions in Section \ref{sec:pb:asymp_theory}.
\\
\noindent \textbf{(A1s)} {$\sup_{\beta\in \mathbb{B}} |\rho_n(\beta; D_{\mathrm{smooth}}^r(\theta^*), u_{\mathrm{DP}}) - \rho_\infty(\beta; F_u, F_{\mathrm{DP}}, \theta^*)| \xrightarrow{\mathrm{P}} 0$.}
\\
\noindent \textbf{(A5s)} {For all convergent sequences $h_n\rightarrow h$, $s\l(\theta^* + \frac{h_n}{\sqrt{n}}\r) = s(\theta^*) + \frac{\partial b(\theta^*)}{\partial \theta}\frac{h_n}{\sqrt{n}}+o_p\l(\frac{1}{\sqrt{n}}\r)$.
}
\\
\noindent \textbf{(A6s)} {$\sqrt{n} (-\frac{\partial \rho_n(\beta^*; D_{\mathrm{smooth}}^r(\theta^*), u_{\mathrm{DP}})}{\partial \beta}) \xrightarrow{\mathrm{d}} F_{\rho,u,\mathrm{DP}}^*$.}
\\
\noindent \textbf{(A7s)} {$\frac{\partial^2 \rho_n(\beta^*; D_{\mathrm{smooth}}^r(\theta^*), u_{\mathrm{DP}})}{(\partial \beta)(\partial \beta^\intercal)}\xrightarrow{\mathrm{P}} J^*$.}

\begin{theorem}\label{thm:indirect_est_smooth} 
    For $s^r$ defined in \eqref{eq:s_smooth} and $\hat{\theta}_{\mathrm{IND}}$ defined in \eqref{eq:theta_ind}, we have the following:
\begin{enumerate}
    \item Under \textbf{(A1--A4)} and \textbf{(A1s)}, $\hat{\theta}_{\mathrm{IND}}$ is a consistent estimator of $\theta^*$.
    \item Under \textbf{(A1--A7)}  and \textbf{(A1s, A6s, A7s)}, the asymptotic distribution of $\sqrt{n}(\hat{\theta}_{\mathrm{IND}} - \theta^*)$ is the same as the one of $\left((B^*)^{\intercal} {\Omega} B^*\right)^{-1} (B^*)^{\intercal} {\Omega} (J^*)^{-1} (v_0 - \frac{1}{R}\sum_{r=1}^R v_r)$ where $v_r \iid F_{\rho,u,\mathrm{DP}}^*$ for $r=0,1,\ldots,R$.
    \item Under \textbf{(A1--A7)} and \textbf{(A1s, A5s, A6s, A7s)}, $\hat{\theta}_{\mathrm{IND}}$ and $s$ are asymptotically equivariant.
\end{enumerate}
\end{theorem}

The technical conditions \textbf{(A1s),(A5s)-(A7s)} guarantee using $D$ and $D_{\mathrm{smooth}}$ result in the same asymptotic properties.
\textbf{(A1s)} is for the consistency of $s^r(\theta^*)$. \textbf{(A6s, A7s)} are for the asymptotic distribution of $s^r(\theta^*)$. As $s$ is non-differentiable with respect to $\theta$, \textbf{(A5s)} is a finite-difference property of $s$ used to establish the asymptotic equivariance.

These additional assumptions align with the intuition that when using the surrogate model,  $s^r(\theta^*)$ should follow the same limiting distribution as $s$: \textbf{(A1, A1s, A2, A3)} are for matching mean, \textbf{(A6, A6s, A7, A7s)} are for matching variance, and \textbf{(A5, A5s)} are for matching the first-order property with respect to $\theta$. Note that we are not assuming that the true parameters of the underlying model are known; rather, the surrogate model should be designed so that at each parameter value, the mean, variance, and first-order behavior of the surrogate model match those of the true model.  As an example, when $D$ is generated by a multinomial distribution, we can use the multivariate normal distribution with the same mean and covariance matrix as $D$ to generate $D_{\mathrm{smooth}}$.

\begin{restatable}[]{theorem}{adaindirectestconsistent}\label{thm:ada_indirect_est_smooth}
    For $s^r$ defined in \eqref{eq:s_smooth} and $\hat{\theta}_{\mathrm{ADI}}(s, u^{[R_n]}, u_{\mathrm{DP}}^{[R_n]})$ defined in \eqref{eq:theta_adi}, where $\lim\limits_{n\rightarrow\infty} R_n = \infty$, we have the following:
    \begin{enumerate}
        \item Under \textbf{(A1--A3, A6--A8)} and \textbf{(A1s, A6s, A7s)}, $\hat{\theta}_{\mathrm{ADI}}$ is consistent for $\theta^*$.
        \item Under \textbf{(A1--A3, A5--A9)} and \textbf{(A1s, A6s, A7s)}, the asymptotic distribution of $\sqrt{n}(\hat{\theta}_{\mathrm{ADI}} - \theta^*)$ is the same as the distribution of $\left((B^*)^{\intercal} \Omega^* B^*\right)^{-1} (B^*)^{\intercal} \Omega^* (J^*)^{-1}v$ where $v \sim F_{\rho,u,\mathrm{DP}}^*$ and $\Omega^*=\Sigma(\theta^*)^{-1}=\Var[(J^*)^{-1} v]^{-1}$.
        \item Under \textbf{(A1--A3, A5--A9)} and \textbf{(A1s, A5s, A6s, A7s)}, $\hat{\theta}_{\mathrm{ADI}}$ and $s$ are asymptotically equivariant.
        \item Under \textbf{(A1--A3, A5--A9)} and \textbf{(A1s, A6s, A7s)}, 
        for all well-behaved consistent estimators $\psi(s)$, we have $\Var\l(\lim\limits_{n\rightarrow\infty}\sqrt{n}(\psi(s) - \theta^*)\r) \succeq \Var\l(\lim\limits_{n\rightarrow\infty}\sqrt{n}(\hat{\theta}_{\mathrm{ADI}} - \theta^*)\r) =\l((B^*)^{\intercal} \Sigma(\theta^*)^{-1} B^*\r)^{-1}$. For any choice of $\{\Omega_n\}_{n=1}^\infty$ in $\hat{\theta}_{\mathrm{IND}}$, we have \\
        $\Var\l(\lim\limits_{n\rightarrow\infty}\sqrt{n}(\hat{\theta}_{\mathrm{IND}} - \theta^*)\r) \succeq \Var\l(\lim\limits_{n\rightarrow\infty}\sqrt{n}(\hat{\theta}_{\mathrm{ADI}} - \theta^*)\r)$. \\
        
    \end{enumerate}
\end{restatable}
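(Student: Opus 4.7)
The plan is to mirror the proof of Theorem \ref{thm:ada_indirect_est} step by step, treating the smoothed simulation side and the non-smooth observation side separately, and then glueing the two together via the matching assumptions \textbf{(A1s, A5s, A6s, A7s)}. These extra assumptions are exactly engineered so that $s^r(\theta^*)$ generated through $D_{\mathrm{smooth}}^r$ inherits (i) the same limiting mean, (ii) the same limiting Hessian of $\rho_n$, and (iii) the same limiting score distribution as the observed $s$. Once we establish those three matchings, every step of the original argument for $\hat{\theta}_{\mathrm{ADI}}$ carries over almost verbatim, with the derivatives of $s$ in $\theta$ replaced by the finite-difference expansion in \textbf{(A5s)} wherever they appear.

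For Part 1 (consistency), I would show that the criterion $Q_n^{R_n}(\theta):=\|s - m^{R_n}(\theta)\|_{(S^{R_n}(\theta))^{-1}}$ converges uniformly in $\theta$ to $Q_\infty(\theta):=\|b(\theta^*)-b(\theta)\|_{\Sigma(\theta)^{-1}}^2$, which is uniquely minimized at $\theta^*$ by injectivity of $b$ in \textbf{(A2)}. Here \textbf{(A1s, A6s, A7s)} and \textbf{(A8)} ensure that $m^{R_n}(\theta)\xrightarrow{P}b(\theta)$ and $nS^{R_n}(\theta)\xrightarrow{P}\Sigma(\theta)$ uniformly in $\theta$, while \textbf{(A1--A3, A6, A7)} applied to the observation side give $s\xrightarrow{P}\beta^*=b(\theta^*)$. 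A standard argmin-consistency result (e.g., van der Vaart, Theorem 5.7) then delivers $\hat{\theta}_{\mathrm{ADI}}\xrightarrow{P}\theta^*$. For Part 2, I would differentiate the criterion at $\hat{\theta}_{\mathrm{ADI}}$, Taylor-expand around $\theta^*$ using \textbf{(A9)} on $\partial m^{R_n}/\partial\theta$ and $\partial(nS^{R_n})/\partial\theta$, and use \textbf{(A6, A7)} (for $s$) together with \textbf{(A6s, A7s)} (for $s^r$) to identify the score and Hessian limits. Since $R_n\to\infty$, the average $\frac{1}{R_n}\sum_r v_r$ vanishes, leaving only the $v_0$ contribution from $s$; combined with $nS^{R_n}(\hat{\theta}_{\mathrm{ADI}})\xrightarrow{P}\Sigma(\theta^*)=J^{*-1}\Var(v)J^{*-\intercal}$, this produces the stated distribution with $\Omega^*=\Sigma(\theta^*)^{-1}$.

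Part 3 is the subtlest step and is where the non-smoothness of $s$ in $\theta^*$ forces us to replace the derivative expansion used in Theorem \ref{thm:ada_indirect_est} with the finite-difference bound \textbf{(A5s)}. Under contiguity along $\theta^*+h_n/\sqrt{n}$, we have $s(\theta^*+h_n/\sqrt n)=s(\theta^*)+B^*h_n/\sqrt n+o_p(1/\sqrt n)$, and the simulation-side quantity $m^{R_n}(\theta)$ remains smooth in $\theta$ with derivative converging to $B^*$. Plugging both expansions into the first-order condition and rearranging yields an expression for $\sqrt{n}(\hat{\theta}_{\mathrm{ADI}}-(\theta^*+h_n/\sqrt{n}))$ whose limiting law depends on $h_n$ only through the continuous-in-$h$ map $h\mapsto h$ shifted by a fixed distribution, which is precisely asymptotic equivariance in the sense of Definition \ref{def:AE}. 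The main technical obstacle is controlling the $o_p(1/\sqrt n)$ remainder uniformly enough that the criterion's Taylor expansion remains valid under the shifted probability measure; I would handle this by a localization argument combined with Le Cam's third lemma, exploiting that the shifted log-likelihood ratio is tight by \textbf{(A6, A6s)}.

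Finally, Part 4 follows from Part 2 by a Cramér--Rao-type argument on the space of well-behaved consistent estimators. For any continuously differentiable $\psi$ with $\psi(s)\xrightarrow{P}\theta^*$, the identity $\psi(b(\theta))=\theta$ holding in a neighborhood of $\theta^*$ yields $\frac{\partial\psi}{\partial\beta}(\beta^*)B^*=I_q$ by the chain rule. The delta method gives $\Var(\lim_n\sqrt n(\psi(s)-\theta^*))=A\,\Sigma(\theta^*)\,A^\intercal$ with $A:=\frac{\partial\psi}{\partial\beta}(\beta^*)$ subject to $AB^*=I_q$. Standard constrained minimization of $A\Sigma A^\intercal$ over such $A$ (Gauss--Markov in matrix form) is minimized at $A=((B^*)^\intercal\Sigma(\theta^*)^{-1}B^*)^{-1}(B^*)^\intercal\Sigma(\theta^*)^{-1}$, producing the stated lower bound, which Part 2 shows is attained by $\hat{\theta}_{\mathrm{ADI}}$. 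The comparison with any $\hat{\theta}_{\mathrm{IND}}$ using a fixed $\Omega$ follows from the same inequality applied to its own sandwich-form asymptotic variance $((B^*)^\intercal\Omega B^*)^{-1}(B^*)^\intercal\Omega\Sigma(\theta^*)\Omega B^*((B^*)^\intercal\Omega B^*)^{-1}$, which equals the optimum iff $\Omega\propto\Sigma(\theta^*)^{-1}$.
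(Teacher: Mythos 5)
Your proposal is essentially the paper's own proof: parts 1, 2, and 4 carry over the argument from Theorem \ref{thm:ada_indirect_est} with $s$ still handled by \textbf{(A1, A2, A6, A7)} and $s^r$ now handled by the matching assumptions \textbf{(A1s, A6s, A7s)} (through the analogue of Lemma \ref{lem:beta_dist}), and part 3 replaces the derivative-based Taylor expansion of $s(\cdot)$ with the finite-difference relation supplied by \textbf{(A5s)}. That is exactly the structure of the paper's proof, and parts 1, 2, and 4 of your write-up are correct as described.

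The one place where you deviate, and where you introduce an unnecessary complication, is part 3. You frame the derivation $s(\theta^*+h_n/\sqrt{n}) = s(\theta^*) + B^* h_n/\sqrt{n} + o_p(1/\sqrt{n})$ as something that needs to be \emph{obtained} via a localization argument plus Le Cam's third lemma and a contiguity/tightness bound from \textbf{(A6, A6s)}. But \textbf{(A5s)} is not a consequence to be proved here; it \emph{is} this expansion, assumed to hold for all convergent sequences $h_n \rightarrow h$. Moreover, in the paper's framework the parameter shift is implemented deterministically through the data-generating equation $D = G(\theta, u)$ with the seeds $u, u_{\mathrm{DP}}$ fixed in distribution, so there is no genuinely ``shifted probability measure'' in the Le Cam sense to work under; the whole point of Definition \ref{def:AE} and Equation \eqref{eq:asymp_eqi_adi_s} is to compare limiting laws under the same $(F_u, F_{\mathrm{DP}})$ with the parameter perturbed inside $G$. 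So the contiguity and third-lemma machinery you invoke is not needed, does not appear in the paper's argument, and reads as if you are re-proving the content of \textbf{(A5s)} rather than using it. Dropping that paragraph and simply citing \textbf{(A5s)} to justify the expansion, then proceeding exactly as in part 3 of Theorem \ref{thm:ada_indirect_est}, recovers the intended proof.
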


\begin{remark}
It is straightforward to adapt Theorem \ref{thm:indirect_inf_valid_est} to be based on $\hat{\theta}_{\mathrm{IND}}$ and $\hat{\theta}_{\mathrm{ADI}}$ using Theorems \ref{thm:indirect_est_smooth} and \ref{thm:ada_indirect_est_smooth}. Our PB procedure in Theorem \ref{thm:indirect_inf_valid_est} is still based on $s$ instead of $s^r$ in \eqref{eq:s_smooth} since the validity of PB only depends on $\hat\theta$ and $s$ being asymptotically equivariant.
\end{remark}

\section{Simulations}
\label{sec:pb:simulation}

In this section, we use simulation studies on DP statistical inference to demonstrate the performance of our methodology. We construct CIs for the population mean and variance of normal distributions,  conduct HTs with a linear regression model, and form CIs for a logistic regression model.  An additional experiment on a na\"ive Bayes log-linear model is included in Section \ref{sec:naive} in the Supplementary Materials. 
All results are computed over 1000 replicates. Code to replicate the experiments is available at \url{https://github.com/JordanAwan/debiased_dp_parametric_bootstrap/}.

\subsection{Location-scale normal}
\label{sec:pb:CI}
We construct DP CIs for the population mean and standard deviation of a normal distribution. 
Let $D:=(x_1,\ldots,x_n)$, $x_i := \mu^* + \sigma^* u_i$, $s:=(\widetilde{m}(D), \widetilde{\eta^2}(D))$, where
\[
\begin{aligned}
    &\widetilde{m}(D):= m(D) + \frac{U-L}{n\ep}u_{\mathrm{DP},1}, && \widetilde{\eta^2}(D) := \eta^2(D) + \frac{(U-L)^2}{n\ep}u_{\mathrm{DP},2},\\
    &m(D):= \frac{1}{n}\sum_{i=1}^n [x_i]_L^U, && \eta^2(D) := \frac{1}{n-1}\sum_{i=1}^n ([x_i]_L^U - m(D))^2,
\end{aligned}\]
$u:=(u_1,\ldots,u_n)\sim F_u := N(0,I_{n\times n})$ and $u_{\mathrm{DP}}:=(u_{\mathrm{DP},1}, u_{\mathrm{DP},2}) \sim F_{\mathrm{DP}} := N(0,I_{2\times 2})$.
The summary $s$ satisfies $(\sqrt{2}\ep)$-GDP, since both $\widetilde m(D)$ and $\widetilde{\eta^2}(D)$ each satisfy $\ep$-GDP, since the sensitivity of $m(D)$ is $(U-L)/n$ and the sensitivity of $\eta^2(D)$ is $(U-L)^2/n$ \citep[Theorem 26]{du2020differentially}. We will construct CIs based on $s$. We set $(\mu^*,\sigma^*;n,\ep;L,U):=(1,1;100,1;0,3)$ and   use 200 bootstrap samples and 50 synthetic indirect estimate samples.

In Table \ref{tab:pb:loc_scale_normal_multivariate}, we compare our debiased estimator, the adaptive indirect estimator, with other debiased inference methods used in PB to construct private CIs with level $(1-\alpha)=0.95$. We let $\theta^*:=(\mu^*,\sigma^*) \in \Theta \subset \mathbb{R}^2$.
For our method, we let
$\hat\tau(s):=(\hat\theta_{\mathrm{ADI}}(s))^{(i)}$, $\tau(\theta):=\theta^{(i)}$, and $\hat\sigma(s):=\frac{1}{\sqrt{n}}$, where $i=1$ and $i=2$ are for $\mu^*$ and $\sigma^*$, respectively. The other PB methods use the naive estimator $\hat\theta:=\l(\widetilde{m}(s), \sqrt{\max(0, \widetilde{\eta^2}(s))}\r)$, which is biased as shown in Figure \ref{fig:mean-std-bias}, to generate bootstrap samples. Table \ref{tab:pb:loc_scale_normal_multivariate} shows that our results have satisfactory coverage while the biased $\hat\theta$ results in under-coverage for other methods:
\begin{itemize}\setlength\itemsep{-0.25em}
    \item The naive percentile method uses the $\alpha/2$ and $(1-\alpha/2)$ percentiles of $\{\hat\theta(s_b)\}_{b=1}^B$. 
    \item The simplified $t$ method uses the $\alpha/2$ and $(1-\alpha/2)$ percentile of $\{2\hat\theta(s) - \hat\theta(s_b)\}_{b=1}^B$.
    \item \citet{ferrando2022parametric} estimates the bias of $\hat\theta(s)$ using $\l(\frac{1}{B}\sum_{b=1}^B \hat\theta(s_b) - \hat\theta(s)\r)$ and build CIs with the $\frac{\alpha}{2}$ and $1-\frac{\alpha}{2}$ percentiles of $\l\{\hat\theta(s_b) - \l(\frac{1}{B}\sum_{b=1}^B \hat\theta(s_b) - \hat\theta(s)\r)\r\}_{b=1}^B$.
    \item Efron's bias-corrected accelerated (BCa) percentile method \citep{efron1987better} is based on the idea that for the biased $\hat\theta$, there exists a transformation function $g$ such that $(\hat\phi - \phi) / \sigma_\phi + z_0 \sim N(0, 1)$ where $\hat\phi:=g(\hat\theta)$, $\phi:=g(\theta^*)$, $\sigma_\phi:=1+a\phi$, $z_0$ is a constant bias, $a$ is an acceleration constant for better performance. As the estimation of $a$ requires access to the original sensitive dataset, which is prohibited for DP guarantees, we set $a:=0$ to reduce BCa to a bias-corrected (BC) only method.
    \item The automatic percentile method \citep{diciccio1989automatic} follows the idea of BCa, but replaces $N(0,1)$ with a general distribution and uses bootstrap simulations to avoid estimating $a$ or $z_0$. 
\end{itemize}

\noindent We also compare our method with Repro \citep{awan2025simulation}, which does not use PB but another simulation-based method for inference.
Repro gives more conservative results, i.e., higher coverage and larger width, than ours since it is designed for finite-sample (non-asymptotic) inference, and offers simultaneous coverage. Furthermore, the test statistic of Mahalanobis depth, recommended by \citet{awan2025simulation}, may be suboptimal.

Figure \ref{fig:mean-std-bias} shows the (empirical) sampling distributions of the estimators where we compare their biases. The naive method is $\hat\theta(s)$, the simplified $t$ method is $\l(2\hat\theta(s) - \frac{1}{B}\sum_{b=1}^B \hat\theta(s_b)\r)$ which is also used by \citet{ferrando2022parametric} as the debiased point estimator of $\theta$  via parametric bootstrap, and the indirect estimator is $\hat\theta_{\mathrm{ADI}}(s)$.  Due to clamping, the naive and simplified $t$ estimators are biased above for the mean parameter and below for the scale parameter. The adaptive indirect estimator is the only one that does not have a significant bias.
Note that Efron's BC, automatic percentile, and Repro do not provide debiased estimators.

In the bottom of Table \ref{tab:pb:loc_scale_normal_multivariate}, we compare confidence ellipses via the PB ADI estimator against the Repro confidence regions of \citet{awan2025simulation}. We see that our method has coverage closer to the nominal level and smaller average area compared to Repro, even in this multivariate setting.

\begin{table}[t]
    \centering
    \caption{ Results of the 95\% CIs and confidence regions for parameters of $N(\mu,\sigma^2)$. Coverage and width/area are reported with standard errors in parentheses.}
    \label{tab:pb:loc_scale_normal_multivariate}
    \vspace*{6pt}
    \small
    \begin{tabular}{l cccc}
    \toprule
    & \multicolumn{2}{c}{\bf Coverage} & \multicolumn{2}{c}{\bf Average Width} \\ 
    \cmidrule(lr){2-3} \cmidrule(lr){4-5}
    \textbf{Method} & $\mu$ & $\sigma$ & $\mu$ & $\sigma$ \\ 
    \midrule
    \multicolumn{5}{l}{\textit{Univariate Inference}} \\
    PB \underline{(adaptive indirect)} & 0.959 (0.006) & 0.951 (0.007) & 0.463 (0.003) & 0.580 (0.003) \\
    PB (naive percentile)           & 0.697 (0.015) & 0.006 (0.002) & 0.311 (0.001) & 0.293 (0.001) \\
    PB (simplified $t$)              & 0.869 (0.011) & 0.817 (0.012) & 0.311 (0.001) & 0.293 (0.001) \\
    PB \citep{ferrando2022parametric} & 0.808 (0.012) & 0.371 (0.015) & 0.311 (0.001) & 0.293 (0.001) \\
    PB (Efron's BC)                  & 0.854 (0.011) & 0.042 (0.006) & 0.298 (0.001) & 0.139 (0.002) \\
    PB (automatic percentile)        & 0.865 (0.011) & 0.126 (0.010) & 0.314 (0.001) & 0.261 (0.001) \\
    Repro \citep{awan2025simulation}  & 0.989 (0.003) & 0.998 (0.001) & 0.599 (0.003) & 0.758 (0.005) \\
    
    \midrule
    \multicolumn{5}{l}{\textit{Multivariate Inference}} \\
    & \multicolumn{2}{c}{\bf Coverage $(\mu, \sigma)$} & \multicolumn{2}{c}{\bf Average Area} \\
    \cmidrule(lr){2-3} \cmidrule(lr){4-5}
    PB \underline{(adaptive indirect)} & \multicolumn{2}{c}{0.943 (0.007)} & \multicolumn{2}{c}{0.339 (0.004)} \\
    Repro (Mahalanobis) & \multicolumn{2}{c}{0.971 (0.005)} & \multicolumn{2}{c}{0.3619 (0.004)} \\
    \bottomrule
    \end{tabular}
\end{table}


\begin{figure}[t]
    \centering
    \includegraphics[width=\linewidth]{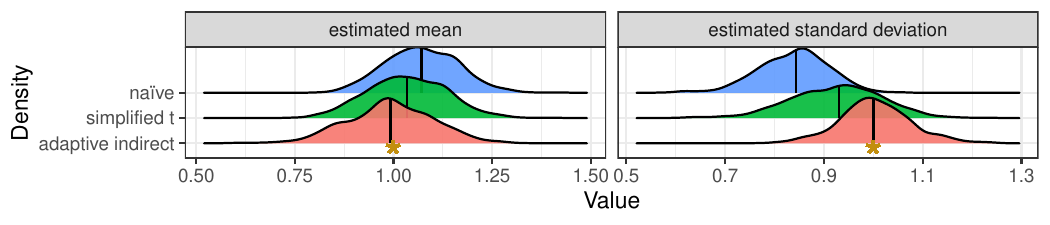}
    \caption{Comparison of the sampling distributions of different estimates. The vertical line denotes the median of each distribution, and the `$\ast$' sign denotes the true parameter value. The debiased estimator by \citet{ferrando2022parametric} is the same as the simplified $t$ estimator.}\label{fig:mean-std-bias}
\end{figure}

\subsection{Simple linear regression hypothesis testing}\label{sec:pb:HT}
We conduct private HTs for the slope coefficient in a simple linear regression model. Following 
\citet{alabi2022hypothesis}, 
for a dataset $D=\l((x_1,y_1),(x_2,y_2),\ldots,(x_n,y_n)\r)$, we assume the linear regression model $Y=X\beta_1^* + \beta_0^* + \epsilon$ where $X\sim N\l(\mu_x, \sigma_x^2\r)$ and $\epsilon\sim N\l(0, \sigma_e^2\r)$, and we will  test $H_0:\beta_1^*=0$ and $H_a:\beta_1^*\neq 0$. The parameters are $\theta := \l(\beta_1^*, \beta_0^*, \mu_x, \sigma_x, \sigma_e\r)$.  We set $\beta_0^*=-0.1$, $\mu_x = 0.5, \sigma_x = 1, \sigma_e = 1$ and we vary the true $\beta_1^*$ and sample size $n$. We use 200 bootstrap samples and 50 synthetic indirect estimate samples.     The privatized statistic $s:=\l(\tilde{x}, \tilde{y}, \widetilde{x^2}, \widetilde{y^2}, \widetilde{xy}\r)$, as defined below, satisfies $\mu$-GDP:
    \[
    \begin{aligned}[t]     
        \tilde{x} &:=\frac{1}{n} \sum_{i=1}^n [x_i ]_{-\Delta}^{\Delta}+\frac{2\Delta}{ (\mu/\sqrt{5}) n}u_1, \quad\quad
        \widetilde{x^2} :=\frac{1}{n}\sum_{i=1}^n {\l[x^2_i \r]_{0}^{\Delta^2}}+\frac{\Delta^2}{ (\mu/\sqrt{5}) n}u_2, \\
        \tilde{y} &:=\frac{1}{n}\sum_{i=1}^n { [y_i ]_{-\Delta}^{\Delta}}+\frac{2\Delta}{ (\mu/\sqrt{5}) n}u_3, \quad\quad
        \widetilde{xy} :=\frac{1}{n}\sum_{i=1}^n {[x_i y_i]_{-\Delta^2}^{\Delta^2}}+\frac{2\Delta^2}{ (\mu/\sqrt{5}) n}u_4, \\
        \widetilde{y^2} &:=\frac{1}{n}\sum_{i=1}^n{\l[y_i^2\r]_{0}^{\Delta^2}}+\frac{\Delta^2}{ (\mu/\sqrt{5}) n}u_5, \quad\quad \text{where}\quad u_i\iid N(0,1).
    \end{aligned}
    \]

    
    \citet{alabi2022hypothesis} used
    \[\hat\theta_0 := \l(0, \tilde y, \tilde x, \sqrt{\frac{n}{n-1}\max\l(0,\widetilde{x^2} - (\tilde x)^2\r)}, \sqrt{\frac{n}{n-1}\max\l(0,\widetilde{y^2} - (\tilde y)^2\r)}\r)\]
    to generate synthetic data $s_b$ and calculated the $F$-statistic, $F(s):=\frac{\l(\tilde \beta_1\r)^2 \l(n\l(\widetilde{x^2} - (\tilde x)^2\r)\r)}{\widetilde{S^2}}$ where  $\tilde\beta_1:=\frac{\widetilde{xy} - \tilde x \tilde y}{\widetilde{x^2} - (\tilde x)^2}$, $\tilde\beta_0:=\frac{\tilde y \cdot \widetilde{x^2} - \tilde x \cdot \widetilde{xy}}{\widetilde{x^2} - (\tilde x)^2}$, 
        $\widetilde{S^2}:=\frac{n\l(\widetilde{y^2} + \l(\tilde\beta_1\r)^2\widetilde{x^2} + \l(\tilde\beta_0\r)^2 - 2\tilde\beta_1 \widetilde{xy} - 2\tilde\beta_0\tilde y + 2\tilde\beta_1\tilde\beta_0\tilde x\r)}{n-2}$.
    The null hypothesis is rejected when $\widetilde{x^2} \geq (\tilde x)^2$, $\widetilde{y^2} \geq (\tilde y)^2$, $\widetilde{S^2} \geq 0$, and $F(s)$ is greater than the $(1-\alpha)$-quantile of $\{F(s_b)\}_{b=1}^B$.
    Figure \ref{fig:clamp_issues} has shown that the type I error of this test procedure is miscalibrated.



To compare our method to \citet{alabi2022hypothesis}, we set the clamping parameter $\Delta:=2$ and show the probabilities of rejecting $H_0$ in DP HTs with $1$-GDP guarantee. In Figure \ref{fig:HT_comparison},
the upper left subfigure are from the method by \citet{alabi2022hypothesis}, and there are two problems: 1) the type I error (when $\beta_1^*=0$), is larger than the significance level 0.05 when $n$ is large, and 2) the rejection probability is not always larger for larger $\beta^*_1$.
The first problem is caused by the bias in the naive estimator as in Figure \ref{fig:clamp_issues}, and the second problem may be due to the inefficiency of the private $F$-statistic.

\begin{figure}[t]
    \centering
    \includegraphics[width=\linewidth, trim={0 0 0 0},clip]{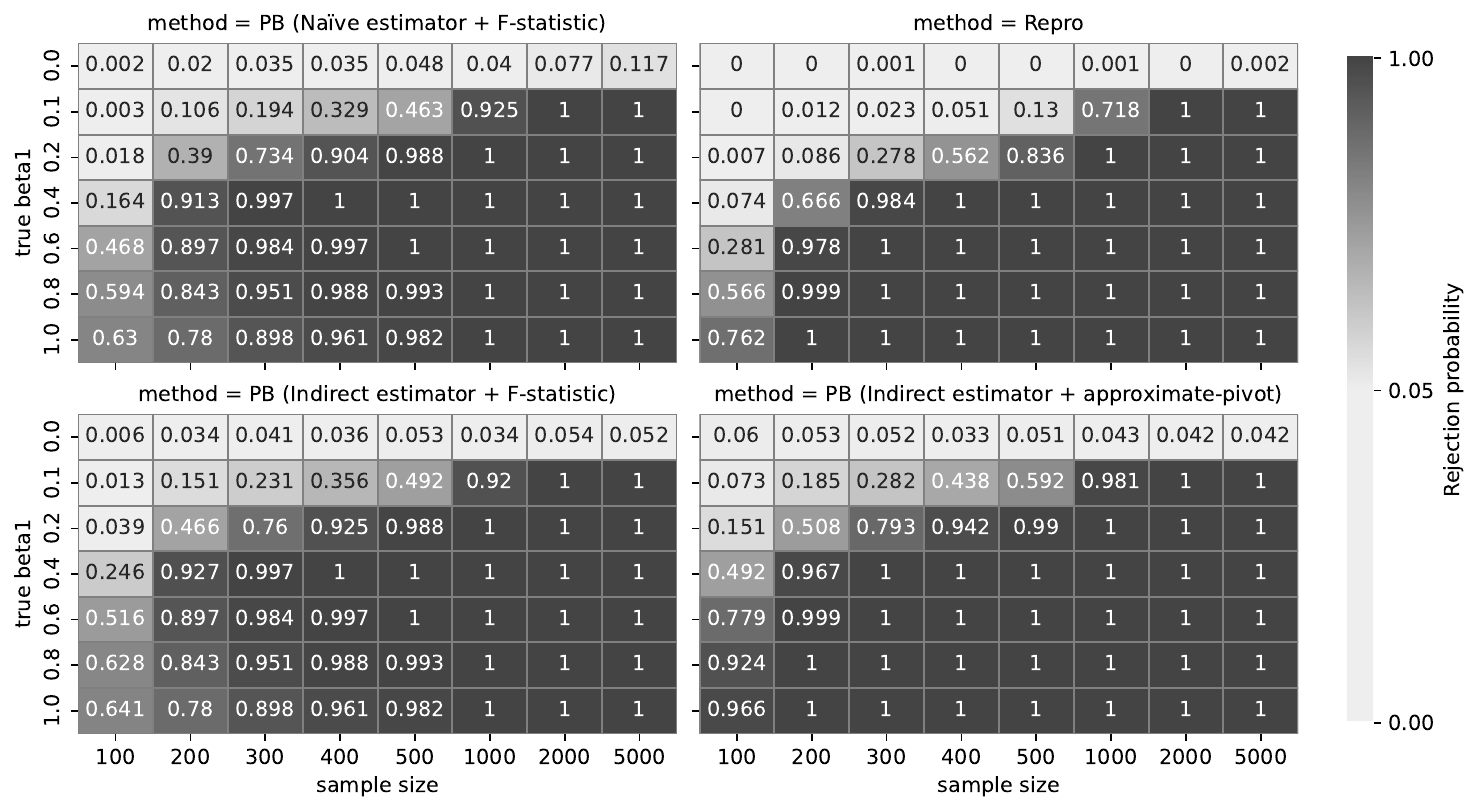}
    \caption{Comparison of the rejection probabilities on $H_0: \beta^*_1 = 0$ and $H_1: \beta^*_1 \neq 0$ for $Y = \beta^*_0 + X \beta^*_1 + \epsilon$ (significance level 0.05.)}\label{fig:HT_comparison}
    \label{fig:LR}
\end{figure}

For the results in the lower left subfigure, we replace the naive estimator with our adaptive indirect estimator, which solves the bias issue and controls the type I error. Although the power improves when using our estimator, the second problem mentioned above still remains.

As the $F$-statistic is equivalent to the $t$-statistic of $\beta_1^*$ in the non-private version of this linear model HT setting, we use the adaptive indirect estimator and its asymptotic distribution in Theorem \ref{thm:ada_indirect_est} to build a private $t$-statistic, which we also call the approximate pivot.
This approach follows Algorithm \ref{alg:indHT} with $\hat\tau(s):=(\hat\theta_{\mathrm{ADI}}(s))^{(1)}$, $\tau(\theta):=0$, and $\hat\sigma(s):=\sqrt{\reallywidehat{\Var}(\hat\tau(s))}$, where the computation of $\hat\sigma(s)$ is explained in Remark \ref{rmk:approx_pivot}. Additional details are found in Section \ref{app:linear_regresion_HT} in the appendix.
Note that in Algorithm \ref{alg:indHT}, we estimate $\theta$ in the whole parameter space rather than only under the null hypothesis, as discussed in Remark \ref{rmk:HT_estimate}, which is different from the DP Monte Carlo tests of \citet{alabi2022hypothesis}. 

From the lower-right subfigure of Figure \ref{fig:HT_comparison}, we can see that by using PB with our adaptive indirect estimator and the approximate pivot test statistic, the power improves further, and the second problem is solved, i.e., the power is higher when $\beta_1^*$ is larger. 

We also compare our method with Repro \citep{awan2025simulation} in the upper right subfigure of Figure \ref{fig:HT_comparison}, which guarantees the control of type I error in finite samples. Our method has better power than Repro while maintaining type I error control.  The reason that Repro has suboptimal performance is primarily due to two factors: 1) the Mahalanobis test statistic is likely sub-optimal and 2) repro results in a multivariate confidence set, which has simultaneous coverage, causing each confidence interval to have over-coverage.

\subsection{Logistic regression via objective perturbation}\label{s:logistic}
We adopt the experimental setup by \citet{awan2025simulation} to construct confidence intervals for a logistic regression model using the objective perturbation mechanism \citep{chaudhuri2011differentially}. We use the formulation by \citet{awan2021structure}, which adds noise to the gradient of the log-likelihood before optimizing, resulting in a non-additive noise. We assume that the predictor variables also need to be protected, so we privatize their first two moments using the optimal K-norm mechanism \citep{hardt2010geometry, awan2021structure}. See Section \ref{sec:logisticApp} in the Supplementary Materials for mechanism details.

We assume the predictor variable $x$ is naturally bounded between $[-1,1]$ and model $x$ in terms of the beta distribution: $x_i = 2z_i -1$ where $z_i
\overset{\mathrm{iid}}{\sim} \mathrm{Beta}(a^{\ast}, b^{\ast})$. A logistic regression models $y_i|x_i \sim \text{Bern}(\text{expit}(\beta^{\ast}_0+\beta^{\ast}_1x_i))$ where $\text{expit}(x) = \text{exp}(x)/(1+\text{exp}(x))$. The parameter is $\theta^{\ast} = (\beta^*_0, \beta^*_1, a^*, b^*)$, where we are interested in producing a confidence interval for $\beta^*_1$ and the other parameters are viewed as nuisance parameters. We set $a^{\ast} = b^{\ast} = 1, \beta^{\ast}_0 = 0.5, \beta^{\ast}_1 = 2, R = 200, \alpha = 0.05, n = 100, 200, 500, 1000, 2000$ and $\epsilon = 0.1, 0.3, 1, 3, 10$ in $\epsilon$-DP.  Note that since $a^*=b^*=1$, the true model is simply $x_i\iid U(-1,1)$. We use 200 parametric bootstrap samples and 200 synthetic indirect estimate samples. 

This setting has two unique challenges: 1) Since the $y$ value is discrete, we need to use a surrogate model in our indirect inference procedure. We use the normal distribution $N(p,p(1-p))$ to approximate $\mathrm{Bern}(p)$. 2) While clamping is not used in this mechanism, objective perturbation requires $\ell_2$ regularization, which causes the initial DP estimates of $\beta$ to be biased towards zero. 

\begin{figure}[t]
    \centering
    \includegraphics[width=1.0\linewidth, trim={0 0 0 0},clip]{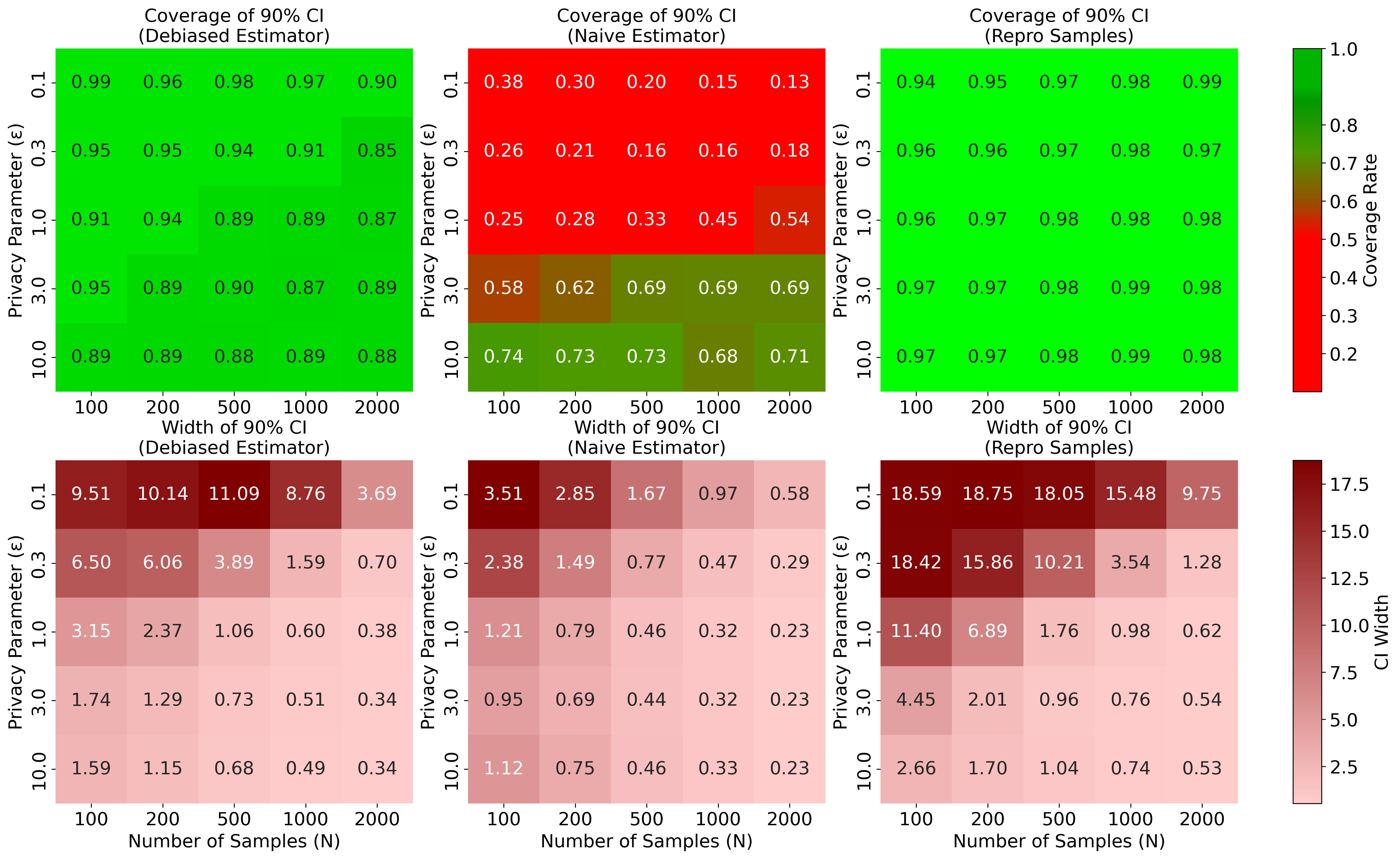}
    \caption{Comparison of the coverage and width of confidence intervals for the $\beta^*_1$ parameter in logistic regression at 90\% confidence using our proposed estimator with PB, naive plug-in estimator with PB, and Repro method.}\label{fig:logreg_comparison}
\end{figure}

In Figure 6, we compare the performance of our CIs against the Repro method from \citet{awan2025simulation} as well as a naive method which uses the DP output as the plug-in estimator in PB. 
Our method greatly improves the coverage of CIs compared to the naive method, and also achieves significantly smaller widths than the Repro method, even in cases where the coverage is similar. For smaller $\epsilon$ and $n$, our method tends to be conservative, which is preferred to undercovering. 

\section{Discussion}
\label{sec:pb:discussion}
Our novel simulation-based debiased estimator, the adaptive indirect estimator, corrects the clamping bias in DP mechanisms in a flexible and general way, which when combined with the parametric bootstrap results in consistent and powerful private statistical inference. 

While not unique to our work, one of the main limitations of the parametric bootstrap is the need for a  parametric model.  Due to the challenges of statistical inference on privatized data, parametric assumptions are common in the literature (e.g., \citealp{karwa2018finite,bernstein2018differentially,ju2022data,chen2025particle,awan2025simulation}). In Section \ref{sec:pb:non-smooth}, we show that surrogate models can be used to obtain valid inferences, illustrating some robustness to model mis-specification. Alternatively, \citet{ferrando2022parametric} combined asymptotic approximations with the parametric model to circumvent this issue in a linear regression setting. Further research could develop other methods to quantify the sensitivity to model mis-specification and/or allow for more flexible models. 


Another concern is the computational cost of optimization with respect to $\theta$ in the indirect estimator when $\theta$ is high-dimensional. Fortunately, our procedure scales well with the other parameters, as it is linear in both $R$ and $B$. If the summary $s$ can be computed in linear time, in terms of the sample size, then the procedure is linear in $n$ as well. We can roughly express the computational complexity of the ADI estimator as $O(R\cdot B\cdot S(n)\cdot p^{2.373}\cdot \mathrm{Opt}(d))$, where $S(n)$ is the time to compute $s$ on a sample of size $n$, $p^{2.373}$ is the fastest algorithm (to our knowledge) to invert a $p\times p$ matrix, which is needed for ADI \citep{alman2021refined}, and $\mathrm{Opt}(d)$ represents the time to solve the indirect inference optimization problem with a $d$-dimensional parameter, which depends on the setting.  A direction of future work is to explore gradient-based optimization tools with automatic differentiation to reduce the computational cost in the optimization step. Another direction is to identify the important dimensions of the parameter so that we can use a combination of an indirect estimator for these important dimensions and other debiased estimators for the other dimensions. Recent work related to this direction includes the composite indirect inference method \citep{gourieroux2018composite}, which used several auxiliary model criteria instead of a unified criterion $\rho_n$ to allow parallel optimization for better computational efficiency and speed.

While we aimed to make our assumptions as streamlined and interpretable as possible, the conditions remain highly technical. Future work could investigate alternative assumptions as well as special cases that are sufficient for our assumptions to hold. 

We showed in our numerical experiments that our methodology attains superior width compared to the repro methodology of \citet{awan2025simulation} while maintaining satisfactory coverage. However, \citet{awan2025simulation} has finite-sample coverage guarantees that hold under minimal assumptions, whereas our approach is asymptotic with several technical conditions.  Two main limitations in the repro methodology of \citet{awan2025simulation} are 1) there is no general prescription for an (even asymptotically) optimal test statistic and 2) by default the repro methodology results in a confidence set for all parameters, giving overly conservative coverage for the parameter of interest. One may consider whether the ideas developed in this paper could be incorporated in the repro framework to ideally obtain both provable coverage and asymptotically optimal width. 

\section*{Acknowledgements}
The authors gratefully acknowledge support from NSF awards SES-2150615 and SES-2610910.

\appendix
\section{Proofs and technical details}
\label{sec:pb:proof}
In this section, we provide proofs of the theoretical results in this paper for PB and indirect estimators, and we include the details of our simulation results.

\subsection{Proofs for Section \ref{sec:pb:background_pb}}\label{sec:pb:proof_bg}

The literature studying bootstrap used in HTs \citep{beran1986simulated, davidson1999size, davidson2006power}, mostly focused on the difference in power between a bootstrap test and one based on the true distribution, or the limit of power on a sequence of alternative hypotheses converging to the null hypothesis with rate $O(1/\sqrt{n})$. Our Lemma \ref{lem:CI_HT_consist} is simpler as it is only about the validity of PB inference.

\begin{proof}[Proof for Lemma \ref{lem:CI_HT_consist}]

    \noindent\textbf{(1) The consistency of PB confidence sets.}
    
By the following equality,
$$\mathbb{P}\l(\tau^* \in \l\{\hat\tau(s) - \hat\Sigma(s)^{\frac{1}{2}}\xi ~\Big|~ \|\xi\|_p \leq  \hat\xi_{1-\alpha}(s)\r\}\r) = \mathbb{P}\l(\|\hat\Sigma(s)^{-\frac{1}{2}}\l(\hat\tau(s) - \tau(\theta^*)\r)\|_p \leq  \hat\xi_{1-\alpha}(s)\r),$$ 
we can replace the $\frac{\hat\theta_n - \theta}{\hat\sigma_n}$ with $\|\hat\Sigma(s)^{-\frac{1}{2}}\l(\hat\tau(s) - \tau(\theta^*)\r)\|_p$ in the proof of Lemma 23.3 by \citet{van2000asymptotic} to obtain the consistency of bootstrap confidence sets.

\noindent\textbf{(2) The consistency of PB HT.}

By the duality between confidence sets and HTs, we have Corollary \ref{cor:HT_consist} for PB HTs being asymptotically of level $\alpha$.

\begin{corollary}[Asymptotic level of PB HTs]\label{cor:HT_consist}
    Suppose that for every $\theta^*\in\Theta_0$, there is a random variable $T_{\theta^*}$ with continuous CDF such that
    $\|\hat\Sigma(s)^{-\frac{1}{2}}\l(\hat\tau(s) - \tau(\theta^*)\r)\|_p \xrightarrow[]{d} T_{\theta^*}$, and the PB estimator $\hat\tau(s_b)$ is consistent where $s_b\sim \mathbb{P}_{\hat\theta(s)}(\cdot | s)$. 
    Let $\hat\xi_{1-\alpha}(s)$ be the $(1-\alpha)$-quantile of the distribution of $\l(\l\|\hat\Sigma(s_b)^{-\frac{1}{2}}\l(\hat\tau(s_b) - \tau(\hat\theta(s))\r)\r\|_p ~\Big|~ s\r)$. 
    Then the sequence of PB HTs with the test statistic $T_n(s):=\inf_{\theta^*\in\Theta_0}\|\hat\Sigma(s)^{-\frac{1}{2}}\l(\hat\tau(s) - \tau(\theta^*)\r)\|_p$ and the critical region $K_n(s):=(\hat\xi_{1-\alpha}(s), \infty)$ is asymptotically of level $\alpha$. 
\end{corollary}

    
    In Corollary \ref{cor:HT_consist}, we know that PB HTs are asymptotically of level $\alpha$. Therefore, according to Definition \ref{def:HT_consist}, we only need to prove that $\pi_n(\theta_1)\rightarrow 1$ for all $\theta_1\in\Theta_1$, i.e., for $s\sim \mathbb{P}_{\theta_1}$,
    \[
    \liminf_{n\rightarrow\infty}\mathbb{P}_{n, \theta_1}\l(\inf_{\theta^*\in\Theta_0} \|\hat\Sigma(s)^{-\frac{1}{2}}\l(\hat\tau(s) - \tau(\theta^*)\r)\|_p > \hat\xi_{1-\alpha}(s)\r) = 1.
    \]
    
    We denote the distribution of $\l(\l\|\hat\Sigma(s_b)^{-\frac{1}{2}}\l(\hat\tau(s_b) - \tau(\hat\theta(s))\r)\r\|_p ~\Big|~ s\r)$ as $F_n$ and the distribution of $T_{\theta_1}$ as $F$. As the PB estimator $\hat\tau(s_b)$ is consistent, $F_n$ converges weakly to $F$, and the quantile functions $F_n^{-1}$ converge to the quantile function $F^{-1}$ at every continuity point. Therefore, $\hat\xi_{1-\alpha}(s)=F_n^{-1}(1-\alpha)$ converges to $F^{-1}(1-\alpha)$ in probability, and to prove the result, we consider the subsequence of $\{\hat\xi_{1-\alpha}(s)\}_{n=1}^{\infty}$ that converges to $F^{-1}(1-\alpha)$ almost surely. From $\inf_{\theta\in\Theta_0}\l\|\tau(\theta_1) - \tau(\theta)\r\|_p > 0$, $\|\hat\Sigma(s)\|_p \xrightarrow{P} 0$, and $\|\hat\Sigma(s)^{-\frac{1}{2}}\l(\hat\tau(s) - \tau(\theta_1)\r)\|_p \xrightarrow{d} T_{\theta_1}$, we have 
    \(
    \inf_{\theta\in\Theta_0}\left\|\hat\Sigma(s)^{-\frac{1}{2}}\big(\tau(\theta_1)-\tau(\theta)\big)\right\|_p \xrightarrow{P} \infty.
    \)
    By the triangle inequality,
    \[
    \inf_{\theta\in\Theta_0}\left\|\hat\Sigma(s)^{-\frac{1}{2}}\big(\hat\tau(s)-\tau(\theta)\big)\right\|_p
    \ge
    \inf_{\theta\in\Theta_0}\left\|\hat\Sigma(s)^{-\frac{1}{2}}\big(\tau(\theta_1)-\tau(\theta)\big)\right\|_p
    -
    \left\|\hat\Sigma(s)^{-\frac{1}{2}}\big(\hat\tau(s)-\tau(\theta_1)\big)\right\|_p.
    \]
    The first term diverges to $\infty$ in probability, while the second term is $O_P(1)$, so the left-hand side diverges to $\infty$ in probability as well. Since $\hat\xi_{1-\alpha}(s)=O_P(1)$, it follows that
    \[
    \mathbb{P}_{n,\theta_1}\!\left(\inf_{\theta\in\Theta_0}\left\|\hat\Sigma(s)^{-\frac{1}{2}}\big(\hat\tau(s)-\tau(\theta)\big)\right\|_p > \hat\xi_{1-\alpha}(s)\right)\to 1,
    \]
    which proves the desired consistency.

\end{proof}

\begin{proof}[Proof for Proposition \ref{prop:parambootconsistency}]
    
    This proof mainly follows the proof by \citet[Theorem 1]{ferrando2022parametric}.

    For $s\sim \mathbb{P}_{\theta^*}$, let $\sqrt{n}(\hat\tau(s) - \tau(\theta^*)) \xrightarrow[]{d} V\sim H(\theta^*)$. As $n(\hat\Sigma(s)) \xrightarrow[]{P} \Sigma(\theta^*)$, by Slutsky's theorem, we know that $\|\hat\Sigma(s)^{-\frac{1}{2}}\l(\hat\tau(s) - \tau(\theta^*)\r)\|_p \xrightarrow[]{d} \lVert\Sigma(\theta^*)^{-\frac{1}{2}} V\rVert_p := T$. We denote $T\sim F_T$, and since $H(\theta^*)$ is continuous, $F_T$ is continuous as well. Therefore, from Definition \ref{def:boot_consist}, we only need to prove $\mathbb{P}_{\hat\theta(s)}\l(\l\|\hat\Sigma(s_b)^{-\frac{1}{2}}\l(\hat\tau(s_b) - \tau(\hat\theta(s))\r)\r\|_p \leq t ~\Big|~ s\r) \xrightarrow[]{P} F_T(t)$ for all $t$. 
    
    For $s\sim \mathbb{P}_{\theta^* + \frac{h_n}{\sqrt{n}}}$, we assumed $n \hat\Sigma(s) \xrightarrow[]{P} \Sigma(\theta^*)$ and we have $\sqrt{n}\left(\hat\tau(s) - \tau(\theta^* + \frac{h_n}{\sqrt{n}})\right) \xrightarrow[]{d} V$ as $\hat\tau$ is asymptotically equivariant. By Slutsky's theorem, we have 
    \[
    \mathbb{P}_{\theta^* + \frac{h_n}{\sqrt{n}}} \l(\l\|\hat\Sigma(s)^{-\frac{1}{2}}\l(\hat\tau(s) - \tau\l(\theta^* + \frac{h_n}{\sqrt{n}}\r)\r)\r\|_p
     \leq t\r) \xrightarrow[]{} F_T(t).
    \] 
    We use $\hat{h}_n := \sqrt{n}(\hat\theta(s) - \theta^*)$ which is $O_p(1)$ to replace $h_n$ above. Following \citet[Lemma 3]{ferrando2022parametric} (for easier reference, we included it as Lemma \ref{lemma:converge_in_prob} below), we have $\mathbb{P}_{\hat\theta(s)}\l(\l\|\hat\Sigma(s_b)^{-\frac{1}{2}}\l(\hat\tau(s_b) - \tau(\hat\theta(s))\r)\r\|_p \leq t ~\Big|~ s\r) \xrightarrow[]{P} F_T(t)$ where $s_b\sim \mathbb{P}_{\hat\theta(s)}$.
\end{proof}

\begin{lemma}[Lemma 3 by \citet{ferrando2022parametric}]\label{lemma:converge_in_prob}
    Suppose $g_n$ is a sequence of functions such that $g_n(h_n) \rightarrow 0$ for any fixed sequence $h_n = O(1)$. Then $g_n(\hat{h}_n)\xrightarrow[]{P} 0$ for every random sequence $\hat{h}_n = O_P (1)$.
\end{lemma}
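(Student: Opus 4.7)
My plan is to reduce the statement to a uniform-convergence claim on bounded sets, and then combine that claim with the stochastic boundedness of $\hat{h}_n$ via a standard truncation. Specifically, I first assert that the hypothesis implies $\sup_{\|h\| \leq M} |g_n(h)| \to 0$ for every $M > 0$. This is strictly stronger than the stated pointwise hypothesis, but turns out to be equivalent to it, and once it is established the remainder of the argument is elementary.

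To prove this uniform convergence I argue by contradiction. Suppose that for some $M > 0$ the supremum does not vanish; then there exist $c > 0$, a subsequence $\{n_k\}$, and points $h_{n_k}$ with $\|h_{n_k}\| \leq M$ and $|g_{n_k}(h_{n_k})| > c$. Extend $\{h_{n_k}\}_k$ to a deterministic sequence $\{h_n\}_{n \in \mathbb{N}}$ indexed by the full index set, for instance by setting $h_n := 0$ whenever $n \notin \{n_k\}$. The extended sequence is bounded (hence $O(1)$), so the hypothesis yields $g_n(h_n) \to 0$; but its subsequence $g_{n_k}(h_{n_k})$ is bounded below by $c$, which is the required contradiction.

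With uniform convergence on bounded balls in hand, fix $\epsilon, \delta > 0$. Since $\hat{h}_n = O_P(1)$, there exists $M$ such that $P(\|\hat{h}_n\| > M) < \delta$ for all $n$. Uniform convergence on $\{\|h\| \leq M\}$ supplies an $N$ with $\sup_{\|h\| \leq M} |g_n(h)| < \epsilon$ whenever $n \geq N$. Decomposing,
\begin{equation*}
P\bigl(|g_n(\hat{h}_n)| > \epsilon\bigr) \leq P\bigl(\|\hat{h}_n\| > M\bigr) + P\bigl(|g_n(\hat{h}_n)| > \epsilon,\ \|\hat{h}_n\| \leq M\bigr),
\end{equation*}
the second term vanishes for $n \geq N$, so $P(|g_n(\hat{h}_n)| > \epsilon) < \delta$ for all sufficiently large $n$. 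Since $\delta$ was arbitrary, $g_n(\hat{h}_n) \xrightarrow{P} 0$.

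The main obstacle is the first step: passing from pointwise convergence along every deterministic bounded sequence to uniform convergence on bounded balls. The converse is trivial, but the forward direction requires the contradiction above, in which the "bad" subsequence of evaluation points must be patched into a full deterministic bounded sequence on $\mathbb{N}$ before the hypothesis can be invoked. Everything beyond that step is a standard truncation argument of the type used to prove continuous-mapping-style results.
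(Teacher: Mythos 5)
Your proof is correct. The paper does not supply its own proof of this lemma---it is cited verbatim from \citet{ferrando2022parametric}---so there is no in-paper argument to compare against, but your self-contained proof is sound. The central move, recognizing that the hypothesis ``$g_n(h_n)\to 0$ along every deterministic $O(1)$ sequence'' is equivalent to uniform convergence $\sup_{\|h\|\le M}|g_n(h)|\to 0$ on each bounded ball, is exactly what is needed, and your contradiction argument handles it correctly by extracting near-maximizers along a bad subsequence and patching them into a full deterministic $O(1)$ sequence so the hypothesis can be invoked. The closing truncation using $\hat h_n=O_P(1)$ is the standard one and is carried out without error.
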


\subsection{Proofs for Section \ref{sec:pb:asymp_theory}}

For proving Theorems \ref{thm:indirect_est} and \ref{thm:ada_indirect_est}, we recall the consistency result of M-estimators \citep{van2000asymptotic} in Proposition \ref{prop:mestconsistency}.
\begin{proposition}[Consistency of M-estimator; \citet{van2000asymptotic}]\label{prop:mestconsistency}
    
    Let $M_n$ be random functions and let $M$ be a fixed function of $\theta$ such that for every $\ep>0$, we have $\sup _{\theta \in \Theta}\left|M_n(\theta)-M(\theta)\right| \xrightarrow{\mathrm{P}} 0$ and $\sup _{\theta: \|\theta - \theta^*\| \geq \varepsilon} M(\theta)<M(\theta^*)$.
    Then any sequence of estimators $\hat{\theta}_n$ with $M_n\left(\hat{\theta}_n\right) \geq M_n(\theta^*)-o_p(1)$ converges in probability to $\theta^*$. Note that a special choice of $\hat\theta_n$ is $\hat\theta_n:=\mathrm{argmax}_\theta M_n(\theta)$.
\end{proposition}

As $\hat{\theta}_{\mathrm{IND}}$ and $\hat{\theta}_{\mathrm{ADI}}$ both depend on the observed statistic $s$, we analyze the asymptotic distribution of $s$ in the following lemma.
\begin{restatable}[Asymptotic analysis of $s$]{lemma}{betadist}\label{lem:beta_dist}
    
    Under assumptions \textbf{(A1, A2, A6, A7)}, we have $\sqrt{n} (s - \beta^*) \xrightarrow{\mathrm{d}} (J^*)^{-1} v $ where $v\sim F_{\rho,u,\mathrm{DP}}^*$.
\end{restatable}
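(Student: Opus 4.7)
The plan is to establish this via the standard M-estimator argument: first consistency of $s$, then a first-order Taylor expansion of the gradient, and finally Slutsky to combine the ingredients.

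First I would show $s \xrightarrow{P} \beta^*$ by invoking Proposition \ref{prop:mestconsistency}. Assumption \textbf{(A1)} gives uniform convergence $\sup_{\beta\in\mathbb{B}}|\rho_n(\beta; D, u_{\mathrm{DP}}) - \rho_\infty(\beta; F_u, F_{\mathrm{DP}}, \theta^*)| \xrightarrow{P} 0$, and \textbf{(A2)} guarantees $\beta^* = b(\theta^*)$ is the unique minimizer of $\rho_\infty(\cdot; F_u, F_{\mathrm{DP}}, \theta^*)$ with continuity of $\rho_\infty$ in $\beta$, which, together with the compactness-style well-separation needed by the proposition, yields consistency of the M-estimator $s$.

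Next, since \textbf{(A2)} ensures $s$ and $\beta^*$ are interior to $\mathbb{B}$, and \textbf{(A6)}, \textbf{(A7)} provide continuous first and second derivatives of $\rho_n$ in $\beta$, the first-order condition gives $\frac{\partial \rho_n(s; D, u_{\mathrm{DP}})}{\partial \beta} = 0$. A Taylor expansion about $\beta^*$ yields
\[
0 \;=\; \frac{\partial \rho_n(\beta^*; D, u_{\mathrm{DP}})}{\partial \beta} \;+\; H_n(\tilde\beta)\,(s - \beta^*),
\]
where $H_n(\tilde\beta) := \frac{\partial^2 \rho_n(\tilde\beta; D, u_{\mathrm{DP}})}{(\partial\beta)(\partial\beta^\intercal)}$ is the Hessian evaluated at some $\tilde\beta$ on the segment between $s$ and $\beta^*$. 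Rearranging gives
\[
\sqrt{n}\,(s - \beta^*) \;=\; H_n(\tilde\beta)^{-1}\cdot\sqrt{n}\,\Bigl(-\tfrac{\partial \rho_n(\beta^*; D, u_{\mathrm{DP}})}{\partial \beta}\Bigr),
\]
assuming invertibility for $n$ large enough, which follows from $J^*$ being positive definite (implicit in \textbf{(A7)} since $\beta^*$ is an interior minimizer of a smooth limit).

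Finally, I would combine these. From the consistency step $\tilde\beta \xrightarrow{P} \beta^*$, and by continuity of the Hessian in $\beta$ together with \textbf{(A7)}, a standard argument (continuity plus convergence at the point $\beta^*$) gives $H_n(\tilde\beta) \xrightarrow{P} J^*$, so $H_n(\tilde\beta)^{-1} \xrightarrow{P} (J^*)^{-1}$ by the continuous mapping theorem. Assumption \textbf{(A6)} supplies $\sqrt{n}(-\partial \rho_n(\beta^*)/\partial\beta) \xrightarrow{d} v \sim F_{\rho,u,\mathrm{DP}}^*$. Slutsky's theorem then yields $\sqrt{n}(s-\beta^*) \xrightarrow{d} (J^*)^{-1} v$, as claimed.

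The step I expect to require the most care is the transition $H_n(\tilde\beta) \xrightarrow{P} J^*$: assumption \textbf{(A7)} only states pointwise convergence of the Hessian at $\beta^*$, and continuity at $\beta^*$ is stated per realization $(\beta, D, u_{\mathrm{DP}})$ rather than uniformly. One would either need to argue via subsequences and a diagonal/triangular-array continuity argument, or strengthen the interpretation of \textbf{(A7)} to something like stochastic equicontinuity on a shrinking neighborhood of $\beta^*$; a clean approach is to split $H_n(\tilde\beta) - J^* = [H_n(\tilde\beta) - H_n(\beta^*)] + [H_n(\beta^*) - J^*]$ and use continuity in $\beta$ together with $\tilde\beta \xrightarrow{P} \beta^*$ on the first piece and \textbf{(A7)} on the second.
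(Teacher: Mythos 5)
Your proof matches the paper's argument essentially step for step: consistency of $s$ via Proposition \ref{prop:mestconsistency} under \textbf{(A1, A2)}, then the first-order condition and a mean-value Taylor expansion of $\partial\rho_n/\partial\beta$ at $\beta^*$ with an intermediate point $\tilde\beta$, and finally Slutsky using \textbf{(A6)} and \textbf{(A7)}. The subtlety you flag about passing from pointwise convergence of the Hessian at $\beta^*$ plus continuity to $H_n(\tilde\beta)\xrightarrow{P}J^*$ is indeed glossed over in the paper (it writes $J^*+o_p(1)$ without comment), and the split $[H_n(\tilde\beta)-H_n(\beta^*)]+[H_n(\beta^*)-J^*]$ you propose is the right way to make it rigorous.
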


\begin{proof}[Proof for Lemma \ref{lem:beta_dist}]
    From \textbf{(A1, A2)} and Proposition \ref{prop:mestconsistency}, we know $s$ is a consistent estimator of $\beta^*$, i.e., $s = \beta^* + o_p(1)$. 
    As $s := \mathrm{argmax}_\beta \rho_n(\beta; D, u_{\mathrm{DP}})$, by \textbf{(A6)} and $s$ is an interior point in $\mathbb{B}$, we have the first-order condition 
    \[
    \frac{\partial \rho_n}{\partial \beta}(s, D, u_{\mathrm{DP}})=0.
    \]
     By \textbf{(A7)} and $s = \beta^* + o_p(1)$, we use Taylor expansion of $\frac{\partial \rho_n}{\partial \beta}(s, D, u_{\mathrm{DP}})$ on $s$ at $\beta^*$ and obtain 
    \[
    0 = \frac{\partial \rho_n}{\partial \beta}(\beta^*, D, u_{\mathrm{DP}}) + \frac{\partial^2 \rho_n}{(\partial \beta)(\partial \beta^\intercal)}(\beta^*, D, u_{\mathrm{DP}}) (s - \beta^*) + o_P(\|s - \beta^*\|_2).
    \]
    Using \textbf{(A7)}, we have 
    \[
    \begin{aligned}
        \sqrt{n} (s - \beta^*) &= -\l(\frac{\partial^2 \rho_n}{(\partial \beta)(\partial \beta^\intercal)}(\beta^*, D, u_{\mathrm{DP}})\r)^{-1} \sqrt{n}\l(\frac{\partial \rho_n}{\partial \beta}(\beta^*, D, u_{\mathrm{DP}}) + o_P(\|s - \beta^*\|_2)\r) \\
        &= \l(J^* + o_p(1)\r)^{-1} \sqrt{n}\l(-\frac{\partial \rho_n}{\partial \beta}(\beta^*, D, u_{\mathrm{DP}})\r) + o_P(\sqrt{n}\|s - \beta^*\|_2).
    \end{aligned} 
    \]
    By \textbf{(A6)}, we have $\l(J^* + o_p(1)\r)^{-1} \sqrt{n}\l(-\frac{\partial \rho_n}{\partial \beta}(\beta^*, D, u_{\mathrm{DP}})\r) = O_P(1)$, therefore, $\sqrt{n} (s - \beta^*) = O_P(1)$. By Slutsky's theorem, we have 
    $\sqrt{n} (s - \beta^*) \xrightarrow{\mathrm{d}} (J^*)^{-1} v,\quad v\sim F_{\rho,u,\mathrm{DP}}^*.$ 
\end{proof}

\begin{proof}[Proof for Theorem \ref{thm:indirect_est}]
    
    Our proof contains three steps corresponding to the three parts of Theorem \ref{thm:indirect_est}.
\\
    \noindent\textbf{(1) The consistency of $\hat{\theta}_{\mathrm{IND}}$.}
    
    From \textbf{(A3, A4)}, we have $\|s - b(\theta^*)\| \xrightarrow{\mathrm{P}} 0$, $\|\Omega_n - \Omega\|_F \xrightarrow{} 0$. Let 
    \[
    \begin{aligned}
        M(\theta) &:= -(b(\theta^*) -b(\theta))^\intercal \Omega (b(\theta^*) -b(\theta)), \\
        M_n(\theta) &:= -\left[s-\frac{1}{R} \sum_{r=1}^R s^r\left(\theta\right)\right]^{\intercal} \Omega_n\left[s-\frac{1}{R} \sum_{r=1}^R s^r\left(\theta\right)\right].
    \end{aligned}
    \] 
    From \textbf{(A3, A4)} again, we have $\sup_{\theta \in \Theta}\left|M_n(\theta)-M(\theta)\right| \xrightarrow{\mathrm{P}} 0$.
    As $\Omega$ is positive definite by \textbf{(A4)} and $b(\cdot)$ is one-to-one by \textbf{(A2)}, we also have $\sup _{\theta: \|\theta - \theta^*\| \geq \varepsilon} M(\theta) < 0 = M(\theta^*)$ for any $\varepsilon > 0$. By Proposition \ref{prop:mestconsistency}, $\hat{\theta}_{\mathrm{IND}}$ is a consistent estimator of $\theta^*$.
\\
    \noindent\textbf{(2) The asymptotic distribution of $\sqrt{n}(\hat{\theta}_{\mathrm{IND}} - \theta^*)$.}

    From \textbf{(A5)}, we know that $\left[s-\frac{1}{R} \sum_{r=1}^R s^r(\theta)\right]^{\intercal} \Omega_n\left[s-\frac{1}{R} \sum_{r=1}^R s^r(\theta)\right]$ is differentiable in $\theta$. 
    Therefore, 
    $\hat{\theta}_{\mathrm{IND}}$ being an interior point in $\Theta$ satisfies the first-order condition
    \[
    \left[\frac{1}{R} \sum_{r=1}^R \frac{\partial s^r(\hat{\theta}_{\mathrm{IND}})}{\partial \theta}\right]^{\intercal} \Omega_n\left[s-\frac{1}{R} \sum_{r=1}^R s^r(\hat{\theta}_{\mathrm{IND}})\right]=0.
    \]    
    
    As $\hat{\theta}_{\mathrm{IND}}$ is a consistent estimator of $\theta^*$, we have $\hat{\theta}_{\mathrm{IND}} = \theta^* + o_p(1)$.
    Using Taylor expansion of $s^r(\hat{\theta}_{\mathrm{IND}})$ on $\hat{\theta}_{\mathrm{IND}}$ at $\theta^*$, we have 
    $$
    \left[\frac{1}{R} \sum_{r=1}^R \frac{\partial s^r(\hat{\theta}_{\mathrm{IND}})}{\partial \theta}\right]^{\intercal} \Omega_n\left[s-\frac{1}{R} \sum_{r=1}^R \l(s^r(\theta^*)- \frac{\partial s^r(\theta^*)}{\partial \theta} (\hat{\theta}_{\mathrm{IND}} - \theta^*)\r) + o_P(\|\hat{\theta}_{\mathrm{IND}} - \theta^*\|_2)\right]=0.
    $$
    As $\frac{\partial s^r(\theta)}{\partial \theta}$ is continuous in $\theta$ by \textbf{(A5)} and $\Omega_n = \Omega + o(1)$ by \textbf{(A4)}, we know that
    \begin{align*}
    \left[\frac{1}{R} \sum_{r=1}^R \frac{\partial s^r(\theta^*)}{\partial \theta}+o_p(1)\right]^{\intercal} ({\Omega} + o_p(1)) &\Big[s-\frac{1}{R} \sum_{r=1}^R \l(s^r(\theta^*)- \frac{\partial s^r(\theta^*)}{\partial \theta} (\hat{\theta}_{\mathrm{IND}} - \theta^*)\r) 
    \\&+ o_P(\|\hat{\theta}_{\mathrm{IND}} - \theta^*\|_2)\Big]
    \end{align*}
    is zero. 
    Using \textbf{(A5)}, we have $\frac{1}{R} \sum_{r=1}^R \frac{\partial s^r(\theta^*)}{\partial \theta}=B^*+o_p(1)$ and
    $$
    \left[B^*+o_p(1)\right]^{\intercal} \left({\Omega} + o_p(1)\right) \left[s-\frac{1}{R} \sum_{r=1}^R s^r(\theta^*)-  B^*(\hat{\theta}_{\mathrm{IND}} - \theta^*) + o_P(\|\hat{\theta}_{\mathrm{IND}} - \theta^*\|_2)\right]=0.
    $$    
    Therefore,
    \begin{equation}
    \sqrt{n}(\hat{\theta}_{\mathrm{IND}} - \theta^*) =\left(\left((B^*)^{\intercal} {\Omega} B^*\right)^{-1} (B^*)^{\intercal} {\Omega} + o_p(1) \right) \sqrt{n} \left[s-\frac{1}{R} \sum_{r=1}^R s^r(\theta^*)\right] + o_P(\sqrt{n}\|\hat{\theta}_{\mathrm{IND}} - \theta^*\|_2).\label{eq:finite_dist}    
    \end{equation}
    
    From Lemma \ref{lem:beta_dist}, we have $\sqrt{n} (s - \beta^*) \xrightarrow{\mathrm{d}} (J^*)^{-1} v $ where $v\sim F_{\rho,u,\mathrm{DP}}^*$, and
    similarly, $\sqrt{n} (s^r(\theta^*) - \beta^*) \xrightarrow{\mathrm{d}} (J^*)^{-1} v $ where $v\sim F_{\rho,u,\mathrm{DP}}^*$. Using 
    \[s-\frac{1}{R} \sum_{r=1}^R s^r(\theta^*) = (s-\beta^*)-\frac{1}{R} \sum_{r=1}^R (s^r(\theta^*) - \beta^*),
    \]
    and the independence between $s$ and $s^r$, we have 
    \[
    \sqrt{n}\l(s-\frac{1}{R} \sum_{r=1}^R s^r(\theta^*)\r) \xrightarrow{\mathrm{d}} (J^*)^{-1}\l(v_0 - \frac{1}{R}\sum_{r=1}^R v_r\r),
    \] 
    where $v_r \iid F_{\rho,u,\mathrm{DP}}^*$ for $r=0,1,\ldots,R$.

    Therefore, we have $\sqrt{n}(\hat{\theta}_{\mathrm{IND}} - \theta^*) = O_P(1)$ and more specifically,
    \[
    \sqrt{n}(\hat{\theta}_{\mathrm{IND}} - \theta^*) \xrightarrow{\mathrm{d}} \left((B^*)^{\intercal} {\Omega} B^*\right)^{-1} (B^*)^{\intercal} {\Omega} (J^*)^{-1} (v_0 - \frac{1}{R}\sum_{r=1}^R v_r),
    \] 
    where $v_r \iid F_{\rho,u,\mathrm{DP}}^*$ for $r=0,1,\ldots,R$.
\\
    \noindent\textbf{(3) The asymptotic equivariance of $s$ and $\hat{\theta}_{\mathrm{IND}}$.}

    As $s$ and $\hat{\theta}_{\mathrm{IND}}$ depend on $\theta^*$ and random seeds, when we replace $\theta^*$ by $\theta_1$, we fix the random seeds and use $s(\theta_1)$ and $\hat{\theta}_{\mathrm{IND}}(\theta_1)$ to denote the corresponding $s$ and $\hat{\theta}_{\mathrm{IND}}$, respectively. 

    Based on the proof for the asymptotic distribution above,
    we replace $\theta^*$ by $\l(\theta^*+\frac{h_n}{\sqrt{n}}\r)$ and prove that the limiting distribution of $\sqrt{n}\l[\hat{\theta}_{\mathrm{IND}}\l(\theta^*+\frac{h_n}{\sqrt{n}}\r) - \l(\theta^*+\frac{h_n}{\sqrt{n}}\r)\r]$ is the same as the limiting distribution of $\sqrt{n}\l[\hat{\theta}_{\mathrm{IND}}(\theta^*) - \theta^*\r]$ using Equation (\ref{eq:finite_dist}). 
    
    From \textbf{(A5)}, $\frac{\partial b(\theta)}{\partial\theta}$ and $\frac{\partial s(\theta)}{\partial \theta}$ are continuous in $\theta$, and $\frac{\partial s(\theta)}{\partial \theta} = \frac{\partial b(\theta)}{\partial\theta} + o_p(1)$. Therefore, the change in $\l(\left((B^*)^{\intercal} {\Omega} B^*\right)^{-1} (B^*)^{\intercal} {\Omega}\r)$ due to $\frac{h_n}{\sqrt{n}}$ is $o(1)$, and by Taylor expansion on $s\l(\theta^* + \frac{h_n}{\sqrt{n}}\r)$,  we have $s\l(\theta^* + \frac{h_n}{\sqrt{n}}\r) = s(\theta^*) + \frac{\partial s(\theta^*)}{\partial \theta} \frac{h_n}{\sqrt{n}} + o_p(\|\frac{h_n}{\sqrt{n}}\|_2)$. Note that the convergence is only for $\theta^*$ instead of all possible $\theta \in \Theta$, so uniform convergence is not needed. Therefore,
    \begin{equation}
        \begin{aligned}
            s\l(\theta^* + \frac{h_n}{\sqrt{n}}\r) = s(\theta^*) + \frac{\partial b(\theta^*)}{\partial \theta}\frac{h_n}{\sqrt{n}} + o_p\l(\frac{\|h_n\|_2}{\sqrt{n}}\r)= s(\theta^*) + \frac{\partial b(\theta^*)}{\partial \theta}\frac{h_n}{\sqrt{n}}+o_p\l(\frac{1}{\sqrt{n}}\r).
        \end{aligned} \label{eq:asymp_eqi_ind_s}        
    \end{equation}
    By Equation (\ref{eq:asymp_eqi_ind_s}) and Taylor expansion on $b\l(\theta^* + \frac{h_n}{\sqrt{n}}\r)$, we have 
    $b\l(\theta^* + \frac{h_n}{\sqrt{n}}\r) = b(\theta^*) + \frac{\partial b(\theta^*)}{\partial \theta}\frac{h_n}{\sqrt{n}}+o_p\l(\frac{1}{\sqrt{n}}\r)$. Therefore, the part of $\frac{\partial b(\theta^*)}{\partial \theta}\frac{h_n}{\sqrt{n}}$ will cancel out in the following equation
    \[
    \begin{aligned}
        \sqrt{n}\l(s\l(\theta^* + \frac{h_n}{\sqrt{n}}\r) - b\l(\theta^* + \frac{h_n}{\sqrt{n}}\r)\r) = \sqrt{n}\l(s(\theta^*) - b(\theta^*)\r) + o_p\l(1\r),
    \end{aligned}
    \]     
    which means $\sqrt{n}\l(s\l(\theta^* + \frac{h_n}{\sqrt{n}}\r) - b\l(\theta^* + \frac{h_n}{\sqrt{n}}\r)\r)$ and $\sqrt{n}\l(s(\theta^*) - b(\theta^*)\r)$ have the same limiting distribution (Lemma \ref{lem:beta_dist}), i.e., $s$ is asymptotically equivariant.
    
    From Equation (\ref{eq:asymp_eqi_ind_s}), the part of $\frac{\partial b(\theta^*)}{\partial \theta}\frac{h_n}{\sqrt{n}}$ will also cancel out in the following equation
    \[
    \sqrt{n}\l(s\l(\theta^* + \frac{h_n}{\sqrt{n}}\r) -\frac{1}{R} \sum_{r=1}^R s^r\left(\theta^* + \frac{h_n}{\sqrt{n}}\right)\r) = \sqrt{n}\l(s(\theta^*)-\frac{1}{R} \sum_{r=1}^R s^r(\theta^*)\r) + o_p\l(1\r).
    \] 
    By Equation (\ref{eq:finite_dist}), we have
    \[
    \begin{aligned}
        &\sqrt{n}\l[\hat{\theta}_{\mathrm{IND}}\l(\theta^*+\frac{h_n}{\sqrt{n}}\r) - \l(\theta^*+\frac{h_n}{\sqrt{n}}\r)\r] \\
        =& \left(\left((B^*)^{\intercal} {\Omega} B^*\right)^{-1} (B^*)^{\intercal} {\Omega} + o_p(1) \right) \sqrt{n} \left[s\l(\theta^* + \frac{h_n}{\sqrt{n}}\r) - \frac{1}{R} \sum_{r=1}^R s^r\l(\theta^* + \frac{h_n}{\sqrt{n}}\r)\right]\\
        =&\l[((B^*)^{\intercal} {\Omega} B^*)^{-1} (B^*)^{\intercal} {\Omega} + o_p(1) \r] \l\{\sqrt{n} \l[s(\theta^*)-\frac{1}{R} \sum_{r=1}^R s^r(\theta^*)\r] + o_p(1)\r\}.
    \end{aligned}
    \]
    Therefore, 
    $\sqrt{n}\l[\hat{\theta}_{\mathrm{IND}}\l(\theta^*+\frac{h_n}{\sqrt{n}}\r) - \l(\theta^*+\frac{h_n}{\sqrt{n}}\r)\r]$ and $ \sqrt{n}\l(\hat{\theta}_{\mathrm{IND}} - \theta^*\r)$ have the same limiting distribution. 
\end{proof}

\begin{remark}\label{rmk:unif_convergence}
    The uniform convergence assumptions \textbf{(A1, A3)} seem strong, and are needed to derive the consistency of the indirect estimator as an M-estimator. They can be replaced by the compactness and continuity assumptions in \textbf{(A1', A3')}.
     Note that \textbf{(A1')} immediately implies \textbf{(A1)} since $\mathbb{B}$ is compact and $\rho_\infty$ is continuous in $\beta$. By \textbf{(A1, A2)}, and Proposition \ref{prop:mestconsistency}, we have $\|s^r(\theta)-b(\theta)\|\xrightarrow{\mathrm{P}} 0$ (pointwise convergence in probability); assuming \textbf{(A3')} as well, 
    then following \citet[Corollary 2.2]{newey1991uniform}, we also have \textbf{(A3)}. Thus, \textbf{(A1',A2, A3')} imply \textbf{(A1,A2,A3)}.
    
    \noindent \textbf{(A1')} {$\mathbb{B}$ is compact, $\rho_\infty(\beta; F_u, F_{\mathrm{DP}}, \theta^*)$ is a non-stochastic and continuous function in $\beta$, and for any $\beta\in\mathbb{B}$, $|\rho_n(\beta; D, u_{\mathrm{DP}}) - \rho_\infty(\beta; F_u, F_{\mathrm{DP}}, \theta^*)| \xrightarrow{\mathrm{P}} 0$. There is $C_n$ such that $C_n = O_p(1)$ and for all $\beta, \beta' \in \mathbb{B}$, $\|\rho_n(\beta; D, u_{\mathrm{DP}}) - \rho_n(\beta'; D, u_{\mathrm{DP}})\| \leq C_n \|\beta - \beta'\|$.} \\
    \noindent \textbf{(A3')} {$\Theta$ is compact, and $b(\theta)$ is continuous in $\theta$. There is $B_n$ such that $B_n = O_p(1)$ and for all $\theta, \theta' \in \Theta$, $\|s^r(\theta) - s^r(\theta')\| \leq B_n \|\theta - \theta'\|$ (global, stochastic Lipschitz condition).} 
\end{remark}

    

\begin{proof}[Proof for Theorem \ref{thm:ada_indirect_est}]
    
    Our proof contains four steps corresponding to the four parts of Theorem \ref{thm:ada_indirect_est}. To simplify our notations, we write $\hat{\theta}_{\mathrm{ADI}}$ as $\hat\theta$. Note that the following asymptotic results are based on $n\rightarrow\infty$ with $R_n\rightarrow\infty$.
\\
    \noindent\textbf{(1) The consistency of $\hat{\theta}$.}
    From \textbf{(A3)}, we have 
    $s \xrightarrow{\mathrm{P}} b(\theta^*)$. 
    From \textbf{(A8)}, we have
    $m^{R_n}(\theta) \xrightarrow{\mathrm{P}} b(\theta)$ and $nS^{R_n}(\theta) \xrightarrow{\mathrm{P}} \Sigma(\theta)$ uniformly for every $\theta$ as well.
    Therefore, $\|s - m^{R_n}(\theta)\|_2 \xrightarrow{\mathrm{P}} \|b(\theta^*) - b(\theta)\|_2$ uniformly for all $\theta$. From \textbf{(A2)}, $\|b(\theta^*) - b(\theta)\|_2 > 0$ if $\theta \neq \theta^*$. Therefore, uniformly for $\|\theta - \theta^*\|_2 \geq \epsilon > 0$ the objective function $(s - m^{R_n}(\theta))^{\intercal} (S^{R_n}(\theta))^{-1} (s - m^{R_n}(\theta))$ diverges to infinity in probability since $\Sigma(\theta) \succ 0$ and $(s - m^{R_n}(\theta))^{\intercal} (nS^{R_n}(\theta))^{-1} (s - m^{R_n}(\theta)) \xrightarrow{\mathrm{P}} (b(\theta^*) - b(\theta))^{\intercal} (\Sigma(\theta))^{-1} (b(\theta^*) - b(\theta)) \geq 0$, i.e., for any constants $C>0$, $\epsilon>0$, and $\delta>0$, there exists $N_1$ such that 
    \begin{equation}\label{eq:adi_consistency1}
    \mathbb{P}((s - m^{R_n}(\theta))^{\intercal} (S^{R_n}(\theta))^{-1} (s - m^{R_n}(\theta)) > C) > 1-\delta/2, \quad \forall n\geq N_1, \|\theta-\theta^*\|_2 \geq \epsilon.        
    \end{equation}
    
    
    From \textbf{(A8)} and Lemma \ref{lem:beta_dist}, we have $\|\sqrt{n}(s - \beta^*)\|_2=O_p(1)$ and $\|\sqrt{n}(m^{R_n}(\theta^*) - \beta^*)\|_2=o_p(1)$, and $\|(n S^{R_n}(\theta^*))^{-1}\|_2 = O_p(1)$. Therefore, 
    $
    (s - m^{R_n}(\theta^*))^{\intercal} (S^{R_n}(\theta^*))^{-1} (s - m^{R_n}(\theta^*)) = (\sqrt{n}(s - m^{R_n}(\theta^*)))^{\intercal} (n S^{R_n}(\theta^*))^{-1} (\sqrt{n}(s - m^{R_n}(\theta^*))) = O_p(1)
    $, i.e.,  
    for any $\delta > 0$, there exist $N_2$ and $C_1$ such that 
    \begin{equation}\label{eq:adi_consistency2}
    \mathbb{P}\l((s - m^{R_n}(\theta^*))^{\intercal} (S^{R_n}(\theta^*))^{-1} (s - m^{R_n}(\theta^*)) < C_1\r) > 1- \delta/2,\quad \forall n\geq N_2.   
    \end{equation}
    
    Combining the inequalities (\ref{eq:adi_consistency1}) and (\ref{eq:adi_consistency2}), there exists $C \geq C_1, N=\max(N_1, N_2)$ such that 
    \[
    \mathbb{P}((s - m^{R_n}(\theta^*))^{\intercal} (S^{R_n}(\theta^*))^{-1} (s - m^{R_n}(\theta^*)) < (s - m^{R_n}(\theta))^{\intercal} (S^{R_n}(\theta))^{-1} (s - m^{R_n}(\theta))) > 1- \delta,
    \]
    for all $n\geq N$ and $\|\theta-\theta^*\|_2 \geq \epsilon$. 
    Therefore, the minimizer of the objective function, $\hat{\theta}$, must be within $\epsilon$ of $\theta^*$ with probability $1-\delta$, i.e., $\lim_{n\rightarrow \infty }\mathbb{P}(\|\hat{\theta} - \theta^*\|_2 \geq \epsilon) = 0$.
\\
    \noindent\textbf{(2) The asymptotic distribution of $\sqrt{n}(\hat{\theta} - \theta^*)$.}
    
    
    From \textbf{(A8)} and Lemma \ref{lem:beta_dist}, we know $\|\sqrt{n}(m^{R_n}(\theta^*) - \beta^*)\|_2=o_p(1)$ and $\sqrt{n} (s - \beta^*) \xrightarrow{\mathrm{d}} (J^*)^{-1} v$ where $v\sim F_{\rho,u,\mathrm{DP}}^*$. Therefore, we have $\sqrt{n}\l(s-m^{R_n}(\theta^*)\r) \xrightarrow{\mathrm{d}} (J^*)^{-1}v$. Also from \textbf{(A8)}, we have $\Sigma(\theta^*)^{-1}=\Var[(J^*)^{-1} v]^{-1}$. Then we relate $(\hat{\theta} - \theta^*)$ to $\l(s-m^{R_n}(\theta^*)\r)$ to derive its asymptotic distribution.

    For fixed $n$, from \textbf{(A5)}, we know that the objective function $(s - m^{R_n}(\theta))^{\intercal} (S^{R_n}(\theta))^{-1} (s - m^{R_n}(\theta))$ is differentiable in $\theta$. We write $\theta=(\theta_1,\ldots, \theta_p)$ where $\theta_i$ denotes the $i$th entry of $\theta$.
    Then, as $\hat{\theta}$ is an interior point in $\Theta$ from \textbf{(A9)}, $\hat{\theta}$ satisfies the first-order conditions with respect to $\theta_i$,
    \begin{equation}
     -2\left(\frac{\partial m^{R_n}(\hat\theta)}{\partial \theta_i}\right)^{\intercal}  (n S^{R_n}(\hat\theta))^{-1} (s - m^{R_n}(\hat\theta)) + (s - m^{R_n}(\hat\theta))^{\intercal}  \left(\frac{\partial  (n S^{R_n}(\hat\theta))^{-1}}{\partial \theta_i}\right) (s - m^{R_n}(\hat\theta)) =0, \label{eq:ada_first0}
    \end{equation}
    where we replace $(S^{R_n}(\theta))^{-1}$ with $(n S^{R_n}(\theta))^{-1}$ since $(n S^{R_n}(\theta))^{-1} \xrightarrow{\mathrm{P}} \Sigma(\theta)^{-1}$ by \textbf{(A8)}.
    
    The left side of Equation (\ref{eq:ada_first0}) can be written as 
    \begin{equation}\label{eq:ada_first}
    \left[-2\left(\frac{\partial m^{R_n}(\hat\theta)}{\partial \theta_i}\right)^{\intercal} + (s - m^{R_n}(\hat\theta))^{\intercal} (n S^{R_n}(\hat\theta))^{-1} \left(\frac{\partial (n S^{R_n}(\hat\theta))}{\partial \theta_i}\right)\right] (n S^{R_n}(\hat\theta))^{-1} (s - m^{R_n}(\hat\theta)).    
    \end{equation}
    
    By \textbf{(A9)}, $\left(\frac{\partial m^{R_n}(\hat\theta)}{\partial \theta_i}\right) = B^*_i+o_p(1)$ where $B_i^*$ is the $i$th row of $B^*$, and $(n S^{R_n}(\hat{\theta}))^{-1} = O_p(1)$ and $\frac{\partial (n S^{R_n}(\hat{\theta}))}{\partial \theta_i} = O_p(1)$. 
    Furthermore, we have $s - m^{R_n}(\hat{\theta})=(s - \beta^*) - (m^{R_n}(\hat{\theta}) - b(\hat{\theta})) - (b(\hat{\theta}) - \beta^*) =o_p(1)$ since $s - \beta^* = o_p(1)$ by \textbf{(A3)}, $m^{R_n}(\hat{\theta}) - b(\hat{\theta}) = o_p(1)$ by \textbf{(A8)}, and $b(\hat{\theta}) - \beta^* = o_p(1)$ by the continuity of $b(\cdot)$ in \textbf{(A2)} and the consistency of $\hat\theta$.     
    Therefore, 
    \[
    \left[2\left( - \frac{\partial m^{R_n}(\hat\theta)}{\partial \theta_i}\right)^{\intercal} + (s - m^{R_n}(\hat\theta))^{\intercal} (n S^{R_n}(\hat\theta))^{-1} \left(\frac{\partial (n S^{R_n}(\hat\theta))}{\partial \theta_i}\right)\right] = ( -2B^*_i+o_p(1))^{\intercal}.
    \]
    
    By \textbf{(A8)} and the continuity of $\Sigma(\theta)$, we have $(n S^{R_n}(\hat\theta))^{-1} = \Omega^*+o_p(1)$. Then, Equation (\ref{eq:ada_first}) becomes 
    \(
    ( - 2B^*+o_p(1))^{\intercal} (\Omega^*+o_p(1)) (s-m^{R_n}(\hat{\theta}))=0.
    \)
    
    As $\hat{\theta}$ is a consistent estimator of $\theta^*$, we have $\hat{\theta} = \theta^* + o_p(1)$.  We use  a Taylor expansion of $m^{R_n}(\hat\theta)$ at $\theta^*$ to obtain
    $m^{R_n}(\hat\theta) = m^{R_n}(\theta^*) +  \frac{\partial m^{R_n}(\theta^*)}{\partial \theta} (\hat{\theta} - \theta^*) + o_P(\|\hat{\theta} - \theta^*\|_2)$.
    Equation (\ref{eq:ada_first}) becomes
    $
    \left[{-}2B^*+o_p(1)\right]^{\intercal} \left({\Omega^*} + o_p(1)\right) \left[s-m^{R_n}(\theta^*)-  B^*(\hat{\theta} - \theta^*) + o_P(\|\hat{\theta} - \theta^*\|_2)\right]=0,
    $
    which can be written as
    \begin{equation}
    \sqrt{n}(\hat{\theta} - \theta^*) =\left(\left((B^*)^{\intercal} {\Omega^*} B^*\right)^{-1} (B^*)^{\intercal} {\Omega^*} + o_p(1) \right) \sqrt{n} \left[s-m^{R_n}(\theta^*)\right]+ o_P(\sqrt{n}\|\hat{\theta} - \theta^*\|_2).\label{eq:ada_finite_dist}    
    \end{equation}
    
    Therefore, $\sqrt{n}(\hat{\theta} - \theta^*) = O_P(1)$ and
    \[
    \sqrt{n}(\hat{\theta} - \theta^*) \xrightarrow{\mathrm{d}} \left((B^*)^{\intercal} \Omega^* B^*\right)^{-1} (B^*)^{\intercal} \Omega^* (J^*)^{-1}v,
    \]
    where $v \sim F_{\rho,u,\mathrm{DP}}^*$ and $\Omega^*=\Sigma(\theta^*)^{-1}=\Var[(J^*)^{-1} v]^{-1}$.
\\
    \noindent\textbf{(3) The asymptotic equivariance of $s$ and $\hat{\theta}$.}
    
    Note that this part mainly follows the corresponding part in the proof for Theorem \ref{thm:indirect_est}.
    
    Based on the proof for the asymptotic distribution above,
    we replace $\theta^*$ by $\l(\theta^*+\frac{h_n}{\sqrt{n}}\r)$ and denote the corresponding $s$ and $\hat{\theta}$ by $s\l(\theta^*+\frac{h_n}{\sqrt{n}}\r)$ and $\hat{\theta}\l(\theta^*+\frac{h_n}{\sqrt{n}}\r)$, respectively. Then, we prove that the limiting distribution of $\sqrt{n}\l[\hat{\theta}\l(\theta^*+\frac{h_n}{\sqrt{n}}\r) - \l(\theta^*+\frac{h_n}{\sqrt{n}}\r)\r]$ is the same as the limiting distribution of $\sqrt{n}\l[\hat{\theta}(\theta^*) - \theta^*\r]$ using Equation (\ref{eq:ada_finite_dist}). 
    
    From \textbf{(A5)} and \textbf{(A8)}, $\frac{\partial b(\theta)}{\partial\theta}$, $\frac{\partial s(\theta)}{\partial \theta}$, and $\Sigma(\theta)$ are continuous in $\theta$, and $\frac{\partial s(\theta)}{\partial \theta} = \frac{\partial b(\theta)}{\partial\theta} + o_p(1)$. Therefore, the change in $\l(\left((B^*)^{\intercal} {\Omega^*} B^*\right)^{-1} (B^*)^{\intercal} {\Omega^*}\r)$ due to $\frac{h_n}{\sqrt{n}}$ is $o(1)$, and by  Taylor expansion on $s\l(\theta^* + \frac{h_n}{\sqrt{n}}\r)$, we have
    \begin{equation}
        \begin{aligned}
            s\l(\theta^* + \frac{h_n}{\sqrt{n}}\r) = s(\theta^*) + \frac{\partial s(\theta^*)}{\partial \theta}\frac{h_n}{\sqrt{n}} + o_p\l(\frac{\|h_n\|_2}{\sqrt{n}}\r) = s(\theta^*) + \frac{\partial b(\theta^*)}{\partial \theta}\frac{h_n}{\sqrt{n}}+o_p\l(\frac{1}{\sqrt{n}}\r).
        \end{aligned} \label{eq:asymp_eqi_adi_s}
    \end{equation}
    
    Using another Taylor expansion on $b\l(\theta^* + \frac{h_n}{\sqrt{n}}\r) = b(\theta^*) + \frac{\partial b(\theta^*)}{\partial \theta}\frac{h_n}{\sqrt{n}}+o_p\l(\frac{1}{\sqrt{n}}\r)$, we have
    \[
    \begin{aligned}
        \sqrt{n}\l(s\l(\theta^* + \frac{h_n}{\sqrt{n}}\r) - b\l(\theta^* + \frac{h_n}{\sqrt{n}}\r)\r) = \sqrt{n}\l(s(\theta^*) - b(\theta^*)\r) + o_p\l(1\r),
    \end{aligned}
    \] 
    which means $\sqrt{n}\l(s\l(\theta^* + \frac{h_n}{\sqrt{n}}\r) - b\l(\theta^* + \frac{h_n}{\sqrt{n}}\r)\r)$ and $\sqrt{n}\l(s(\theta^*) - b(\theta^*)\r)$ have the same limiting distribution (Lemma \ref{lem:beta_dist}).
    
    Similarly, we use Taylor expansion of $m^{R_n}(\theta^* + \frac{h_n}{\sqrt{n}})$ at $\theta^*$ to obtain
     $m^{R_n}(\theta^* + \frac{h_n}{\sqrt{n}}) = m^{R_n}(\theta^*) +  \frac{\partial m^{R_n}(\theta^*)}{\partial \theta} \frac{h_n}{\sqrt{n}} + o_P(\frac{\|h_n\|_2}{\sqrt{n}})$. By \textbf{(A8)}, we have $m^{R_n}(\theta^*) - b(\theta^*) = o_P(\frac{1}{\sqrt{n}})$. Therefore,
    \[
    \begin{aligned}
        \sqrt{n}\l(m^{R_n}\left(\theta^* + \frac{h_n}{\sqrt{n}}\right) - b\l(\theta^* + \frac{h_n}{\sqrt{n}}\r)\r) = \sqrt{n}\l(m^{R_n}(\theta^*) - b(\theta^*)\r) + o_p\l(1\r),
    \end{aligned}
    \] 
    
    By Equation (\ref{eq:ada_finite_dist}), we have
    \begin{equation}
        \begin{aligned}
        &\sqrt{n}\l[\hat{\theta}\l(\theta^*+\frac{h_n}{\sqrt{n}}\r) - \l(\theta^*+\frac{h_n}{\sqrt{n}}\r)\r] \\
        =&\l[((B^*)^{\intercal} {\Omega}^* B^*)^{-1} (B^*)^{\intercal} {\Omega}^* + o_p(1)\r] \l[\sqrt{n}(s-m^{R_n}(\theta^*)) + o_p(1)\r].
        \end{aligned}\label{eq:ada_asym_eqv}    
    \end{equation}
    By comparing Equation (\ref{eq:ada_asym_eqv}) to (\ref{eq:ada_finite_dist}), we know that $\sqrt{n}\l[\hat{\theta}\l(\theta^*+\frac{h_n}{\sqrt{n}}\r) - \l(\theta^*+\frac{h_n}{\sqrt{n}}\r)\r]$ and $ \sqrt{n}(\hat{\theta} - \theta^*)$ have the same limiting distribution. 
\\
    \noindent\textbf{(4) The optimal asymptotic variance of $\hat{\theta}_{\mathrm{ADI}}$.}
    
    We let $R$ be a constant and define $\Omega_0 := \Var[(J^*)^{-1} (v_0 - \frac{1}{R}\sum_{r=1}^{R} v_r)]^{-1}=(1+\frac{1}{R})^{-1} \Omega^*$ since $v_r \iid F_{\rho,u,\mathrm{DP}}^*$ for $r=0,1,\ldots,R$. We follow the idea by \citet{gourieroux1993indirect}: From the Gauss-Markov theorem for weighted least squares, in $\left((B^*)^{\intercal} {\Omega} B^*\right)^{-1} (B^*)^{\intercal} {\Omega} (J^*)^{-1} (v_0 - \frac{1}{R}\sum_{r=1}^{R} v_r)$, by treating $B^*$ as the covariates, $(J^*)^{-1} (v_0 - \frac{1}{R}\sum_{r=1}^{R} v_r)$ as the responses, and $\Omega$ as the weights, we know that for any $\Omega$,
    $$
    \begin{aligned}
        &~ \Var\l[\lim_{n\rightarrow \infty}\sqrt{n}(\hat{\theta}_{\mathrm{IND}} - \theta^*)\r] \\
        = &~ \Var\l[\left((B^*)^{\intercal} {\Omega} B^*\right)^{-1} (B^*)^{\intercal} {\Omega} (J^*)^{-1} \l(v_0 - \frac{1}{R}\sum_{r=1}^{R} v_r\r)\r] \\
        \succeq &~ \Var\l[\left((B^*)^{\intercal} {\Omega_0} B^*\right)^{-1} (B^*)^{\intercal} {\Omega_0} (J^*)^{-1} \l(v_0 - \frac{1}{R}\sum_{r=1}^{R} v_r\r)\r] \\
        = &~ \left((B^*)^{\intercal} {\Omega_0} B^*\right)^{-1} (B^*)^{\intercal} {\Omega_0} {\Omega_0}^{-1} {\Omega_0} (B^*) \left((B^*)^{\intercal} {\Omega_0} B^*\right)^{-1} \\
        = &~\left((B^*)^{\intercal} {\Omega_0} B^*\right)^{-1} = \l(1+\frac{1}{R}\r)\left((B^*)^{\intercal} {\Omega^*} B^*\right)^{-1}.
    \end{aligned}
    $$ 
    On the other hand, we know that 
    $$
    \begin{aligned}
    &~ \Var\l[\lim_{n\rightarrow \infty}\sqrt{n}(\hat{\theta}_{\mathrm{ADI}} - \theta^*)\r] \\
    = &~ \Var\l[\left((B^*)^{\intercal} {\Omega^*} B^*\right)^{-1} (B^*)^{\intercal} {\Omega^*} (J^*)^{-1} v\r] \\ 
    = &~ \left((B^*)^{\intercal} {\Omega^*} B^*\right)^{-1} \left((B^*)^{\intercal} {\Omega^*}{\Omega^*}^{-1}{\Omega^*} B^*\right) \left((B^*)^{\intercal} {\Omega^*} B^*\right)^{-1} \\
    = &~ \left((B^*)^{\intercal} {\Omega^*} B^*\right)^{-1}.
    \end{aligned}
    $$
    Therefore,
    $$
    \Var\l[\lim_{n\rightarrow \infty}\sqrt{n}(\hat{\theta}_{\mathrm{IND}} - \theta^*)\r] - \Var\l[\lim_{n\rightarrow \infty}\sqrt{n}(\hat{\theta}_{\mathrm{ADI}} - \theta^*)\r] \succeq 0.
    $$

    We adopt the idea by \citet[Proposition 1(iv)]{jiang2004indirect} to prove that $\hat{\theta}_{\mathrm{ADI}}$ has the smallest asymptotic variance among all consistent estimators $\psi(s)$ of $\theta$ where $\psi$ is continuously differentiable at $\beta^*$. 
    As $\psi(s)$ is a consistent estimator to $\theta^*$, we have $\psi(s) = \theta^* + o_P(1)$. As $\psi$ is continuous, we have $\psi(s) = \psi(\beta^*) + o_P(1)$.  Since $\psi(\beta^*)$ and $\theta^*$ are deterministic, this implies that $\psi(\beta^*) = \theta^*$ for all $\theta^*$; that is, $\psi\circ b$ is the identity function. We denote $\frac{\partial \psi(\beta^*)}{\partial s}$ by $\psi' \in \mathbb{R}^{q\times p}$ and $\frac{\partial b(\theta^*)}{\partial \theta}$ by $B^* \in \mathbb{R}^{p\times q}$. As $b$ is differentiable at $\theta^*$ and $\psi$ is differentiable at $b(\theta^*)$, we know that $\l(\frac{\partial \psi(\beta^*)}{\partial s}\r) \l(\frac{\partial b(\theta^*)}{\partial \theta}\r) = \psi' B^* = I$. 
    Using the delta method on $\psi(s)$ and Lemma \ref{lem:beta_dist}, we know that $\sqrt{n}(\psi(s) - \theta^*) \xrightarrow{{\mathrm{d}}} \psi' (J^*)^{-1} v $ where $v\sim F_{\rho,u,\mathrm{DP}}^*$. Therefore, $\Var\l[\lim_{n\rightarrow \infty}\sqrt{n}(\psi(s) - \theta^*)\r] = \psi' (\Omega^*)^{-1} (\psi')^\intercal$. As  
    \[
    \begin{aligned}
          &~ \psi' (\Omega^*)^{-1} (\psi')^\intercal - \l((B^*)^{\intercal} {\Omega^*} B^*\r)^{-1} \\ 
        = &~ \l(\psi' - \l((B^*)^{\intercal} {\Omega^*} B^*\r)^{-1} (B^*)^{\intercal} {\Omega^*}\r) (\Omega^*)^{-1} \l(\psi' - \l((B^*)^{\intercal} {\Omega^*} B^*\r)^{-1} (B^*)^{\intercal} {\Omega^*}\r)^\intercal,      
    \end{aligned}
    \]
    which is positive semidefinite, we have 
    \[
    \Var\l[\lim_{n\rightarrow \infty}\sqrt{n}(\psi(s) - \theta^*)\r] \succeq \Var\l[\lim_{n\rightarrow \infty}\sqrt{n}(\hat{\theta}_{\mathrm{ADI}} - \theta^*)\r].
    \]

\end{proof}

To prove Theorem \ref{thm:indirect_inf_valid_est}, we use the uniform delta method \citep[Theorem 3.8]{van2000asymptotic} which is included as Lemma \ref{lem:deltamethod} for easier reference.
\begin{restatable}[Uniform Delta method; \citet{van2000asymptotic}]{lemma}{deltamethod}\label{lem:deltamethod}
    
    Let $\phi:\mathbb{R}^k \rightarrow \mathbb{R}^m$ be a map defined and continuously differentiable in a neighborhood of $\theta$. Let $T_n$ be random vectors taking their values in the domain of $\phi$. If $r_n(T_n - \theta_n) \xrightarrow{\mathrm{d}} T$ for vectors $\theta_n\rightarrow \theta$ and numbers $r_n \rightarrow \infty$, then $r_n(\phi(T_n) - \phi(\theta_n)) \xrightarrow{\mathrm{d}} \phi'(\theta) T$. Moreover, the difference between $r_n(\phi(T_n) - \phi(\theta_n))$ and $\phi'(\theta)(r_n(T_n - \theta_n))$ converges to zero in probability.
\end{restatable}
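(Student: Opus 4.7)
The plan is to reduce the uniform delta method to two facts: (i) $T_n$ itself converges to $\theta$ in probability, and (ii) continuous differentiability of $\phi$ in a neighborhood of $\theta$ yields a uniform linearization bound, which we then feed into Slutsky's theorem. The statement is standard (e.g., van der Vaart, Theorem 3.8), but I will organize the proof so the ``uniform'' aspect (i.e., $\theta_n$ varying rather than being fixed at $\theta$) is handled cleanly.

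First I would verify $T_n \xrightarrow{P} \theta$. Since $r_n(T_n - \theta_n) \xrightarrow{d} T$, it is in particular stochastically bounded, so $T_n - \theta_n = O_P(1/r_n) = o_P(1)$ because $r_n \rightarrow \infty$. Combined with $\theta_n \rightarrow \theta$, this gives $T_n \xrightarrow{P} \theta$. Next, because $\phi$ is continuously differentiable on some open neighborhood $U$ of $\theta$, for every $\epsilon > 0$ there exists $\delta > 0$ such that the closed ball $\overline{B}(\theta, \delta) \subset U$ and $\sup_{y \in \overline{B}(\theta,\delta)} \|\phi'(y) - \phi'(\theta)\| \leq \epsilon$. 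For $x, y \in \overline{B}(\theta, \delta)$, the fundamental theorem of calculus along the segment from $y$ to $x$ gives
\[
\phi(x) - \phi(y) - \phi'(\theta)(x-y) = \int_0^1 \bigl(\phi'(y + t(x-y)) - \phi'(\theta)\bigr)(x-y)\, dt,
\]
so $\|\phi(x) - \phi(y) - \phi'(\theta)(x-y)\| \leq \epsilon \|x - y\|$.

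Now I would apply this estimate with $x = T_n$ and $y = \theta_n$. On the event $\{T_n, \theta_n \in \overline{B}(\theta, \delta)\}$, whose probability tends to $1$ by step (i), we have
\[
\bigl\|r_n(\phi(T_n) - \phi(\theta_n)) - \phi'(\theta)\, r_n(T_n - \theta_n)\bigr\| \leq \epsilon \cdot r_n \|T_n - \theta_n\| = \epsilon \cdot \|r_n(T_n - \theta_n)\|.
\]
Since $r_n(T_n - \theta_n) = O_P(1)$, the right-hand side is $O_P(\epsilon)$, and as $\epsilon > 0$ was arbitrary we conclude
\[
r_n\bigl(\phi(T_n) - \phi(\theta_n)\bigr) - \phi'(\theta)\, r_n(T_n - \theta_n) \xrightarrow{P} 0,
\]
which is exactly the ``moreover'' claim. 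Finally, by the continuous mapping theorem applied to the linear map $v \mapsto \phi'(\theta)v$, we have $\phi'(\theta)\, r_n(T_n - \theta_n) \xrightarrow{d} \phi'(\theta)T$, and Slutsky's theorem combined with the displayed $o_P(1)$ bound yields $r_n(\phi(T_n) - \phi(\theta_n)) \xrightarrow{d} \phi'(\theta)T$.

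The main technical point to be careful about is the uniformity: one must not invoke differentiability only at $\theta$ (which gives a linearization error $o(\|T_n - \theta\|)$ but with an $o$-term that could depend on a fixed center) and instead use continuous differentiability on a neighborhood to get the single $\epsilon$-Lipschitz-type bound around $\phi'(\theta)$ that applies simultaneously to $x = T_n$ and $y = \theta_n$. Once that neighborhood estimate is in hand, the rest is a routine stochastic-boundedness plus Slutsky argument, so I do not anticipate any further obstacles.
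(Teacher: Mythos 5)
Your proof is correct. Note, however, that the paper does not supply its own proof of this lemma---it is a verbatim citation of van der Vaart's Theorem~3.8, included for reference only---so there is no in-paper argument to compare against; your proposal reconstructs the standard proof behind that citation.

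Two small observations. First, your integral-remainder linearization bound $\|\phi(x)-\phi(y)-\phi'(\theta)(x-y)\|\leq\epsilon\|x-y\|$ on the convex ball $\overline{B}(\theta,\delta)$ is exactly the mechanism by which continuous differentiability upgrades the pointwise $o(\cdot)$ of the ordinary delta method to the \emph{uniform-in-center} bound needed here; you have correctly identified that as the crux of the ``uniform'' version. Second, the passage ``the right-hand side is $O_P(\epsilon)$, and as $\epsilon>0$ was arbitrary we conclude \ldots\ $\xrightarrow{P}0$'' compresses a quantifier dance that is worth being explicit about, since $\delta$ (and hence the good event $A_n=\{T_n,\theta_n\in\overline{B}(\theta,\delta)\}$) depends on $\epsilon$: fix $\eta,\gamma>0$, pick $M$ with $\limsup_n\mathbb{P}(\|r_n(T_n-\theta_n)\|>M)\leq\gamma$, then choose $\epsilon<\eta/M$ (which determines $\delta$ and $A_n$), and bound $\mathbb{P}(\|r_n(\phi(T_n)-\phi(\theta_n))-\phi'(\theta)r_n(T_n-\theta_n)\|>\eta)\leq\mathbb{P}(A_n^c)+\mathbb{P}(\|r_n(T_n-\theta_n)\|>M)$, whose $\limsup$ is at most $\gamma$. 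Since $\mathbb{P}(A_n)\to 1$ regardless of the size of $\delta$, this circular dependence causes no trouble, and the conclusion follows. This is not a gap, just a point where the write-up slightly elides the order of quantifiers.
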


\begin{proof}[Proof for Theorem \ref{thm:indirect_inf_valid_est}]
    
    In this proof, we denote $\hat{\theta}_{\mathrm{IND}}$ and $\hat{\theta}_{\mathrm{ADI}}$ as $\hat\theta$ for simplicity.
    
    By Theorems \ref{thm:indirect_est} and \ref{thm:ada_indirect_est}, $\hat\theta$ and $s$ are asymptotically equivariant, and we further prove that $\eta_1(\hat\theta)$ and $\eta_2(s)$ are asymptotically equivariant. From the asymptotic equivariance of $\hat\theta$, we know that $\sqrt{n}(\hat\theta - \theta^*)$ has the same asymptotic distribution as $\sqrt{n}(\hat\theta(\theta^* + \frac{h_n}{\sqrt{n}}) - (\theta^* + \frac{h_n}{\sqrt{n}}) )$. We denote this asymptotic distribution by $F_T$ and let $T\sim F_T$. In Lemma \ref{lem:deltamethod}, we let $r_n := \sqrt{n}$, $T_n := \hat\theta$, and $\phi := \eta_1$. Then, $\sqrt{n}(\eta_1(\hat\theta) - \eta_1(\theta^*)) \xrightarrow{\mathrm{d}} \eta_1'(\theta^*) T$ and $\sqrt{n}(\eta_1(\hat\theta(\theta^* + \frac{h_n}{\sqrt{n}})) - \eta_1(\theta^* + \frac{h_n}{\sqrt{n}})) \xrightarrow{\mathrm{d}} \eta_1'(\theta^*) T$, which means that $\eta_1(\hat\theta)$ is asymptotically equivariant. Similarly, from the asymptotic equivariance of $s$, we have that $\eta_2(s)$ is asymptotically equivariant.

    As $\hat\theta$ is asymptotically equivariant, we know $\hat\theta(s(\theta^* + \frac{h_n}{\sqrt{n}})) \xrightarrow{P} \theta^*$. Therefore, by the continuous mapping theorem \citep[Theorem 2.3]{van2000asymptotic},  $n\hat\Sigma(s(\theta^* + \frac{h_n}{\sqrt{n}})):=\eta_3(\hat\theta(s(\theta^* + \frac{h_n}{\sqrt{n}}))) \xrightarrow{P} \eta_3(\theta^*)$.
    
    Based on the above results on  $(\hat\tau,\hat\Sigma)$, using Proposition \ref{prop:parambootconsistency}, we know that the PB estimators with the choices of  $(\hat\tau,\hat\Sigma)$ mentioned in Theorem \ref{thm:indirect_inf_valid_est} are consistent.
\end{proof}

\subsection{Proofs for Section \ref{sec:pb:non-smooth}}

Similar to Lemma \ref{lem:beta_dist}, we can prove Lemma \ref{lem:beta_dist_smooth} by replacing $s$ with $s^r(\theta^*)$.
\begin{restatable}[Asymptotic analysis of $s^r(\theta^*)$]{lemma}{betadist}\label{lem:beta_dist_smooth}
    Under assumptions \textbf{(A1, A1s, A2, A6, A6s, A7, A7s)}, we have $\sqrt{n} (s^r(\theta^*) - \beta^*) \xrightarrow{\mathrm{d}} (J^*)^{-1} v $ where $v\sim F_{\rho,u,\mathrm{DP}}^*$.
\end{restatable}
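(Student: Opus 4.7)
The plan is to mirror the proof of Lemma \ref{lem:beta_dist} essentially verbatim, replacing $(s, D, u_{\mathrm{DP}})$ by $(s^r(\theta^*), D_{\mathrm{smooth}}^r(\theta^*), u_{\mathrm{DP}}^r)$ throughout, and substituting the ``smoothed'' assumptions \textbf{(A1s, A6s, A7s)} wherever the original proof invokes the corresponding conditions at $\beta^*$ tied to the original data $D$.

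First I would establish the consistency $s^r(\theta^*) \xrightarrow{\mathrm{P}} \beta^*$. By \textbf{(A1s)}, $\rho_n(\beta; D_{\mathrm{smooth}}^r(\theta^*), u_{\mathrm{DP}}^r)$ converges uniformly in probability over $\beta \in \mathbb{B}$ to $\rho_\infty(\beta; F_u, F_{\mathrm{DP}}, \theta^*)$, and by \textbf{(A2)} the latter is continuous with unique interior minimizer $\beta^* = b(\theta^*)$. Proposition \ref{prop:mestconsistency} then delivers the consistency.

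Next I would derive the limiting distribution via a standard Taylor argument. Since $s^r(\theta^*)$ is an interior point and by \textbf{(A6)} the gradient exists and is continuous, the first-order condition gives
$$\frac{\partial \rho_n}{\partial \beta}(s^r(\theta^*); D_{\mathrm{smooth}}^r(\theta^*), u_{\mathrm{DP}}^r) = 0.$$
Using \textbf{(A7)} to expand around $\beta^*$, one obtains
$$\sqrt{n}\frac{\partial \rho_n}{\partial \beta}(\beta^*; D_{\mathrm{smooth}}^r(\theta^*), u_{\mathrm{DP}}^r) + \frac{\partial^2 \rho_n}{(\partial \beta)(\partial \beta^\intercal)}(\bar\beta; D_{\mathrm{smooth}}^r(\theta^*), u_{\mathrm{DP}}^r)\, \sqrt{n}(s^r(\theta^*) - \beta^*) = 0,$$
where $\bar\beta$ lies componentwise between $\beta^*$ and $s^r(\theta^*)$ and hence $\bar\beta \xrightarrow{\mathrm{P}} \beta^*$. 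Combining \textbf{(A7s)} with the continuity of the Hessian in $\beta$ from \textbf{(A7)} yields $\frac{\partial^2 \rho_n}{(\partial \beta)(\partial \beta^\intercal)}(\bar\beta; D_{\mathrm{smooth}}^r(\theta^*), u_{\mathrm{DP}}^r) \xrightarrow{\mathrm{P}} J^*$, while \textbf{(A6s)} provides $\sqrt{n}\bigl(-\tfrac{\partial \rho_n}{\partial \beta}(\beta^*; D_{\mathrm{smooth}}^r(\theta^*), u_{\mathrm{DP}}^r)\bigr) \xrightarrow{\mathrm{d}} v \sim F_{\rho,u,\mathrm{DP}}^*$. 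Solving for $\sqrt{n}(s^r(\theta^*) - \beta^*)$ and applying Slutsky's theorem gives the claim $(J^*)^{-1} v$.

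There is no significant obstacle, since the ``s''-variants of the assumptions were designed precisely so that the smoothed criterion replicates, at $\beta^*$, the limiting behavior of the original criterion used in Lemma \ref{lem:beta_dist}. The only subtlety worth checking is that convergence of the Hessian at the mean-value point $\bar\beta$ follows from the pointwise-at-$\beta^*$ statement \textbf{(A7s)}; this passage uses the continuity of the Hessian in $\beta$ guaranteed by \textbf{(A7)} together with $\bar\beta \xrightarrow{\mathrm{P}} \beta^*$, which is exactly the parallel step in the original proof.
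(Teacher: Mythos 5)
Your proposal is correct and takes essentially the same approach as the paper, which states only that Lemma \ref{lem:beta_dist_smooth} follows from the proof of Lemma \ref{lem:beta_dist} by replacing $s$ with $s^r(\theta^*)$; you have simply written out the substitution, invoking \textbf{(A1s, A6s, A7s)} exactly where the original proof used the $\beta^*$-evaluated limits from \textbf{(A1, A6, A7)}, and correctly noting that the smoothness-in-$\beta$ clauses of \textbf{(A6, A7)} (which hold for every $(\beta, D, u_{\mathrm{DP}})$, hence in particular for $D_{\mathrm{smooth}}^r$) supply the first-order condition and the mean-value Taylor step.
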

\begin{proof}[Proof for Theorem \ref{thm:indirect_est_smooth}]
The proofs for parts 1 and 2 follow the proof of parts 1 and 2 in Theorem \ref{thm:indirect_est} while we use both Lemma \ref{lem:beta_dist} and Lemma \ref{lem:beta_dist_smooth}. For part 3, as $\frac{\partial s(\theta)}{\partial \theta}$ does not exist, we use the assumption \textbf{(A5s)} to obtain \eqref{eq:asymp_eqi_ind_s}, and the rest of the proof follows the proof of part 3 in Theorem \ref{thm:indirect_est}.
\end{proof}
\begin{proof}[Proof for Theorem \ref{thm:ada_indirect_est_smooth}]
The proofs for parts 1, 2, and 4  follow the proof of parts 1, 2, and 4 in Theorem \ref{thm:ada_indirect_est}. For part 3, we use the assumption \textbf{(A5s)} to obtain \eqref{eq:asymp_eqi_adi_s}, and the rest of the proof follows the proof of part 3 in Theorem \ref{thm:ada_indirect_est}.
\end{proof}

\subsection{Algorithms}

For easier reference, we summarize the indirect estimator and the adaptive indirect estimator in Algorithm \ref{alg:indest}.
The usage of parametric bootstrap with the indirect estimator is summarized in Algorithm \ref{alg:indCI} and Algorithm \ref{alg:indHT} for building CIs and conducting HTs, respectively.

\begin{algorithm}[t]
\caption{\texttt{indirectEstimator}($\Theta$, $s$, $F_u$, $F_{\mathrm{DP}}$, $G$, $\rho$, $\Omega$, $R$)}
INPUT: parameter space $\Theta \subseteq \mathbb{R}^q$, observed statistic $s\in \mathbb{B} \subseteq \mathbb{R}^p$, distributions $F_u$ and $F_{\mathrm{DP}}$ for random seeds $u$ and $u_{\mathrm{DP}}$, data generating equation $G(\theta,u)$ that $D=G(\theta,u)$ and $u\sim F_u$, objective function $\rho(\beta; D, u_{\mathrm{DP}})$ that $s=\argmin_{\beta\in \mathbb{B}} \rho(\beta; D, u_{\mathrm{DP}})$ and $u_{\mathrm{DP}}\sim F_{\mathrm{DP}}$, (optional) positive definite symmetric matrix $\Omega$, number of indirect estimator samples $R$.
\begin{algorithmic}[1]
  \setlength\itemsep{0em}
  \FOR {$r=1, \ldots, R$}
    \STATE Sample random seeds $u^r\sim F_u$ and $u_{\mathrm{DP}}^r\sim F_{\mathrm{DP}}$
    \STATE For a given $\theta$, define $s^r(\theta) := \argmin_{\beta\in \mathbb{B}} \rho(\beta; D^r(\theta), u_{\mathrm{DP}}^r)$ where $D^r(\theta):=G(\theta, u^r)$.
  \ENDFOR
  \STATE Let $u^{[R]} := (u^1, \ldots, u^R)$, $u_{\mathrm{DP}}^{[R]} := (u_{\mathrm{DP}}^1,\cdots, u_{\mathrm{DP}}^R)$.
  \IF{$\Omega$ is \texttt{NULL}}
    \STATE Let $m^{R}(\theta):=\frac{1}{R}\sum_{r=1}^R s^r(\theta)$ and  $S^{R}(\theta):=\frac{1}{R-1}\sum_{r=1}^R (s^r(\theta) - m^{R}(\theta))(s^r(\theta) - m^{R}(\theta))^{\intercal}$.
    \STATE Solve $\hat{\theta}:=\hat{\theta}_{\mathrm{ADI}}(s, u^{[R]}, u_{\mathrm{DP}}^{[R]}) := \argmin_{\theta \in \Theta}\l\|s - m^{R}(\theta)\r\|_{\l(S^{R}(\theta)\r)^{-1}}.$
  \ELSE
    \STATE Solve $\hat{\theta}:=\hat{\theta}_{\mathrm{IND}}(s, u^{[R]}, u_{\mathrm{DP}}^{[R]}) := \argmin_{\theta \in \Theta} \left\|s-\frac{1}{R} \sum_{r=1}^R s^r(\theta)\right\|_{{\Omega}}.$
  \ENDIF
\end{algorithmic}
OUTPUT: $\hat{\theta}$ 
\label{alg:indest}
\end{algorithm}

\begin{algorithm}[t]
\caption{\texttt{indirectEstimator\_PB\_CI}($\Theta$, $s$, $F_u$, $F_{\mathrm{DP}}$, $G$, $\rho$, $\Omega$, $\hat\tau$, $\hat\Sigma$, $\tau$, $B$, $R$, $\alpha$)}
INPUT: parameter space $\Theta \subseteq \mathbb{R}^q$, observed statistic $s\in \mathbb{B} \subseteq \mathbb{R}^p$, distributions $F_u$ and $F_{\mathrm{DP}}$ for random seeds $u$ and $u_{\mathrm{DP}}$, data generating equation $G(\theta,u)$ that $D=G(\theta,u)$ and $u\sim F_u$, objective function $\rho(\beta; D, u_{\mathrm{DP}})$ that $s=\argmin_{\beta\in \mathbb{B}} \rho(\beta; D, u_{\mathrm{DP}})$ and $u_{\mathrm{DP}}\sim F_{\mathrm{DP}}$, (optional) positive definite symmetric matrix $\Omega$, test statistic $\hat\tau: \mathbb{B} \rightarrow \mathbb{R}^d$,  auxiliary covariance of test statistic $\hat\Sigma: \mathbb{B} \rightarrow \mathbb{R}^{d\times d}$, parameter of interest $\tau:\Theta \rightarrow \mathbb{R}^d$, number of bootstrap samples $B$, number of indirect estimator samples $R$, significance level $\alpha$ (i.e., confidence level $(1-\alpha)$).
\begin{algorithmic}[1]
  \setlength\itemsep{0em}
  \STATE Let $\hat\theta:=$\texttt{indirectEstimator}($\Theta$, $s$, $F_u$, $F_{\mathrm{DP}}$, $G$, $\rho$, $\Omega$, $R$)  
  \FOR {$b=1, \ldots, B$}
    \STATE Sample random seeds $u_b\sim F_u$ and $u_{\mathrm{DP},b}\sim F_{\mathrm{DP}}$
    \STATE Generate parametric bootstrap sample $D_b:=G(\hat\theta,u_b)$
    \STATE Compute $s_b:=\argmin_{\beta\in \mathbb{B}} \rho(\beta; D_b, u_{\mathrm{DP},b})$
  \ENDFOR
  \STATE  Let $\hat\xi_{(j)}$ be the $j$th order statistic of $\l\{\l\|\hat\Sigma(s_b)^{-\frac{1}{2}}\l(\hat\tau(s_b) - \tau(\hat\theta)\r)\r\|_p\r\}_{b=1}^B$
\end{algorithmic}
OUTPUT:  $\hat C_{1-\alpha}(s)=\l\{\hat\tau(s)-\hat\Sigma(s)^{\frac{1}{2}}\xi ~\Big|~ \|\xi\|_p \leq \hat\xi_{(\lfloor (B+1)(1-\alpha)\rfloor)}\r\}$
\label{alg:indCI}
\end{algorithm}

\begin{algorithm}[t]
\caption{\texttt{indirectEstimator\_PB\_HT}($\Theta$, $\Theta_0$, $s$, $F_u$, $F_{\mathrm{DP}}$, $G$, $\rho$, $\Omega$, $\hat\tau$, $\hat\Sigma$, $\tau$, $B$, $R$, $\alpha$)}
INPUT: parameter space $\Theta \subseteq \mathbb{R}^q$, parameter space under null hypothesis $\Theta_0 \subseteq \Theta \subseteq \mathbb{R}^q$, observed statistic $s\in \mathbb{B} \subseteq \mathbb{R}^p$, distributions $F_u$ and $F_{\mathrm{DP}}$ for random seeds $u$ and $u_{\mathrm{DP}}$, data generating equation $G(\theta,u)$ that $D=G(\theta,u)$ and $u\sim F_u$, objective function $\rho(\beta; D, u_{\mathrm{DP}})$ that $s=\argmin_{\beta\in \mathbb{B}} \rho(\beta; D, u_{\mathrm{DP}})$ and $u_{\mathrm{DP}}\sim F_{\mathrm{DP}}$, (optional) positive definite symmetric matrix $\Omega$, test statistic $\hat\tau: \mathbb{B} \rightarrow \mathbb{R}^d$,  auxiliary covariance of test statistic  $\hat\Sigma: \mathbb{B} \rightarrow \mathbb{R}^{d\times d}$, parameter of interest $\tau:\Theta \rightarrow \mathbb{R}^d$, number of bootstrap samples $B$, number of indirect estimator samples $R$, significance level $\alpha$.
\begin{algorithmic}[1]
  \setlength\itemsep{0em}
  \STATE Compute studentized test statistic $T:=\inf_{\theta^* \in \Theta_0}\l\|\hat\Sigma(s)^{-\frac{1}{2}}\l(\hat\tau(s) - \tau(\theta^*)\r)\r\|_p$.
  \STATE Let $\hat\theta:=$\texttt{indirectEstimator}($\Theta$, $s$, $F_u$, $F_{\mathrm{DP}}$, $G$, $\rho$, $\Omega$, $R$)  
  \FOR {$b=1, \ldots, B$}
    \STATE Sample random seeds $u_b\sim F_u$ and $u_{\mathrm{DP},b}\sim F_{\mathrm{DP}}$
    \STATE Generate parametric bootstrap sample $D_b:=G(\hat\theta,u_b)$
    \STATE Compute $s_b:=\argmin_{\beta\in \mathbb{B}} \rho(\beta; D_b, u_{\mathrm{DP},b})$
    \STATE  Compute studentized test statistic $T_b:=\l\|\hat\Sigma(s_b)^{-\frac{1}{2}}\l(\hat\tau(s_b) - \tau(\hat\theta)\r)\r\|_p$
  \ENDFOR
  \STATE Compute the $p$-value where $p := \frac{1}{B+1}(1+\sum_{b=1}^B\mathbb{1}_{T_b\geq T})$
\end{algorithmic}
OUTPUT: $\mathbb{1}_{p\leq \alpha}$
\label{alg:indHT}
\end{algorithm}
 
\subsection{Hyperparameter tuning for location-scale normal}
We show the robustness of the adaptive indirect estimator by examining different settings on the number of generated samples $R$, the GDP parameter $\ep$, and the clamping parameter $U$, where the results are shown in Figures \ref{fig:CI_ADI_dist_R}, \ref{fig:CI_ADI_dist_gdp}, and \ref{fig:CI_ADI_dist_clamp}, respectively. In these figures, we denote the median of each distribution by the vertical line and denote $\mu^*$ and $\sigma^*$ by $*$. Note that we set the optimization region to $\Theta:=[-2,10]\times[10^{-6}, 10]$. In most cases, the medians match the true parameters, and the distributions are symmetric. In addition,
\begin{itemize}
    \item The number of generated samples $R$ does not significantly affect the sampling distribution of our estimates in Figure \ref{fig:CI_ADI_dist_R}. Therefore, we use $R=50$ for the experiments in this paper. 
    \item The sampling distribution is flatter when the GDP parameter $\ep$ is smaller, i.e., under stronger privacy guarantees, in Figure \ref{fig:CI_ADI_dist_gdp}. When $\ep=0.1$, the adaptive indirect estimator sometimes takes values at the boundary of $\Theta$, which may be due to the heavy tail of the sampling distribution, but the medians of the distribution are still around $\mu^*$ and $\sigma^*$, respectively.
    \item We set the clamping region to $[0,U]$ and we try $U=0.1,0.3,1,3,5$. In Figure \ref{fig:CI_ADI_dist_clamp}, $U=3$ gives the most concentrated sampling distribution. Our estimator does not perform well under $U=0.1$ since after clamping the true distribution $N(1,1)$ to $[0,0.1]$, most of the probability mass becomes the point mass at $0$ and $0.1$, and it is difficult to identify the distribution parameter before clamping.
\end{itemize}
Finally, we compare the adaptive indirect estimator to the indirect estimator with $\Omega_n=I$. We replace the adaptive indirect estimator in Figure \ref{fig:CI_ADI_dist_clamp} with the indirect estimator, and the results are shown in Figure \ref{fig:CI_IND_dist_clamp}. It turns out that the adaptive indirect estimator is the same as the indirect estimator when the clamping bound is $U=0.5,1,3,5$ if we ignore the optimization error, and the adaptive indirect estimator seems to be better than the indirect estimator under $U=0.1$.

\begin{figure}[t]
    \centering
    \includegraphics[width=0.9\linewidth]{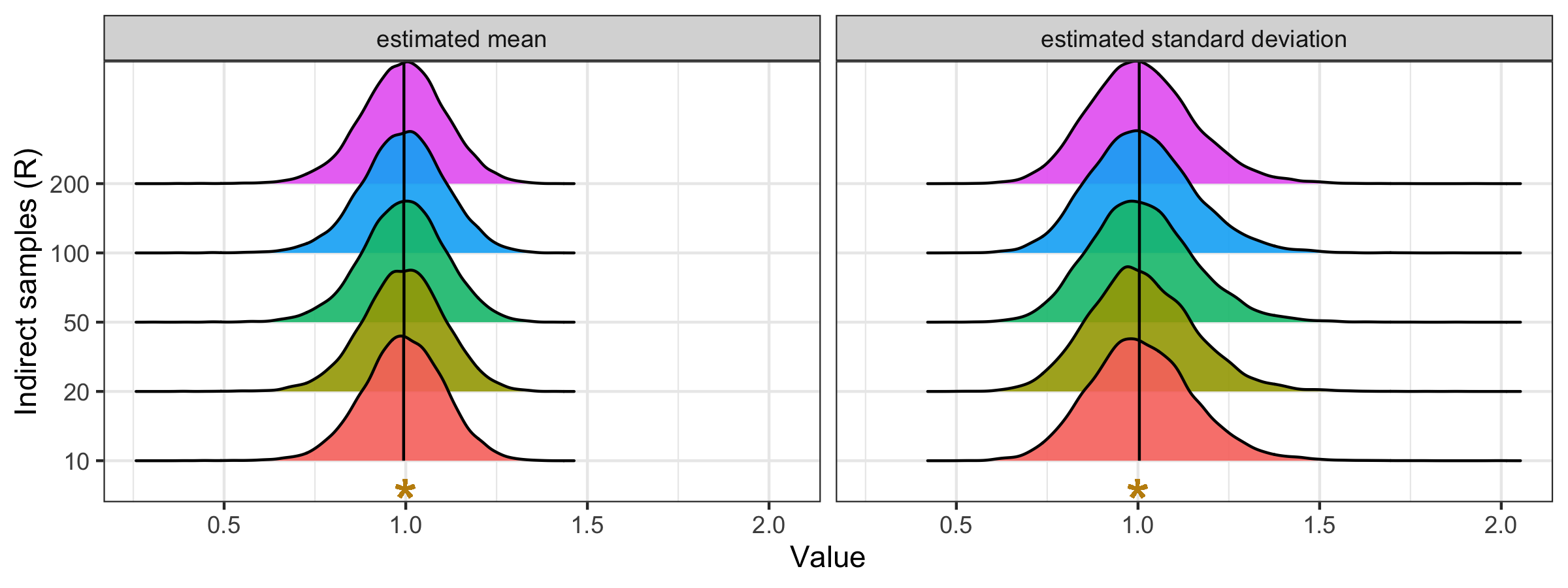}
    \caption{Comparison of the sampling distribution of the adaptive indirect estimates $\hat\theta_{\mathrm{ADI}}$ under different settings of the number of generated samples $R=10,20,50,100,200$ in the normal distribution setting in Section \ref{sec:pb:CI}.}\label{fig:CI_ADI_dist_R}
\end{figure}

\begin{figure}[t]
    \centering
    \includegraphics[width=0.9\linewidth]{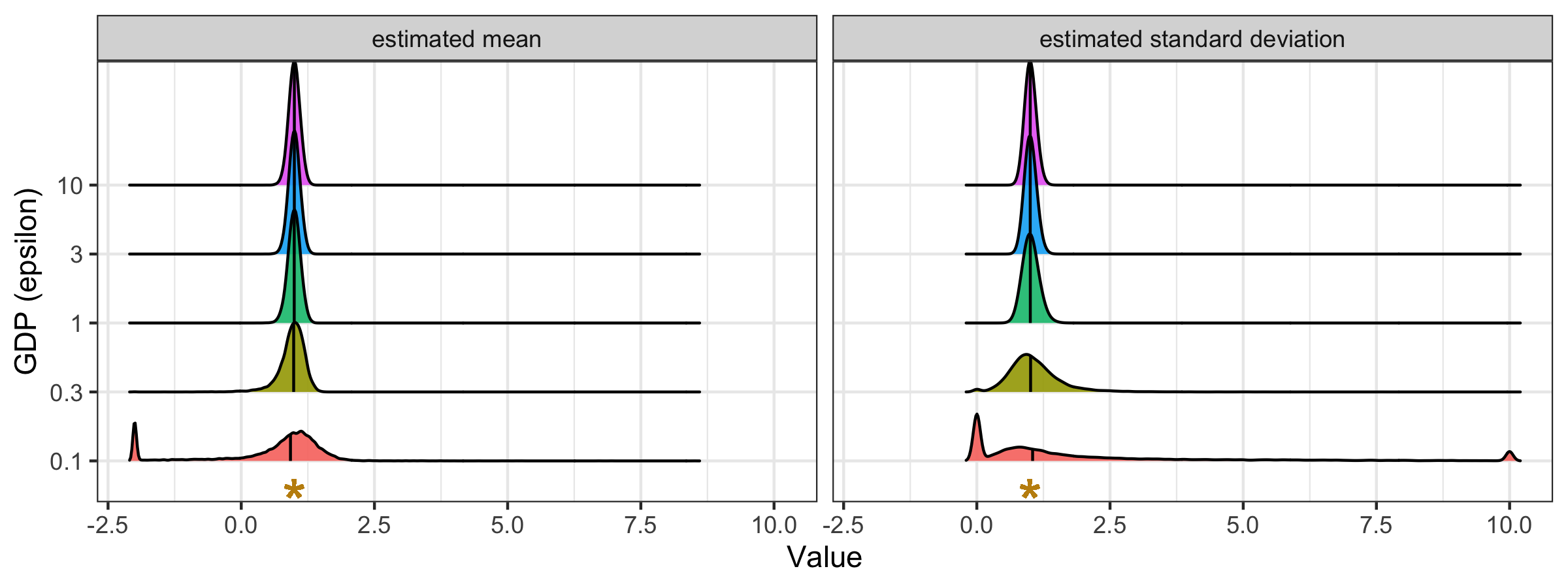}
    \caption{Comparison of the sampling distribution of the adaptive indirect estimates $\hat\theta_{\mathrm{ADI}}$ under different settings of the GDP parameter $\ep =0.1,0.3,1,3,10$ in the normal distribution setting in Section \ref{sec:pb:CI}.}\label{fig:CI_ADI_dist_gdp}
\end{figure}

\begin{figure}[t]
    \begin{minipage}{\textwidth}
        \centering
        \includegraphics[width=0.9\linewidth]{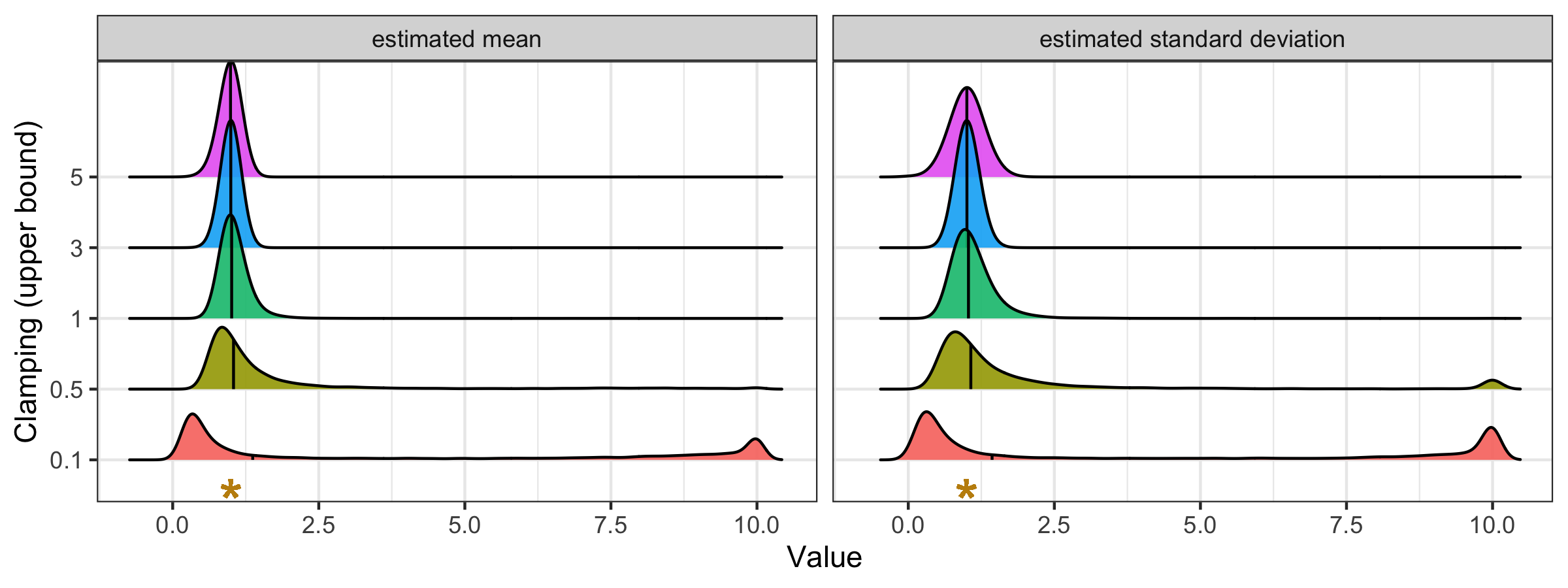}
        \caption{Comparison of the sampling distribution of the adaptive indirect estimates $\hat\theta_{\mathrm{ADI}}$ under different settings of the clamping parameter $U=0.1,0.5,1,3,5$ in the normal distribution setting in Section \ref{sec:pb:CI}.}\label{fig:CI_ADI_dist_clamp}
    \end{minipage}
    
    \vspace{20pt}
    
    \begin{minipage}{\textwidth}
        \centering
        \includegraphics[width=0.9\linewidth]{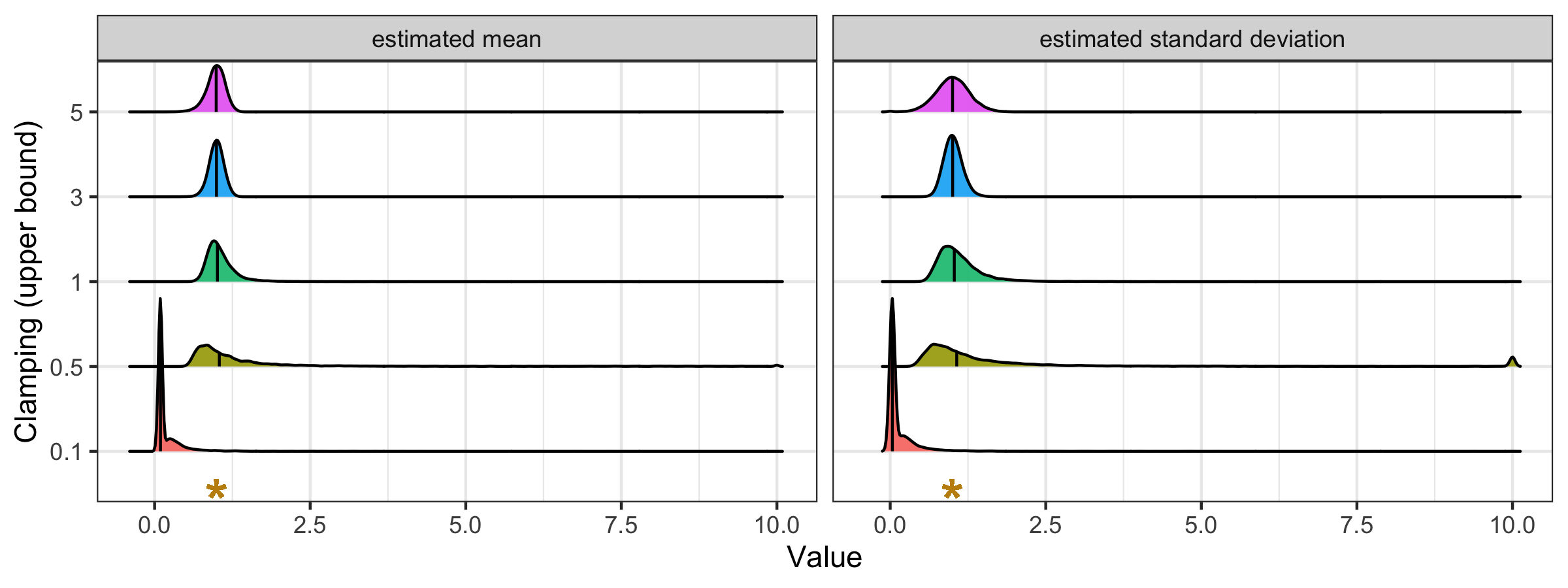}
        \caption{Comparison of the sampling distribution of the indirect estimates $\hat\theta_{\mathrm{IND}}$ under different settings of the clamping parameter $U=0.1,0.5,1,3,5$ in the normal distribution setting in Section \ref{sec:pb:CI}.}\label{fig:CI_IND_dist_clamp}
    \end{minipage}
\end{figure}
\begin{figure}[ht!]
    \centering
    \includegraphics[width=0.9\linewidth]{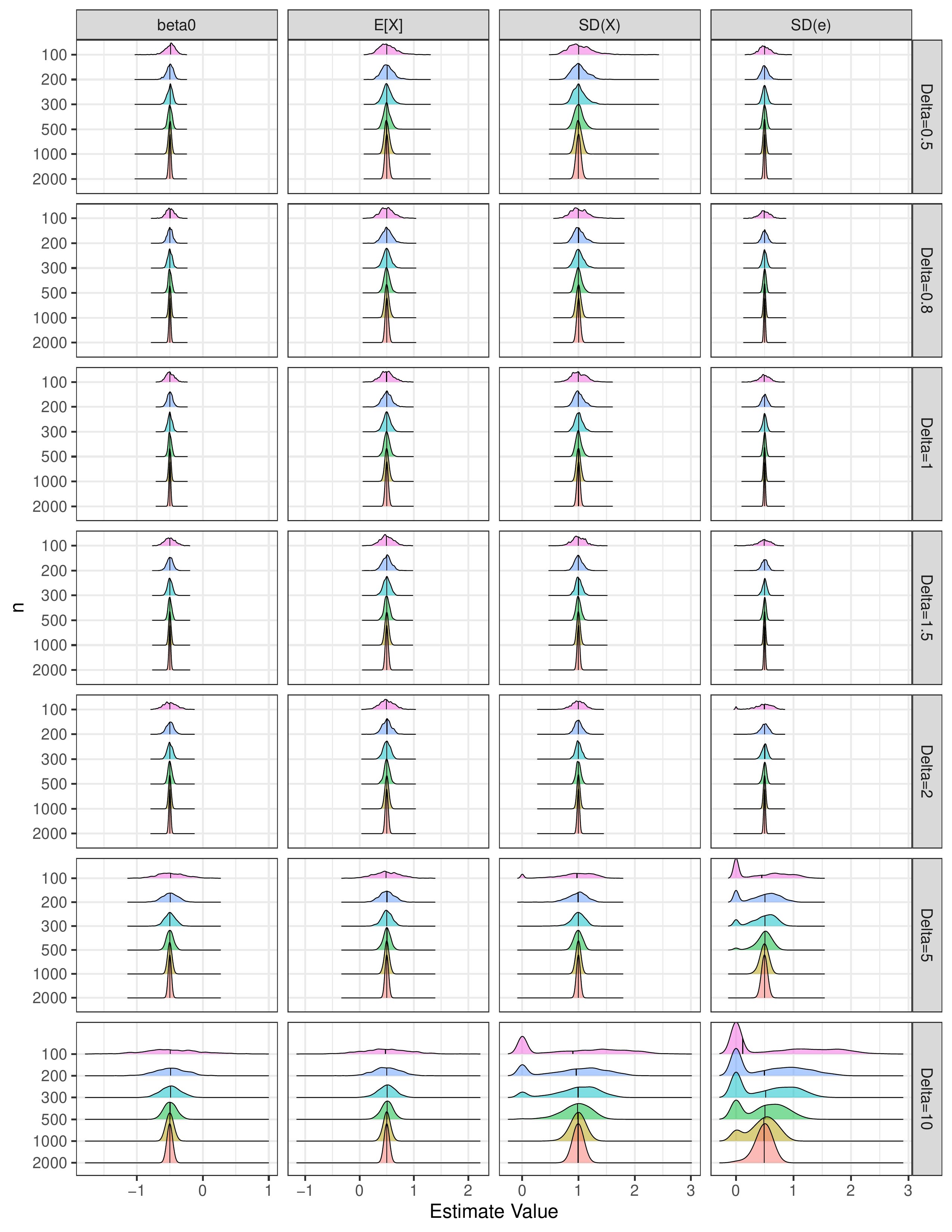}
    \caption{Comparison of the sampling distribution of the adaptive indirect estimates under different settings of the clamping parameter $\Delta$ and sample size $n$ in the hypothesis testing setting in Section \ref{sec:pb:HT}.}\label{fig:HT_ADI_dist}
\end{figure}

\subsection{Details for simple linear regression hypothesis testing}\label{app:linear_regresion_HT}

We have seen the improvement of using our adaptive indirect estimator  compared to the naive estimator in Figure \ref{fig:LR}. In Figure \ref{fig:HT_ADI_dist}, we further show the sampling distribution of our ADI estimator. The vertical line denotes the median of each distribution, and we can see that the medians match the true values and the distributions are symmetric in most cases, indicating that our method is robust across various clamping settings. 

In Remark \ref{rmk:approx_pivot}, we explain how to compute the approximate pivot for HTs in Section \ref{sec:pb:HT}.

\begin{remark}\label{rmk:approx_pivot}
    
    We denote $\hat{\theta}_{\mathrm{ADI}}$ as $\hat\theta$.
    From Theorem \ref{thm:ada_indirect_est}, we know that the asymptotic variance of $\sqrt{n}(\hat{\theta} - \theta^*)$ is $\l((B^*)^{\intercal} \Sigma(\theta^*)^{-1} B^*\r)^{-1}$. Therefore, based on assumption \textbf{(A8)}, we estimate $\Sigma(\theta^*)$ by the sample variance of $\l\{\sqrt{n}\l(s^r(\hat{\theta})\r)\r\}_{r=1}^R$ where $s^r(\hat{\theta})$ is calculated in the same way as in the indirect estimator. Based on assumptions \textbf{ (A3, A5)}, we evaluate $b(\theta^*)$ by $\mathbb{E}(s(\hat{\theta}))$\footnote{For this setting, we evaluate the expectation by numerical integral. In general, the expectation can also be evaluated by the Monte Carlo approximation.}, and we estimate $B^*:=\frac{\partial b(\theta^*)}{\partial \theta}$ by the finite difference $\frac{\mathbb{E}(s(\hat{\theta} + \delta \theta_0^i)) - \mathbb{E}(s(\hat{\theta}))}{\delta}$ where $\theta_0^i\in\mathbb{R}^q$ has all entries being 0 except for the $i$th entry being $1$, $i=1,\ldots,q$, and $\delta:=10^{-6}$. 
    We denote the two estimates of $\Sigma(\theta^*)$ and $B^*$ by 
    $\hat\Sigma(\hat\theta)$ and $\hat B(\hat\theta)$, 
    respectively. The approximate pivot is 
    $$T:=n(\hat{\theta} - \theta^*)^\intercal \l((\hat B(\hat\theta))^{\intercal} \hat\Sigma^{-1}(\hat\theta) \hat B(\hat\theta)\r)^{-1} (\hat{\theta} - \theta^*).$$ 
    As our hypothesis test is on the $\beta_1^*$, we let $\hat{\theta}^{(1)}$ be the first entry of $\hat{\theta}$ corresponding to $\beta_1^*$, and $\hat\sigma_{\beta_1}^2(\hat\theta)$ be the first element in the covariance matrix estimation $\l((\hat B(\hat\theta))^{\intercal} \hat\Sigma^{-1}(\hat\theta) \hat B(\hat\theta)\r)^{-1}$ corresponding to the variance of $\sqrt{n}\l(\hat{\theta}^{(1)}\r)$. Then, it is equivalent to use the following setting in the $\l|\frac{\hat\tau(s) - \tau(\theta^*)}{\hat\sigma(s)}\r|$ and $\l|\frac{\hat\tau(s_b) - \tau(\hat\theta)}{\hat\sigma(s_b)}\r|$ in Algorithm \ref{alg:indHT}, 
    \[
    \hat\tau(s):=\hat{\theta}^{(1)}(s),\quad \tau(\theta^*):=(\theta^*)^{(1)}=0,\quad \hat\sigma(s):=\frac{\hat\sigma_{\beta_1}(\hat{\theta}(s))}{\sqrt{n}},
    \]
    \[
    \hat\tau(s_b):=\hat{\theta}^{(1)}(s_b),\quad \tau(\hat{\theta}):=(\hat{\theta})^{(1)},\quad \hat\sigma(s_b):=\frac{\hat\sigma_{\beta_1}(\hat{\theta}(s_b))}{\sqrt{n}}.
    \]
\end{remark}

\begin{remark}\label{rmk:HT_estimate}
    When the alternative hypothesis is true, i.e., $s\sim \mathbb{P}_{\theta_1}$ where $\theta_1\in\Theta_1$, any estimator restricted to the null hypothesis, $\hat\theta_0:\Omega \rightarrow \Theta_0$, is not a consistent estimator of $\theta_1$. If we use $\hat\theta_0$ in PB, since the PB estimator $\hat\tau(s_b)$ is based on $s_b\sim \mathbb{P}_{\hat\theta_0(s)}(\cdot | s)$, the consistency of $\hat\tau(s_b)$ may not hold. Therefore, we may not have $\pi_n(\theta_1) \rightarrow 1$, and the PB HTs based on $\hat\theta_0$ are asymptotically of level $\alpha$ but not necessarily asymptotically consistent.
\end{remark}

\subsection{Logistic regression mechanism details}\label{sec:logisticApp}
The DP mechanism used in Section \ref{s:logistic} is the same as in \citet{awan2025simulation}, which uses techniques from \citet{awan2021structure}, including the sensitivity space, $K$-norm mechanisms, and objective perturbation. To aid the reader, we include the essentials in this section.

\begin{definition}[Sensitivity space: \citealp{awan2021structure}]\label{AdjacentOutput}
Let $T: \mscr X^n\rightarrow \RR^m$ be any function. The \emph{sensitivity space} of $T$ is 
\[S_T = \l\{ u \in \RR^m \middle| \begin{array}{c}\exists X,X'\in \mscr X^n \text{ s.t } d(X,X')=1\\ \text{and }u=T(X)-T(X')\end{array}\r\}.\]
\end{definition}

A set $K\subset \RR^m$ is a \emph{norm ball} if $K$ is 
\begin{inparaenum}[1)]\item convex, \item bounded,  \item absorbing: $\forall u \in \RR^m$, $\exists c>0$ such that $u\in cK$, and \item
 symmetric about zero: if $u\in K$, then $-u\in K$. \end{inparaenum} 
If $K\subset \RR^m$ is a norm ball, then $\lVert u \rVert_K = \inf \{c\in \RR^{\geq 0}\mid u\in cK\}$ is a norm on $\RR^m$. 

\begin{algorithm}
\caption{Sampling  $\propto \exp(-c \lVert V\rVert_\infty)$   \citep{steinke2016between}}
\scriptsize
INPUT: $c$, and dimension $m$
\begin{algorithmic}[1]
\STATE Set $U_j \iid U(-1,1)$ for $j=1,\ldots, m$
\STATE Draw $r \sim \mathrm{Gamma}(\al = m+1,\beta = c)$
\STATE Set $V = r\cdot (U_1,\ldots, U_m)^\top$
\end{algorithmic}
OUTPUT: $V$
\label{alg:SampleLinfty}
\end{algorithm}

The sensitivity of a statistic $T$ is the largest amount that $T$ changes when one entry of $T$ is modified. Geometrically, the sensitivity of $T$ is the largest radius of $S_T$ measured by the norm of interest. For a norm ball $K \subset \RR^m$, the $K$-norm sensitivity of $T$ is 
\[\Delta_K(T) = \sup_{d(X,X')=1} \lVert T(X) - T(X')\rVert_K = \sup_{u \in S_T} \lVert u \rVert_K .\]

\begin{algorithm}
\caption{Rejection Sampler for  $K$-Norm Mechanism \citep{awan2021structure}}
\label{RejectionAlgorithm}
\scriptsize
INPUT: $\ep$, statistic $T(X)$, $\ell_\infty$ sensitivity $\Delta_\infty(T)$, $K$-norm sensitivity $\Delta_K(T)$
\begin{algorithmic}[1]
\STATE Set $m = \mathrm{length} (T(X))$.
\STATE Draw $r \sim \mathrm{Gamma}(\al = m+1, \beta = \ep/\Delta_K(T))$
\STATE Draw $U_j \iid \mathrm{Uniform}(-\Delta_\infty(T), \Delta_\infty(T))$ for $j=1,\ldots, m$
\STATE Set $U = (U_1,\ldots, U_m)^\top$
\STATE If $U\in K$, set $N_K=U$, else go to 3)
\STATE Release $T(X) + r\cdot N_K$.
\end{algorithmic}
\label{alg:sampleK}
\end{algorithm}

\begin{algorithm}
\caption{$\ell_\infty$ Objective Perturbation \citep{awan2021structure}}
\scriptsize
INPUT: $X\in \mscr{ X}^n$, $\ep>0$, a convex set $\Theta \subset \RR^m$, a convex function $r: \Theta\rightarrow \RR$, a convex loss  $\hat{\mscr  L}(\ta; X) =\frac1n \sum_{i=1}^n \ell(\ta;x_i)$ defined on $\Theta$ such that $\nabla^2 \ell(\ta;x)$ is continuous in $\ta$ and $x$, $\De>0$ such that $\sup_{x,x'\in \mscr X} \sup_{\ta\in \Ta}\lVert \nabla \ell(\ta;x) - \nabla\ell(\ta;x')\rVert_\infty\leq \De$,  $\la>0$  is an upper bound on the eigenvalues of $\nabla^2\ell(\ta;x)$ for all $\ta\in \Theta$ and $x\in \mscr X$, and $q\in (0,1)$.
\begin{algorithmic}[1]
  \setlength\itemsep{0em}
  \STATE Set $\ga = \frac{\la}{\exp({\ep(1-q)})-1}$
\STATE Draw $V\in \RR^m$ from the density $f(V;\ep, \De)\propto \exp(-\frac{\ep q}{\De}\lVert V\rVert_\infty)$ using Algorithm \ref{alg:SampleLinfty}
\STATE Compute $\ta_{DP} = \arg\min_{\ta\in \Theta} \hat{\mscr L}(\ta;X) +\frac1n r(\theta)+ \frac{\ga}{2n} \ta^\top \ta + \frac{V^\top \ta}{n}$
\end{algorithmic}
OUTPUT: $\ta_{DP}$
\label{alg:ExtendedObjPert}
\end{algorithm}

The DP mechanism has two parts, each resulting in a 2-dimensional output. The first part is used to infer about the parameters $a$ and $b$, which consists of noisy versions of the first two moments: $(\sum_{i=1}^n z_i,\sum_{i=1}^n z_i^2)+N_K$, where $N_K$ is from a $K$-norm mechanism discussed in the following paragraphs. The second part uses Algorithm \ref{alg:ExtendedObjPert} to approximate $\beta_0$ and $\beta_1$.  \citet[Section 4.1]{awan2021structure} derived that $\Delta_\infty=2$ and $\lambda=m/4$, where $m$ is the number of regression coefficients (2 in our case); in \citet[Section 4.2]{awan2021structure}, a numerical experiment suggested that $q=0.85$ optimized the performance of the mechanism. Algorithm \ref{alg:ExtendedObjPert} satisfies $\ep$-DP according to \citet[Theorem 4.1]{awan2021structure}. 

\citet{awan2021structure} showed that the optimal $K$-norm mechanism for a statistic $T(X)$ uses the norm ball, which is the convex hull of the sensitivity space. For the statistic $T(X) = (\sum_{i=1}^n z_i,\sum_{i=1}^n z_i^2)$, where $z_i \in [0,1]$, sensitivity space is 
\begin{equation}\label{eq:sensitivity}
    S_T = \left\{(u_1,u_2) \middle| \begin{array}{c}
u_1^2\leq u_2\leq 1-(u_1-1)^2\\
\text{or }\quad (1+u_1)^2-1\leq u_2\leq u_1^2
\end{array}\right\},
\end{equation}
and the convex hull is 
\begin{equation}\label{eq:hull}
\mathrm{Hull}(S_T) = \left\{ (u_1,u_2) \middle| 
\begin{array}{cc}
(u_1+1)^2-1\leq u_2\leq -u_1^2,& u_1\leq -1/2\\
u_1-1/4\leq u_2\leq u_1+1/4,& -1/2<u_1\leq 1/2\\
u_1^2\leq u_2\leq 1-(u_1-1)^2,&1/2\leq u_1
\end{array}\right\}.\end{equation}
We use $K=\mathrm{Hull}(S_T)$ as the norm ball in Algorithm \ref{alg:sampleK}, which satisfies $\ep$-DP \citep{hardt2010geometry}. 

In the simulation generating the results in Figure \ref{fig:logreg_comparison} of Section \ref{s:logistic}, we let $a^*=b^*=0.5$, $\beta_0^*=0.5$, $\beta_1^*=2$, $R=200$, $\alpha=0.05$, $n=100,$ 200, 500, 1000, 2000, and $\ep=0.1$, 0.3, 1, 3, 10 in $\ep$-DP. For our method, the private statistics we release and use are $s=(\tilde{\beta}_0, \tilde{\beta}_1, \tilde T(X))$ where $(\tilde{\beta}_0$, $\tilde{\beta}_1)$ are the output of Algorithm \ref{alg:ExtendedObjPert} with $0.9\ep$-DP, $q=0.85$, $\lambda=1/2$, and $\tilde T(X)$ is the output of Algorithm \ref{alg:sampleK},  with private budget $0.1\ep$. When using Algorithm \ref{alg:SampleLinfty} and \ref{alg:sampleK}, we let $m=2$ as it is the dimension of the output. 
We compute the confidence interval within the range $[-10,10]$.

Note that in Figure \ref{fig:logreg_comparison}, despite the width of the confidence intervals produced by the ADI appearing to be significantly larger than those produced by the naive estimator, a quick comparison to Gaussian quantiles shows that the increase in width is an amount proportional to the improvement in coverage. Taking the example of epsilon = 10 and n = 100 in Figure 6, we see that the naive method yields an average interval width of 1.12 with 74\% empirical coverage, corresponding to a standard normal quantile of approximately 1.13. Our method, in contrast, achieves 89\% coverage with an average width of 1.59, which corresponds to a quantile of approximately 1.60. Taking the ratio of these quantiles ($1.60 / 1.13 \approx 1.42$) reflects the relative increase in width required to achieve the higher coverage. The ratio of our method’s width to the naive method’s width ($1.59 / 1.12 \approx 1.42$) aligns almost exactly with this theoretical scaling, indicating that the increase in width is proportional to the coverage gain. In short, the larger widths of confidence intervals produced by our method compared to the naive reflect the additional uncertainty needed to achieve valid coverage.

\subsection{Naive Bayes, Log-Linear Model Simulation}\label{sec:naive}
\begin{table}[t]
\centering
\caption{True Distribution Parameters for the Naive Bayes Classifier}
\label{tab:nb_params}
\begin{tabular}{@{}lccc@{}}
\toprule
\textbf{Parameter} & \textbf{Value ($j=0$)} & \textbf{Value ($j=1$)} & \textbf{Marginal $P(Y=i)$} \\ \midrule
$P(X_1 = j \mid Y=0)$ & 0.7470 & 0.2530 & 0.3890 \\
$P(X_2 = j \mid Y=0)$ & 0.0755 & 0.9245 & \\ \midrule
$P(X_1 = j \mid Y=1)$ & 0.4000 & 0.6000 & 0.6110 \\
$P(X_2 = j \mid Y=1)$ & 0.3460 & 0.6540 & \\ \bottomrule
\end{tabular}
\end{table}

\begin{figure}[]
    \centering
    \includegraphics[width=.9\linewidth, trim={0 0 0 0},clip]{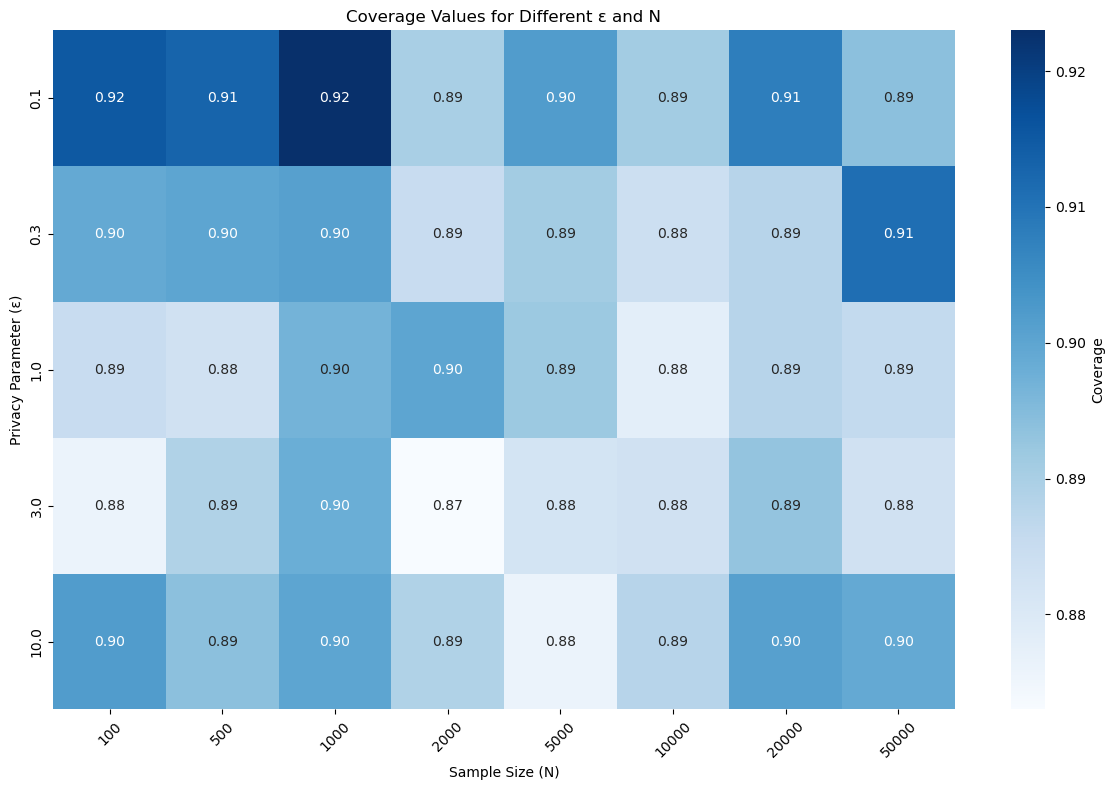}
    \includegraphics[width=.9\linewidth, trim={0 0 0 0},clip]{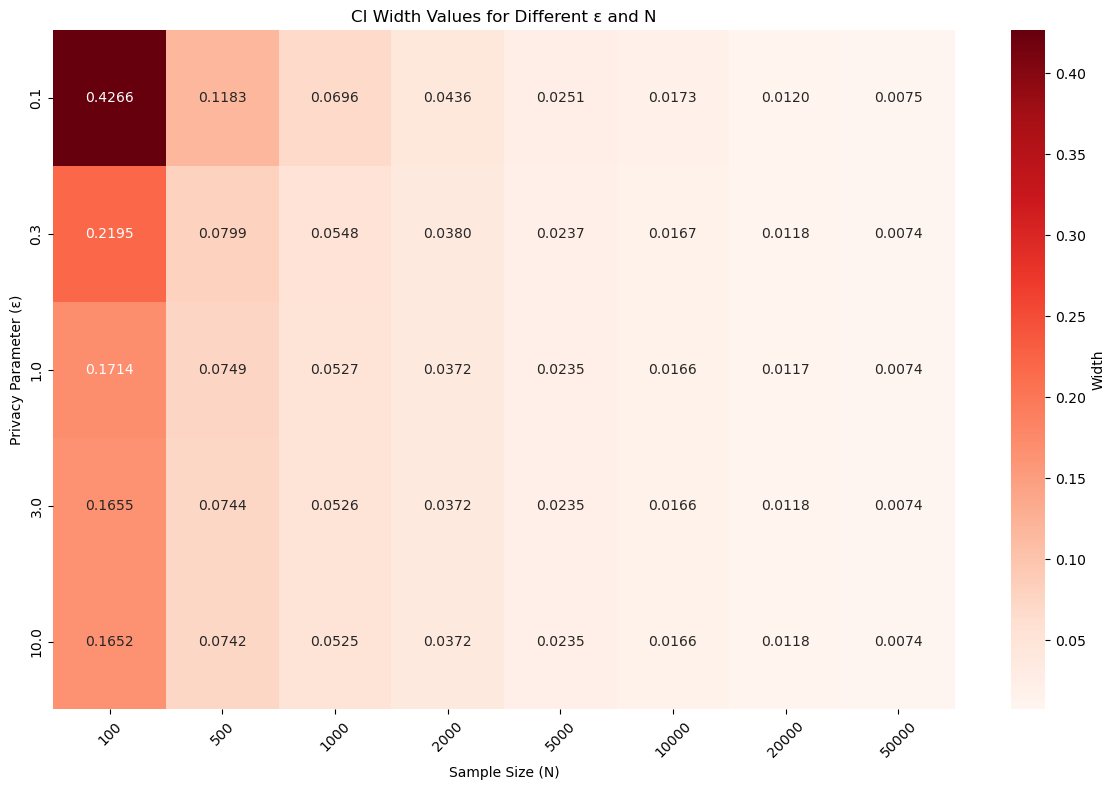}
    \caption{ Comparison of the coverage and width of confidence intervals for the true $P(Y=1)$ at 90\% confidence using our proposed estimator for the Naive Bayes classifier}\label{fig:logreg_comparison}
    \label{fig:NB}
\end{figure}

We include a simulation study based on a log-linear Naive Bayes classifier to demonstrate that our proposed estimator remains valid under model mis-specification and when surrogate models are used to approximate discrete data-generating processes, illustrating the theory developed in Section \ref{sec:pb:non-smooth}.

Similar to \citep{karwa2015private,ju2022data}, let $x = (x_1, \ldots, x_K)$ denote the categorical feature vector where each $x_k \in \{1, 2, \ldots, J_k\}$, and let $y \in \{1, 2, \ldots, I\}$ be the response class. Each input–output pair $(x, y)$ forms one record in the confidential database, with $N$ i.i.d. copies of $(x, y)$. The Naive Bayes model assumes conditional independence among features given the class label,
\[P(x \mid y) = \prod_{k=1}^{K} P(x_k \mid y),\]
with parameters $p_{ij}^k = P(x_k = j \mid y = i)$ and $p_i = P(y = i)$. The sufficient statistics of this model are the cell counts $n_{ij}^k = \#\{y = i, x_k = j\}$, which record feature–class co-occurrences.

We treat these counts as the confidential query $s$ and apply the  mechanism by adding noise $G_{ijk} \stackrel{\text{iid}}{\sim} N(0, \sqrt{2K}/\ep)$, yielding privatized noisy counts $m_{ij}^k = n_{ij}^k + G_{ijk}$ that satisfy $\ep$-GDP. Note that while \citet{karwa2015private,ju2022data} used Laplace noise to satisfy $\ep$-DP, our example uses Gaussian noise, which achieves a superior scaling in terms of $K$ \citep{dong2022gaussian}. Our goal is to construct valid confidence intervals for the parameters $p_{ij}^k$ based solely on these noisy sufficient statistics.  

While this privacy mechanism does not involve clamping, a plug-in version of the non-private estimators would involve a ratio of noisy counts; since there is noise added in the denominator, these estimators are not unbiased. Since the model is discrete, we require a surrogate model as justified by Section~\ref{sec:pb:non-smooth}: when simulating the data, we approximate the multinomial distribution of the sufficient statistics with a multivariate normal distribution. 

We use 400 parametric bootstrap samples and 40 synthetic samples for indirect inference. We record the true values of the parameters used in Table \ref{tab:nb_params} and our results are in Figure \ref{fig:NB}.  Across all settings, our proposed adaptive indirect estimator achieves coverage very close to the nominal level of  90\%, with low average width. Thus, the Naive Bayes study illustrates that our estimator can operate effectively in non-smooth or discrete settings through the use of continuous surrogate models, confirming the theoretical robustness discussed earlier.



\bibliography{all-biblatex}

\end{document}